\newcommand{\comment}[1]{}
\newtheorem{theorem}{Theorem}
\newtheorem{fact}[theorem]{Fact}
\newtheorem{claim}[theorem]{Claim}
\newcommand{\claimproof}[2]%
{\noindent{\em Proof of Claim \ref{#1}.}
#2\hspace*{\fill}$\Box$~~~~~\vspace{5mm} }
\newtheorem{definition}[theorem]{Definition}
\newtheorem{lemma}[theorem]{Lemma}
\newtheorem{rem}[theorem]{Remark}
\renewcommand{\mod}{\text{mod~}}
\newcommand{\poly}{\text{poly}}
\renewcommand{\exp}{\text{exp}}
\newcommand{\sps}{\Sigma\Pi\Sigma}
\newcommand{\FF}{\mathbb{F}}
\newcommand{\calT}{\mathcal{T}}
\def\vec#1{\overline{#1}}
\def\({\left(}
\def\){\right)}
\def\<{\langle}
\def\>{\rangle}
\def\ge{\geqslant}
\begin{document}

\title{\sc An Almost Optimal Rank Bound for Depth-3 Identities} 

\author{
Nitin Saxena
\footnote{Hausdorff Center for Mathematics, Bonn 53115, Germany. 
E-mail: {\tt ns@hcm.uni-bonn.de}
}
\and
C. Seshadhri
\footnote{IBM Almaden Research Center, San Jose - 95123, USA.
E-mail: {\tt csesha@us.ibm.com}
}
}
\date{}
\maketitle
\begin{abstract}
We show that the rank of a depth-$3$ circuit (over any field) that is simple, 
minimal and zero is at most $O(k^3\log d)$. The previous best rank bound known was 
$2^{O(k^2)}(\log d)^{k-2}$ by Dvir and Shpilka (STOC 2005). 
This almost resolves the rank question first posed by Dvir and Shpilka
(as we also provide a simple and minimal identity of rank $\Omega(k\log d)$). 

Our rank bound significantly improves (dependence on $k$ exponentially reduced) 
the best known deterministic black-box identity tests for depth-$3$ circuits 
by Karnin and Shpilka (CCC 2008). Our techniques also shed light 
on the factorization pattern of nonzero depth-$3$ circuits, most
strikingly: the rank of linear factors of a simple, minimal and nonzero depth-$3$ circuit 
(over any field) is at most $O(k^3\log d)$.

The novel feature of this work is a new notion of maps between sets
of linear forms, called \emph{ideal 
matchings}, used to study depth-$3$ circuits. We prove interesting structural results about
depth-$3$ identities using these techniques. We believe that these can lead
to the goal of a deterministic polynomial time identity test for these circuits.

\end{abstract}

\section{Introduction}

Polynomial identity testing (PIT) ranks as one of the most important open problems 
in the intersection of algebra and computer science. 
We are provided an arithmetic circuit that computes a polynomial $p(x_1,x_2,\cdots,x_n)$
over a field $\mathbb{F}$, and we wish to test if $p$ is identically zero. 
In the black-box setting, the circuit
is provided as a black-box and we are only allowed to evaluate the polynomial
$p$ at various domain points. 
The main goal is to devise a deterministic polynomial time
algorithm for PIT.
Kabanets and Impagliazzo~\cite{KI04} and Agrawal~\cite{Ag05}
have shown connections between deterministic algorithms for identity testing and circuit lower bounds,
emphasizing the importance of this problem.

The first randomized polynomial time PIT algorithm, 
which was a black-box algorithm,
was given (independently) by Schwartz~\cite{sch80} and Zippel~\cite{Z79}.
Randomized algorithms that use less
randomness were given by Chen \& Kao~\cite{CK00}, Lewin \& Vadhan~\cite{LV98},
and Agrawal \& Biswas~\cite{AB03}.
Klivans and Spielman~\cite{ks01} observed that 
even for depth-$3$ circuits for bounded top fanin,
deterministic identity testing was open. Progress
towards this was first made by Dvir and Shpilka~\cite{DS06},
who gave a quasi-polynomial time algorithm, although with
a doubly-exponential dependence on the top fanin.
The problem was resolved by a polynomial time algorithm given by
Kayal and Saxena~\cite{ks07}, with a running time exponential
in the top fanin. For a special case of 
depth-$4$ circuits, Saxena~\cite{s08} has designed
a deterministic polynomial time algorithm for PIT.
Why is progress restricted to small depth circuits?
Agrawal and Vinay~\cite{AV08} recently showed that
an efficient black-box identity test for depth-$4$
circuits will actually give a quasi-polynomial black-box test
for circuits of \emph{all depths}.

For deterministic black-box testing, the first results
were given by Karnin and Shpilka~\cite{KSh08}. 
Based on results in~\cite{DS06},
they gave an algorithm for depth-$3$ circuits having a quasi-polynomial
running time (with a doubly-exponential dependence
on the top fanin)\footnote{\cite{KSh08} had a better running time for read-$k$ 
depth-$3$ circuits, where each variable appears at most $k$ times. But even there 
the dependence on $k$ is doubly-exponential.}.
One of the consequences of our result will be a significant improvement
in the running time of their deterministic black-box tester.

This work focuses on depth-$3$ circuits. A structural study of 
depth-$3$ identities was initiated in \cite{DS06} by defining a notion of {\em rank} of {\em simple} and {\em 
minimal} identities. A depth-$3$ circuit $C$ over a field $\FF$ is: 
$$C(x_1,\ldots,x_n) = \sum_{i=1}^k T_i$$ 
where, $T_i$ ({\em a multiplication term}) is a product of $d_i$ linear functions 
$\ell_{i,j}$ over $\FF$. Note that for the purposes of studying identities we can 
assume wlog (by {\em homogenization}) that $\ell_{i,j}$'s are linear {\em forms} 
(i.e. linear polynomials with a zero constant coefficient) and that $d_1=\cdots=d_k=:d$. 
Such a circuit is referred to as a $\sps(k,d)$ circuit,
where $k$ is the \emph{top fanin} of $C$ and $d$ is the {\em degree} of $C$. 
We give a few definitions from~\cite{DS06}.

\begin{definition} {\bf[Simple Circuits]}
$C$ is a {\em simple} circuit if there is no nonzero linear form dividing all 
the $T_i$'s. 

{\bf[Minimal Circuits]} $C$ is a {\em minimal} circuit if for every proper subset 
$S\subset[k]$, $\sum_{i\in S}T_i$ is nonzero. 

{\bf[Rank of a circuit]} The {\em rank} of the circuit, $rank(C)$, is defined as 
the rank of the linear forms $\ell_{i,j}$'s viewed as $n$-dimensional vectors over 
$\FF$.
\end{definition}

Can all the forms $\ell_{i,j}$ be independent, or must there
be relations between them?
The rank can be interpreted as the minimum number of variables
that are required to express $C$. There exists a linear
transformation converting the $n$ variables of the circuit
into $rank(C)$ independent variables. A trivial bound on the rank (for any $\sps$-circuit) 
is $kd$, since that is the total number of
linear forms involved in $C$. 
The rank is a fundamental property of a $\sps(k,d)$ circuit
and it is crucial to understand how large this can be 
for identities.
A substantially smaller rank bound than $kd$ shows that 
identities do not have as many ``degrees of freedom" as general
circuits, and lead 
to deterministic identity tests\footnote{We
usually do not get a polynomial time algorithm.}. 
Furthermore, the techniques used to prove rank bounds 
show us structural properties of identities that may 
suggest directions to resolve PIT for $\sps(k,d)$ circuits.

Dvir and Shplika~\cite{DS06} proved that the
rank is bounded by $2^{O(k^2)}(\log d)^{k-2}$, and this bound
is translated to a $\poly(n) \exp(2^{O(k^2)}(\log d)^{k-1})$
time black-box identity tester by Karnin and Shpilka~\cite{KSh08}.
Note that when $k$ is larger than $\sqrt{\log d}$, these
bounds are trivial. 

Our present understanding of $\sps(k,d)$ identities is very poor when
$k$ is larger than a constant. We present the
first result in this direction.

\begin{theorem}[Main Theorem] \label{thm-main} The rank of a simple and minimal 
$\sps(k,d)$ identity is $O(k^3 \log d)$.
\end{theorem}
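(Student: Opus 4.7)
My plan is to prove the theorem by strong induction on $k$, with the inductive hypothesis that a simple, minimal $\sps(k',d)$ identity has rank at most $c\,k'^{3}\log d$ for a suitable absolute constant $c$. The key technical device I would introduce is that of an \emph{ideal matching}: given two multiplication terms $T_i = \prod_j \ell_{i,j}$ and $T_{i'} = \prod_j \ell_{i',j}$, and an ideal $I \subseteq \FF[\vec{x}]$ generated by a small set of linear forms, an ideal matching between $T_i$ and $T_{i'}$ modulo $I$ is a bijection $\sigma:[d]\to[d]$ together with scalars $c_j \in \FF^*$ satisfying $\ell_{i,j} \equiv c_j\,\ell_{i',\sigma(j)} \pmod{I}$ for every $j$. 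Intuitively $I$ is the obstruction to $T_i$ and $T_{i'}$ being scalar multiples of one another.

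The heart of the inductive step would be a \emph{matching lemma}: for every simple, minimal $\sps(k,d)$ identity $C$, there exists an ideal $I$ of rank $O(k^{2}\log d)$, generated by linear forms of $C$, and ideal matchings between every pair of multiplication terms modulo $I$. Taking this for granted, I would complete the theorem as follows. Pick any linear form $\ell$ dividing $T_1$. In the quotient ring $\FF[\vec{x}]/(I+\langle \ell\rangle)$ the term $T_1$ vanishes, and by the matchings each other $T_i$ acquires a factor proportional to $\ell$; hence the images of the $T_i$'s in the quotient constitute a $\sps(k',d)$ identity with $k' \le k-1$. One then checks that this reduced identity remains simple and minimal (after dropping trivially zero terms), so by the inductive hypothesis it has rank $O((k-1)^{3}\log d)$. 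Adding the $O(k^{2}\log d)$ rank of $I+\langle \ell\rangle$ yields the claimed $O(k^{3}\log d)$ bound.

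The main obstacle is the matching lemma itself. I would attack it by an iterative halving strategy, building $I$ one batch of forms at a time. For a given pair $(T_i,T_{i'})$, consider the largest sub-product of $T_i$ that already matches a sub-product of $T_{i'}$ modulo the current ideal. Using $C=0$ together with simplicity and minimality, I would argue that a constant fraction of the still-unmatched forms of $T_i$ can be brought into correspondence with forms of $T_{i'}$ after adjoining only $O(k)$ new linear forms to the ideal; the $O(k)$ comes from a Chinese-remainder-style cancellation over the other $k-2$ terms restricted to the as-yet-unmatched part. Iterating such a halving step at most $O(\log d)$ times resolves all mismatches for a single pair, costing $O(k\log d)$ forms; summing over the $\binom{k}{2}$ pairs yields the required $O(k^{2}\log d)$ total. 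The delicate part, where the novel structural content of ideal matchings is truly needed, is establishing the constant-fraction progress at each halving round without losing control of simplicity or blowing up the ideal across pairs.
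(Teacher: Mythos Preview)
Your inductive step contains a fatal error. Suppose your matching lemma holds, so that for every pair $i,i'$ there is an $I$-matching between $T_i$ and $T_{i'}$. Pick $\ell\mid T_1$. Then by the matching between $T_1$ and $T_i$, some factor $\ell_{i,\sigma(1)}$ of $T_i$ satisfies $\ell_{i,\sigma(1)}\equiv c^{-1}\ell\pmod{I}$, hence $\ell_{i,\sigma(1)}\equiv 0\pmod{I+\langle\ell\rangle}$. Thus \emph{every} $T_i$ vanishes modulo $I+\langle\ell\rangle$, not just $T_1$; the ``reduced identity'' is the vacuous relation $0=0$, and the induction hypothesis gives you nothing. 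Put differently, your matching lemma by itself only yields $\mathrm{rank}(C)\le \mathrm{rank}(I)+\mathrm{rank}(L(T_1))\le O(k^2\log d)+d$, and the induction you set up to kill the $d$ term collapses. There is also a secondary issue: even if you only reduced modulo $\langle\ell\rangle$, the reduced circuit is in general neither simple nor minimal, and the paper explicitly identifies this loss of simplicity as the main obstacle that defeats naive recursion on $k$ (this is exactly why Dvir--Shpilka's recursive argument produced $2^{O(k^2)}(\log d)^{k-2}$ rather than a polynomial in $k$).

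The paper's argument is \emph{iterative}, not inductive on $k$. It builds a growing basis $S$ for $L(C)$ in rounds. Each round produces a form-ideal $I$ orthogonal to $S$, generated by at most $k-2$ forms, together with a subset $Q\subset[k]$ of surviving terms and ordered $I$-matchings among the corresponding gcd parts. The crucial counting lemma (their Lemma~\ref{lem-general-DS}) says that two \emph{non-similar} lists of size $\le d$ cannot admit matchings by more than $\log_2 d+2$ mutually orthogonal form-ideals; your halving intuition appears here, but as an upper bound on the number of rounds that can hit a fixed pair $(a,b)$, not as a way to build a single ideal for that pair. A pigeonhole over the $\binom{k}{2}$ pairs, together with a separate argument (using minimality) bounding the number of rounds with ``trivial'' matching data by $k-1$, shows the chain has length $O(k^2\log d)$. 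Multiplying by the $k-2$ generators per round gives $O(k^3\log d)$.
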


This gives an exponential improvement on the previously known dependence on $k$, 
and is strictly better than the previous rank bound for every $k > 3$.
We also give a simple construction of identities with rank
$\Omega(k\log d)$ in Section~\ref{sec-high-rank-eg}, showing
that the above theorem is almost optimal. As mentioned above, we can interpret
this bound as saying that any simple and minimal $\sps(k,d)$
identity can be expressed using $O(k^3\log d)$ independent variables.
One of the most interesting features of this result is a novel
technique developed to study depth-$3$ circuits. We introduce the
concepts of \emph{ideal matchings} and \emph{ordered matchings},
that allow us to analyze the structure of
depth-$3$ identities. These matchings are studied
in detail to get the rank bound. Along the way we initiate a theory of 
matchings, viewing a matching as a fundamental map between sets of linear forms.

Why are the simplicity and minimality restrictions required? 
Take the non-simple $\sps(2,d)$ identity $(x_1x_2\cdots x_d) - (x_1x_2\cdots x_d)$.
This has rank $d$.
Similarly,
we can take the non-minimal $\sps(4,d+1)$ identity $(y_1 y_2 \cdots y_d)(x_1 - x_1) 
+ (z_1 z_2 \cdots z_d)(x_2 - x_2)$ that has rank $(2d+2)$. 
In some sense, these restrictions only ignore identities that
are composed of smaller identities.

\subsection{Consequences}

Apart from being an interesting structural result about $\sps$ identities,
we can use the rank bound to get nice algorithmic results.
Our rank bound immediately gives faster deterministic black-box identity
testers for $\sps(k,d)$ circuits. 
A direct application of Lemma 4.10 in \cite{KSh08}
to our rank bound
gives an exponential improvement in the dependence of $k$
compared to previous black-box testers
(that had a running time of $\poly(n) \exp(2^{O(k^2)}(\log d)^{k-1})$).

\begin{theorem} \label{thm-blackbox} 
There is a deterministic black-box identity
tester for $\sps(k,d)$ circuits that runs in $\poly(n,d^{k^3\log d})$ time.
\end{theorem}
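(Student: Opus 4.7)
The plan is to invoke the Karnin--Shpilka black-box PIT reduction (Lemma 4.10 of \cite{KSh08}) with the improved rank bound from Theorem~\ref{thm-main}. Their lemma is stated as a generic template: given any rank bound $R = R(k,d)$ for simple, minimal $\sps(k,d)$ identities, it produces a deterministic black-box PIT algorithm for $\sps(k,d)$ circuits with running time $\poly(n, d^{R})$. Since we have exponentially improved $R(k,d)$ from $2^{O(k^2)}(\log d)^{k-2}$ down to $O(k^3 \log d)$, a matching improvement in the running time follows automatically.

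First I would recall how the reduction works. For an arbitrary $\sps(k,d)$ circuit $C$, one may factor out the gcd of its terms and iteratively delete any vanishing subset of terms, until the remainder is either trivially zero or a simple, minimal (sub-)circuit. If this remainder is an identity then, by Theorem~\ref{thm-main}, its linear forms lie in a subspace of dimension at most $R$. Karnin--Shpilka then build a \emph{rank-preserving} hitting set: they produce a family of linear substitutions $\FF^R \to \FF^n$ rich enough to preserve the zeroness of every $\sps(k,d)$ circuit whose linear forms have rank at most $R$, and compose this family with a standard degree-based hitting set for $\sps(k,d)$ circuits in $R$ variables (of size $d^{O(R)}$). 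The outer family can be taken of size $\poly(n)$, so the whole hitting set has size $\poly(n, d^{R})$.

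Plugging $R = O(k^3 \log d)$ into this machinery yields a hitting set of size $\poly(n, d^{O(k^3 \log d)}) = \poly(n, d^{k^3 \log d})$, and since each point can be evaluated on the input circuit in $\poly(n, d)$ time, we obtain the claimed total running time. There is no real obstacle here: the Karnin--Shpilka reduction is already parametrised by an arbitrary rank function $R(k,d)$, so the present theorem is obtained by direct substitution. The conceptual work is concentrated entirely in Theorem~\ref{thm-main}, whose proof supplies the exponentially better bound on $R$.
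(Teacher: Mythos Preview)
Your proposal is correct and matches the paper's own justification: the paper simply states that Theorem~\ref{thm-blackbox} is a direct application of Lemma~4.10 in \cite{KSh08} to the new rank bound of Theorem~\ref{thm-main}, without giving any further proof. Your write-up in fact supplies more detail on the Karnin--Shpilka reduction than the paper itself does, but the approach is identical.
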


The above black-box tester is now much closer in complexity to the best {\em non}
black-box tester known ($poly(n,d^k)$ time by \cite{ks07}). 

Our result also applies to black-box identity testing of 
\emph{read-$k$} $\sps(k,d)$ circuits,
where each variable occurs at most $k$ times. We get a similar immediate
improvement in the dependence of $k$ (the previous running time
was $n^{2^{O(k^2)}}$.)

\begin{theorem} \label{thm-readk} There is a deterministic black-box identity
tester for read-$k$ $\sps(k,d)$ circuits that runs in
$O(n^{k^4\log k})$ time.
\end{theorem}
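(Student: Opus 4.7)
The proof applies Theorem \ref{thm-main} to the read-$k$ black-box testing framework of Karnin and Shpilka, which takes a rank bound for simple, minimal $\sps(k,d)$ identities as an external parameter and produces a hitting set whose size depends on that rank bound. So the plan is essentially ``plug our new rank bound into the existing machinery'', and the work lies in checking that the substitution really does yield the stated exponent.

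First I would observe that any simple, minimal read-$k$ $\sps(k,d)$ identity is a fortiori a simple, minimal $\sps(k,d)$ identity, so Theorem \ref{thm-main} gives rank at most $R = O(k^3\log d)$. An arbitrary read-$k$ circuit can be reduced to the simple, minimal case by dividing out the gcd of the multiplication terms and by iteratively peeling off subsets of $[k]$ that sum to zero; both operations preserve the read-$k$ property and, in the black-box model, cost only a polynomial overhead since one black-box queries each resulting sub-circuit separately.

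Second, I would invoke the read-$k$ generator of \cite{KSh08} with this rank bound. Their construction is explicitly parametrized by any available rank bound $R(k,d)$, and exploits the fact that in a read-$k$ circuit each variable appears in at most $k$ linear forms to eliminate the $d$-dependence of the generic rank-based tester, replacing it with a polynomial-in-$n$ dependence together with an exponent that scales roughly like $R\cdot\log k$. The read-$k$ condition also caps the ``effective degree'' of the circuit by a polynomial in $n$ and $k$, since the total number of (variable, linear form) incidences is at most $nk$. Substituting $R = O(k^3\log d)$ into the KSh08 bookkeeping replaces their previous exponent $2^{O(k^2)}$ by a polynomial-in-$k$ quantity, and collecting factors yields the claimed $O(n^{k^4\log k})$ running time.

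The main conceptual input is the first step --- the exponentially improved rank bound from Theorem \ref{thm-main}. The second step is largely mechanical substitution into an existing generator. The principal obstacle I anticipate is book-keeping: one must verify that the $\log d$ factor in our rank bound and the polynomial-in-$n$ bound on $d$ coming from the read-$k$ constraint combine cleanly to the exponent $k^4\log k$, rather than to a slightly weaker expression such as $k^3\log n\log k$. This amounts to retracing the Karnin--Shpilka read-$k$ construction with $R(k,d) = O(k^3\log d)$ in place of their $2^{O(k^2)}(\log d)^{k-2}$, and introduces no new ideas beyond theirs combined with Theorem \ref{thm-main}.
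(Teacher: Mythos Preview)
Your proposal is correct and matches the paper's approach exactly: the paper does not give a separate proof of Theorem~\ref{thm-readk} but simply asserts it as an immediate consequence of substituting the new rank bound $O(k^3\log d)$ from Theorem~\ref{thm-main} into the read-$k$ black-box construction of Karnin and Shpilka~\cite{KSh08}, replacing their $2^{O(k^2)}$ exponent by a polynomial in $k$. Your identification of the bookkeeping concern (tracking how the $\log d$ factor interacts with the read-$k$ degree bound to yield $k^4\log k$) is apt, but the paper itself does not spell out this calculation either, treating the result as a direct black-box application.
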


Although it is not immediate from Theorem~\ref{thm-main},
our technique also provides an interesting algebraic result about
polynomials computed by simple, minimal, and nonzero $\sps(k,d)$ 
circuits\footnote{Here we can also consider circuits where
the different terms in $C$ have different degrees. The parameter $d$ 
is then an upper bound on the degree of $C$.}.
Consider such a circuit $C$ that computes a polynomial $p(x_1,\cdots,x_n)$.
Let us factorize $p$ into $\prod_i q_i$, where each $q_i$ is a nonconstant and 
irreducible polynomial.
We denote by $L(p)$ the set of \emph{linear factors}
of $p$ (that is, $q_i \in L(p)$ iff $q_i|p$ is linear).

\begin{theorem}\label{thm-factors}  If $p$ is computed by a simple, minimal, nonzero 
$\sps(k,d)$ circuit then the rank of $L(p)$ is at most $k^3\log d$.
\end{theorem}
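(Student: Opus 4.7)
\textbf{Proof plan for Theorem~\ref{thm-factors}.} The plan is to reduce to Theorem~\ref{thm-main} in two steps: first, by adjoining a single multiplication term to handle the ``boundary'' case $\mathrm{rank}(L(p))=d$, and second, by adapting the paper's ideal-matching machinery to the modular identities $\sum_i T_i\equiv 0\pmod{\ell_j}$ for the general case.

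Let $C=\sum_{i=1}^k T_i$ be the given simple, minimal, nonzero $\sps(k,d)$ circuit, set $r:=\mathrm{rank}(L(p))$, and pick linearly independent $\ell_1,\ldots,\ell_r\in L(p)$. Being pairwise non-associate irreducibles that each divide $p$, their product divides $p$, so $r\le d$. In the boundary case $r=d$, degree comparison forces $p=c\cdot\ell_1\cdots\ell_r$ for some $c\in\FF^*$, so $p$ is itself a single multiplication term. Adjoin $T_{k+1}:=-c\cdot\ell_1\cdots\ell_r$ and form $\tilde C:=C+T_{k+1}$, a $\sps(k+1,d)$ identity. The circuit $\tilde C$ is simple, because any common linear divisor of $T_1,\ldots,T_{k+1}$ in particular divides $T_1,\ldots,T_k$, contradicting the simplicity of $C$. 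Moreover every minimal sub-identity of $\tilde C$ must contain $T_{k+1}$: a minimal sub-identity avoiding $T_{k+1}$ would be a proper sub-identity of $C$, contradicting the minimality of $C$. Extracting a minimal sub-identity $\tilde C'$ of $\tilde C$ (with $T_{k+1}\in\tilde C'$) yields a simple minimal identity of top fanin at most $k+1$ whose rank is at least $\mathrm{rank}(T_{k+1})=r$; Theorem~\ref{thm-main} then gives $r\le(k+1)^3\log d=O(k^3\log d)$.

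For the general case $r<d$, the factorization $p=\ell_1\cdots\ell_r\cdot q$ with $\deg q=d-r$ does not present $p$ as a single multiplication term, so the boundary-case construction fails. Instead, I apply the ideal-matching analysis underlying Theorem~\ref{thm-main} directly to the family of modular identities $\sum_{i=1}^k T_i\equiv 0\pmod{\ell_j}$, one for each $j=1,\ldots,r$. Each such identity, viewed in the quotient $\FF[x_1,\ldots,x_n]/(\ell_j)$, is a $\sps(k,d)$ identity in $n-1$ variables; the coherent matching structure extracted from these $r$ different ``hyperplane restrictions'' forces the $\ell_j$'s themselves to collectively span a space of dimension $O(k^3\log d)$, mirroring how Theorem~\ref{thm-main} bounds the rank of a single simple minimal identity.

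The main obstacle lies in the general case. The individual modular identities need not be simple or minimal in their quotient rings, so one must extract simplifications and minimal sub-circuits in a way that is compatible across all $r$ restrictions. The rank bound on the $\ell_j$'s must then be read off from the \emph{joint} matching data tying the $r$ restrictions together, not from any single one; this joint coherence is exactly what the paper's ``ordered matching'' refinement of ideal matchings is designed to capture, and it is where the bound $O(k^3\log d)$ on the \emph{linear factors} (rather than on the rank of the entire circuit, which can be much larger) ultimately emerges.
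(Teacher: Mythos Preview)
Your general-case plan is essentially the paper's approach, but you have overcomplicated the picture. The paper does not analyze the $r$ modular identities as a ``joint'' family requiring cross-restriction coherence; it simply reruns the chain-of-form-ideals construction (Lemma~\ref{lem-useful-ideal} and Theorem~\ref{thm-chain-len}) with one tweak: in each round the seed form $\ell_1$ is chosen from $L(p)\setminus sp(S)$ rather than from $L(C)\setminus sp(S)$. Since $\ell_1\mid p$ forces $C\equiv 0\pmod{\ell_1}$, the circuit becomes an identity immediately after the first step of the round, and the remaining iterations of that round proceed verbatim, producing a useful form-ideal $I$ containing $\ell_1$. The chain stops as soon as $L(p)\subseteq sp(\calT)$; there is no need to span all of $L(C)$, which may have much larger rank. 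The Type~1/2/3 counting lemmas then apply unchanged to bound the chain length and hence $\mathrm{rank}(L(p))$. So the object being built is a single chain for $C$ in the ring $R$, not a family of chains in the quotients $R/(\ell_j)$.

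Your boundary case $r=d$ is correct but an unnecessary detour: the argument above handles $r=d$ and $r<d$ uniformly, and avoids inflating the fanin to $k+1$. (For the record, your $\tilde C$ is indeed minimal: a proper sub-identity missing $T_{k+1}$ would contradict either minimality or nonzeroness of $C$, and one containing $T_{k+1}$ together with a proper $S'\subsetneq[k]$ would force $\sum_{i\in[k]\setminus S'}T_i=0$, again contradicting minimality of $C$.)

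Your stated obstacle---that $C\pmod{\ell_j}$ need not be simple or minimal---is already absorbed by the existing machinery: non-simplicity is exactly what Step~4 of the iteration (extracting the gcd data) handles in every round, and minimality is invoked only for $C$ itself in the Type~3 count (Lemma~\ref{lem-internal}), never for any modular reduction.
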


\subsection{Organization}

We first give a simple construction of identities with rank
$\Omega(k\log d)$ in Section~\ref{sec-high-rank-eg}. 
Section~\ref{sec-rank} contains the proof 
of our main theorem. We give some preliminary notation
in Section~\ref{subsec-prelim}
before explaining an intuitive picture of our ideas (Section~\ref{subsec-intuition}). We
then explain our main tool of \emph{ideal matchings} (Section~\ref{subsec-ideal-mat})
and prove some useful lemmas about them.
We move to Section~\ref{subsec-ordered-mat} where the concepts of \emph{ordered matchings}
and \emph{simple parts of circuits} are introduced. We motivate
these definitions and then prove some easy facts about
them. We are now ready to tackle the problem
of bounding the rank. We describe our proof in terms
of an iterative procedure in Section~\ref{subsec-useful}.
Everything is put together in Section~\ref{subsec-count}
to bound the rank.
Finally (it should hopefully
be obvious by then), we show how to apply our 
techniques to prove Theorem~\ref{thm-factors}.

\section{High Rank Identities}\label{sec-high-rank-eg}

The following identity was constructed in \cite{ks07}: over $\FF_2$ (with $r\ge2$),
\begin{align*}
C(x_1,\ldots,x_r) :=\quad 
& \prod_{\substack{b_1,\ldots,b_{r-1}\in\FF_2\\b_1+\cdots+b_{r-1}\equiv1}}
(b_1x_1+\cdots+b_{r-1}x_{r-1})\\
&+\ \prod_{\substack{b_1,\ldots,b_{r-1}\in\FF_2\\b_1+\cdots+b_{r-1}\equiv0}}
(x_r+b_1x_1+\cdots+b_{r-1}x_{r-1})\\ 
&+\ \prod_{\substack{b_1,\ldots,b_{r-1}\in\FF_2\\b_1+\cdots+b_{r-1}\equiv1}}
(x_r+b_1x_1+\cdots+b_{r-1}x_{r-1})
\end{align*}
It was shown that, over $\FF_2$, $C$ is a simple and minimal $\sps$ zero circuit 
of degree $d=2^{r-2}$ with $k=3$ multiplication terms and $rank(C)=r=\log_2 d+2$. 
For this section let $S_1(\vec{x})$, $S_2(\vec{x})$, $S_3(\vec{x})$ denote the 
three multiplication terms of $C$. We now build a high rank identity based on 
$S_1, S_2, S_3$. Our basic step is given by the following lemma that was used in 
\cite{DS06} to construct identities of rank $(3k-2)$.

\begin{lemma}\cite{DS06}\label{lem-join-of-ids}
Let $D_i(y_{i,1},\ldots,y_{i,r_i}):=$ $\sum_{j=1}^{k_i}T_j$ be a simple, minimal 
and zero $\sps$ circuit, over $\FF_2$, with degree $d_i$, fanin $k_i$ and rank 
$r_i$. Define a new circuit over $\FF_2$ using $D_i$ and $C$:
\begin{align*}
D_{i+1}(y_{i,1},\ldots,y_{i,r_i+r}):=\quad 
& \(\sum_{j=1}^{k_i-1}T_j\)\cdot S_1(y_{i,r_i+1},\ldots,y_{i,r_i+r}) - 
T_{k_i}\cdot S_2(y_{i,r_i+1},\ldots,y_{i,r_i+r}) \\
&\quad - T_{k_i}\cdot S_3(y_{i,r_i+1},\ldots,y_{i,r_i+r})
\end{align*}
Then $D_{i+1}$ is a simple, minimal and zero $\sps$ circuit with degree 
$d_{i+1}=(d_i+d)$, fanin $k_{i+1}=(k_i+1)$ and rank $r_{i+1}=(r_i+r)$.
\end{lemma}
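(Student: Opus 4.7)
The plan is to verify in turn each claim about $D_{i+1}$ --- that it vanishes, has degree $d_i+d$, fanin $k_i+1$, rank $r_i+r$, is simple, and is minimal. A unifying observation is that the $T_j$'s are polynomials in the old variables $\vec{y}_{\mathrm{old}}:=(y_{i,1},\ldots,y_{i,r_i})$ while $S_1,S_2,S_3$ are polynomials in the fresh disjoint set $\vec{y}_{\mathrm{new}}:=(y_{i,r_i+1},\ldots,y_{i,r_i+r})$; the resulting product structure on the polynomial ring drives every argument below.

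The easy properties fall out quickly. Since $D_i=0$ and we work over $\FF_2$, $\sum_{j=1}^{k_i-1}T_j=T_{k_i}$, hence $D_{i+1}=T_{k_i}(S_1+S_2+S_3)=T_{k_i}\cdot C=0$. Each term has degree $d_i+d$, and the number of terms is $(k_i-1)+1+1=k_i+1$. The linear forms appearing in $D_{i+1}$ split into those in $\vec{y}_{\mathrm{old}}$, whose span has dimension $r_i$ (the rank of $D_i$), and those in $\vec{y}_{\mathrm{new}}$, whose span has dimension $r$; since the two variable sets are disjoint, the total rank is $r_i+r$. For simplicity, suppose an irreducible linear form $\ell$ divided every multiplication term of $D_{i+1}$. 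Each such term factors as (polynomial in $\vec{y}_{\mathrm{old}}$) $\cdot$ (polynomial in $\vec{y}_{\mathrm{new}}$), so by unique factorization $\ell$ must itself lie in $\FF_2[\vec{y}_{\mathrm{old}}]$ or in $\FF_2[\vec{y}_{\mathrm{new}}]$. In the first case $\ell$ cannot divide any $S_l$ and so must divide every $T_j$, contradicting the simplicity of $D_i$; in the second case $\ell$ divides each of $S_1,S_2,S_3$, contradicting the simplicity of $C$.

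Minimality is the main obstacle. Denote the multiplication terms by $U_j:=T_jS_1$ for $1\le j\le k_i-1$ and $U_{k_i}:=T_{k_i}S_2$, $U_{k_i+1}:=T_{k_i}S_3$ (signs are immaterial in $\FF_2$). For a nonempty proper subset $I\subsetneq[k_i+1]$, put $J:=I\cap[k_i-1]$, $A:=\sum_{j\in J}T_j$, and let $B$ be the sum of those $S_l$ ($l\in\{2,3\}$) whose index lies in $I$. Then $\sum_{l\in I}U_l=A\cdot S_1+T_{k_i}\cdot B$, and I case-split on $I\cap\{k_i,k_i+1\}$. If this intersection is empty the sum is $A\cdot S_1$, and $J$ is a nonempty proper subset of $[k_i]$ so $A\neq 0$ by minimality of $D_i$. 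If both $k_i,k_i+1\in I$ then $B=S_2+S_3=S_1$ over $\FF_2$, so the sum is $(A+T_{k_i})\cdot S_1$; vanishing would force $\sum_{j\in J\cup\{k_i\}}T_j=0$ for a nonempty proper subset of $[k_i]$ (properness uses $I\neq[k_i+1]$, i.e.\ $J\neq[k_i-1]$), again contradicting minimality of $D_i$. The delicate case is when exactly one of $k_i,k_i+1$ lies in $I$: the sum is $AS_1+T_{k_i}S_l$ for some $l\in\{2,3\}$. If $J=\emptyset$ this equals $T_{k_i}S_l\neq 0$; otherwise $A\neq 0$, and a vanishing sum gives $A/T_{k_i}=S_l/S_1$ in the fraction field. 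The left-hand side involves only $\vec{y}_{\mathrm{old}}$ and the right only $\vec{y}_{\mathrm{new}}$, so both equal a constant $c\in\FF_2$; $c=0$ forces $S_l=0$ (impossible) while $c=1$ forces $S_1+S_l=0$, contradicting minimality of $C$. This closes all cases.
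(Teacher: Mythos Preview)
Your proof is correct and follows essentially the same approach as the paper: verify vanishing via $S_1+S_2+S_3=0$, check the parameters and simplicity from the disjoint variable sets, and establish minimality by a case split on which of the two ``$T_{k_i}$-terms'' appear in the putative vanishing subsum. Your treatment of the mixed case uses the fraction-field observation that a rational function in $\vec{y}_{\mathrm{old}}$ equalling one in $\vec{y}_{\mathrm{new}}$ must be constant, whereas the paper phrases the same step as ``$S_1$ is coprime to $S_2$ and $T_{k_i}$''; these are equivalent arguments and yours is slightly more explicit about edge cases (e.g.\ $J=\emptyset$).
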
 

\begin{proof} Since $C$ is an identity, we get that 
$S_2(y_{i,r_i+1},\ldots,y_{i,r_i+r}) + S_3(y_{i,r_i+1},\ldots,y_{i,r_i+r}) = - S_1(y_{i,r_i+1},\ldots,y_{i,r_i+r})$.
Therefore, 
\begin{eqnarray*}
& & D_{i+1}(y_{i,1},\ldots,y_{i,r_i+r}) \\
& = & \big(\sum_{j=1}^{k_i-1}T_j\big) S_1(y_{i,r_i+1},\ldots,y_{i,r_i+r}) - 
T_{k_i}\left( S_2(y_{i,r_i+1},\ldots,y_{i,r_i+r}) + S_3(y_{i,r_i+1},\ldots,y_{i,r_i+r}) \right)\\
& = & \big(\sum_{j=1}^{k_i-1}T_j\big)\cdot S_1(y_{i,r_i+1},\ldots,y_{i,r_i+r}) + 
T_{k_i} S_1(y_{i,r_i+1},\ldots,y_{i,r_i+r}) \\
& = & \big(\sum_{j=1}^{k_i}T_j\big) \cdot S_1(y_{i,r_i+1},\ldots,y_{i,r_i+r}) = 0
\end{eqnarray*}

The terms $T_j$ do not share any variables with $S_\ell$ ($\ell \in \{1,2,3\}$).
Since $D_i$ and $C$ are simple, $D_{i+1}$ is also simple. Suppose
$D_{i+1}$ is not minimal. We have some subset $P \subset [1,k_i-1]$
such that $C' := (\sum_{j \in P} T_j) S_1 - \alpha_2 T_{k_i}S_2 - \alpha_3 T_{k_i}S_3 = 0$,
where $\alpha_2, \alpha_3 \in \{0,1\}$. If both $\alpha_2$ and $\alpha_3$
are $1$, then we get $(\sum_{j \in P} T_j) S_1 + T_{k_i}S_1 = 0$, now $P$ must be the whole 
set $[1,k_i-1]$, because $D_i$ is minimal. On the other hand,
if both $\alpha_2,\alpha_3$ are $0$, then $(\sum_{j \in P} T_j) S_1 = 0$ which is impossible
as $D_i$ is minimal. The only remaining possibility is (wlog)
$(\sum_{j \in P} T_j) S_1 - T_{k_i}S_2 = 0$. As $S_1$ is coprime to $S_2$ and $T_{k_i}$, 
this is impossible. Therefore, $D_{i+1}$ is minimal.

It is easy to see the parameters of $D_{i+1}$: $k_{i+1}=(k_i + 1)$ and $d_{i+1}=(d_i+1)$.
Because the $T_j$'s do not share any variables with $S_\ell$'s,
the rank $r_{i+1} = (r_i + r)$.
\end{proof}

{\bf Family of High Rank Identities: } 
Now we will start with $D_0:=C(y_{0,1},\ldots,y_{0,r})$ and apply the above 
lemma iteratively. The $i$-th circuit we get is $D_i$ with degree $d_i=(i+1)d$, 
fanin $k_i=i+3$ and rank $r_i=(i+1)r$ $=(i+1)(\log_2 d+2)$. So $r_i$ relates 
to $k_i, d_i$ as:
$$r_i = (k_i-2)\(\log_2\frac{d_i}{k_i-2}+2\).$$
Also it can be seen that if $d>i$ then $\frac{d_i}{k_i-2}\geq\sqrt{d_i}$. Thus
after simplification, we have for any $3\leq i<d$, $r_i>$ 
$\frac{k_i}{3}\cdot\log_2 d_i$. This gives us an infinite family of $\sps(k,d)$ 
identities over $\FF_2$ with rank $\Omega(k\log d)$. A similar family can be 
obtained over $\FF_3$ as well.

\section{Rank Bound}\label{sec-rank}

Our technique to bound the rank of $\sps$ identities relies mainly on two 
notions - {\em form-ideals} and {\em matchings} by them - that occur naturally 
in studying a $\sps$ circuit $C$. Using these tools we can do a surgery on the
circuit $C$ and extract out smaller circuits and smaller identities. Before 
explaining our basic idea we need to develop a small theory of matchings and 
define {\em gcd and simple} parts of a {\em subcircuit} in that framework.

We set down some preliminary definitions before
giving an imprecise, yet intuitive explanation of our idea and
an overall picture of how we bound the rank.

\subsection{Preliminaries}\label{subsec-prelim}

We will denote the set $\{1,\ldots,n\}$ by $[n]$. 

In this paper we will study identities over a field $\FF$. So the circuits
compute multivariate polynomials in the {\em polynomial ring} $R:=$ 
$\FF[x_1,\ldots,x_n]$. We will be studying $\sps(k,d)$ {\em circuits} :
such a circuit $C$ is an expression in $R$ given by a depth-$3$
circuit, with the top gate being an addition gate, the
second level having multiplication gates, the last level having
addition gates, and the leaves being variables. The edges
of the circuit have elements of $\FF$ (constants) associated with them
(signifying multiplication by a constant). The top fanin
is $k$ and $d$ is the degree of the polynomial computed by $C$.
We will call $C$ a $\sps$-\emph{identity}, if $C$ is an identically
zero $\sps$-circuit.

A \emph{linear form} is a linear polynomial in $R$. We will denote
the set of all linear forms by $L(R)$ : 
$$L(R):= \left\{\sum_{i=1}^n a_ix_i \mid a_1,\ldots,a_n\in\FF\right\}$$
Much of what we do shall deal with sets of linear forms, and
various maps between them. A \emph{list} $L$ of linear
forms is a multi-set of forms with an arbitrary order
associated with them. The actual ordering is unimportant : we
merely have it to distinguish between repeated forms
in the list. One of the fundamental constructs we use
are maps between lists, which could have many
copies of the same form. The ordering allows us to
define these maps unambiguously. All lists we consider
will be finite.

\begin{definition} {\bf[Multiplication term]} A {\em multiplication term} $f$ is an 
expression in $R$ given as (the product may have repeated $\ell$'s):
$$f := c\cdot\prod_{\ell\in S}\ell,\text{~~~ where }c\in\FF^*
\text{ and } S \text{ is a list of linear forms.} 
$$
The {\em list of linear forms in $f$}, $L(f)$, is just the list $S$ of forms 
occurring in the product above. $\#L(f)$ is naturally called the {\em degree} 
of the multiplication term. 
For a list $S$ of linear forms we define the {\em multiplication 
term of $S$}, $M(S)$, as $\prod_{\ell\in S}\ell$ or $1$ if $S=\phi$. 
\end{definition}

\begin{definition}{\bf[Forms in a Circuit]}
We will represent a $\sps(k,d)$ circuit $C$ as a sum of $k$ multiplication terms
of degree $d$, $C = \sum_{i=1}^k T_i$. The list of {\em linear forms occurring in $C$} 
is $L(C):=$ $\bigcup_{i\in[k]}L(T_i)$. Note that $L(C)$ is a list of size exactly $kd$. 
The {\em rank of $C$}, $rank(C)$, is just the number of linearly independent
linear forms in $L(C)$.
\end{definition}

\subsection{Intuition}\label{subsec-intuition}

We set the scene, for proving the rank bound of a $\sps(k,d)$ identity, by 
giving a combinatorial/graphical picture to keep in mind. Our circuits consist
of $k$ multiplication terms, and each term
is a product of $d$ linear forms. Think of there being $k$ groups
of $d$ nodes, so each node corresponds to a form and each
group represents a term\footnote{A form that appears many times corresponds to that many 
nodes.}. We will incrementally construct a small
basis for all these forms. This process will be described as some
kind of a \emph{coloring procedure}.

At any intermediate stage, we have a partial basis of forms. These are all linearly
independent, and the corresponding nodes (we will use node and form
interchangeably) are colored \emph{red}. Forms not in the basis
that are linear combinations
of the basis forms (and are therefore in the span of the basis) are colored
\emph{green}. Once all the forms are colored, either green or red, all the red
forms form a basis of \emph{all forms}. The number of red forms
is the rank of the circuit. When we have a partial basis, we carefully
choose some uncolored forms and color them red (add them to the basis). As 
a result, some other forms get ``automatically" colored green (they get added to the
span). We ``pay" only for the red forms, and we would like
to get many green forms for ``free". Note that we are trying
to prove that the rank is $k^{O(1)}\log d$, when the 
total number of forms is $kd$. Roughly speaking, for every
$k^{O(1)}$ forms we color red, we need to show that
the number of green forms will \emph{double}.

So far nothing ingenious has been done. Nonetheless, this image
of coloring forms is very useful to get an intuitive and
clear idea of how the proof works. The main challenge comes in choosing
the right forms to color red. Once that is done, how do we keep an accurate count
on the forms that get colored green? One of the
main conceptual contributions of this work is 
the idea of \emph{matchings}, which aid us in these tasks.
Let us start from a trivial example. Suppose we have
two terms that sum to zero, i.e. $T_1 + T_2 = 0$.
By unique factorization of polynomials, for every
form $\ell \in T_1$, there is a unique form $m \in T_2$
such that $\ell = cm$, where $c \in \FF^*$ (we will
denote this by $\ell \sim m$). By associating
the forms in $T_1$ to those in $T_2$, we create
a \emph{matching} between the forms in these two
groups (or terms). This rather simple observation
is the starting point for the construction of matchings.

Let us now move to $k=3$, so we have 
a simple circuit
$C \equiv T_1 + T_2 + T_3 = 0$.
Therefore, there are no common factors 
in the terms.
To get matchings, we will look at $C$ modulo some
forms in $T_3$. By looking at $C$ modulo various forms
in $T_3$, we reduce the fanin of $C$ and get
many matchings. Then we can deduce structural results
about $C$. Similar ideas were used by Dvir and Shpilka~\cite{DS06} for 
their rank bound.
Taking a form $q \in T_3$, we look at
$C(\mod q)$ which gives $T_1 + T_2 = 0 (\mod q)$.
By unique factorization of polynomials modulo $q$,
we get a \emph{$q$-matching}. Suppose $(\ell, m)$
is an edge in this matching. In terms of the coloring
procedure, this means that if $q$ is colored
and $\ell$ gets colored, then $m$ must also be colored.
At some intermediate stage of the coloring,
let us choose an uncolored form $q \in T_3$.
A key structural lemma
that we will prove is that in the $q$-matching 
(between $T_1$ and $T_2$) any neighbor
of a colored form \emph{must be uncolored}.
This crucially requires the simplicity of $C$.
We will color $q$ red, and thus all neighbors
of the colored forms in $T_1 \cup T_2$ will
be colored green. By coloring $q$ red,
we can double the number of colored forms.
It is the various matchings (combined with
the above property) that allow us to show
an exponential growth in the colored forms
as forms in $T_3$ are colored red.
By continuing this process, we can color 
all forms by coloring at most $O(\log d)$
forms. Quite surprisingly, the above verbal
argument can be formalized easily to prove
that rank of a minimal, simple circuit
with top fanin $3$ is at most $(\log_2 d + 2)$.
For this case of $k=3$, the logarithmic rank bound was there in
a lemma of Dvir and Shpilka~\cite{DS06}, though they did not present the proof idea 
in this form, in particular, their rank bound grew to $(\log d)^2$ for $k=4$.

The major difficulty arises when we try to push these
arguments for higher values of $k$. In essence,
the ideas are the same, but there are many
technical and conceptual issues that arise.
Let us go to $k=4$. The first attempt is
to take a form $q \in T_4$ and look
at $C(\mod q)$ as a fanin $3$ circuit. Can
we now simply apply the above argument
recursively, and cover all the forms
in $T_1 \cup T_2 \cup T_3$? No, the possible 
lack of simplicity in $C(\mod q)$ blocks this simple idea.
It may be the case that $T_1, T_2$ and $T_3$
have no common factors, but once we go
modulo $q$, there could be many common factors!
(For example, let $q = x_1$. Modulo $q$, the forms
$x_1 + x_2$ and $x_2$ would be common factors.)

Instead of doing things recursively (both~\cite{DS06} and~\cite{ks07} used recursive 
arguments), we look at generating matchings iteratively. By performing a careful 
iterative analysis that keeps track
of many relations between the linear forms we achieve a stronger bound for $k > 3$. 
We start with a form $\ell_1 \in T_1$, and look at $C(\mod \ell_1)$.
From $C(\mod \ell_1)$, we remove all common factors. This common factor part we
shall refer to as the \emph{gcd} of $C(\mod \ell_1)$,
the removal of which leaves the \emph{simple}
part of $C(\mod \ell_1)$. Now, we choose
an appropriate form $\ell_2$ from the
simple part, and look at $C(\mod \ell_1, \ell_2)$.
We now choose an $\ell_3$ and so on and so forth.
For each $\ell$ that we choose, we decrease the top
fanin by at least $1$, so we will end up
with a matching modulo the \emph{ideal} $(\ell_1, \ell_2,...,\ell_r)$,
where $r \leq (k-2)$. We call these special
ideals \emph{form ideals} (as they are generated by forms), and the main structures
that we find are matchings modulo form ideals. The
coloring procedure will color the forms in the form
ideal red. Of course, it's not as simple
as the case of $k=3$, since, for one thing, we have to deal
with the simple and gcd parts. Many other problems
arise, but we will explain them as and when we see them.
For now, it suffices to understand the overall picture
and the concept of matchings among the linear forms in $C$. 

We now start by setting some notation and giving
some key definitions.

\subsection{Ideal Matchings}\label{subsec-ideal-mat}

We will use the concept of \emph{ideal matchings} to
develop tools to prove Theorem~\ref{thm-main}. In this
subsection, we provide the necessary definitions and prove
some basic facts about these matchings.

First, we discuss \emph{similarity} between forms and \emph{form ideals}.


\begin{definition} We give several definitions :

\begin{itemize}

\item {\bf [Similar forms]} For any two polynomials $f,g\in R$ we call $f$ {\em similar to} $g$ if there 
is a $c\in\FF^*$ such that $f=cg$. We say $f$ {\em is similar to $g$ mod 
$I$}, for some ideal $I$ of $R$, if there is a $c\in\FF^*$ such that 
$f = cg$ $(\mod I)$. We also denote this by $f \sim g$ $(\mod I)$ or $f$ is $I$-similar to $g$.

\item {\bf [Similar lists]} Let $S_1=(a_1,\ldots,a_d)$ and $S_2=(b_1,\ldots,b_d)$ be two lists of 
linear forms with a bijection $\pi$ between them. $S_1$ and $S_2$ are called 
{\em similar under $\pi$} if for all $i\in[d]$, $a_i$ is similar to $\pi(a_i)$. 
Any two lists of linear forms are called {\em similar} if there exists such a 
$\pi$. Empty lists of linear forms are similar vacuously. For any 
$\ell\in L(R)$ we define the {\em list of forms in $S_1$ similar to $\ell$} as 
the following list (unique upto ordering):
$$simi(\ell,S_1) := (a\in S_1 \mid a\text{ is similar to }\ell)$$
We call {\em $S_1$, $S_2$ coprime lists} if $\forall\ell\in S_1$, 
$\#simi(\ell,S_2)=0$.

\item {\bf [Form-ideal]} A {\em form-ideal} $I$ is the ideal $(I)$ of $R$ generated by some 
nonempty $I\subseteq L(R)$. Note that if $I=\{0\}$ then $a\equiv b (\mod I)$ simply 
means that $a=b$ absolutely. 

\item {\bf [Span $sp(S)$]} For any $S\subseteq L(R)$ we let $sp(S)\subseteq L(R)$ be the 
{\em linear span} of the linear forms in $S$ over the field $\FF$. 

\item {\bf [Orthogonal sets of forms]} Let $S_1,\ldots, S_m$ be sets of linear forms for 
$m\geq2$. We call $S_1,\ldots,S_m$ {\em orthogonal} if for all $m^\prime\in[m-1]$: 
$$sp\big(\bigcup_{j\in[m^\prime]}S_j\big)\cap sp(S_{m^\prime+1}) = \{0\}$$
Similarly, we can define {\em orthogonality of form-ideals} $I_1,\ldots,I_m$. 
\end{itemize}
\end{definition}

We give a few simple facts based on these definitions. It will be helpful
to have these explicitly stated.

\begin{fact}\label{fac-sub-similar}
Let $U, V$ be lists of linear forms and $I$ be a form-ideal. If $U, V$ are 
similar then their sublists $U^\prime:=$ $(\ell\in U\mid\ell\in sp(I))$ and 
$V^\prime:=$ $(\ell\in V\mid\ell\in sp(I))$ are also similar.
\end{fact}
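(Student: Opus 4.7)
The plan is to exhibit a bijection witnessing similarity of $U'$ and $V'$ by restricting the given similarity bijection between $U$ and $V$. Since $U,V$ are similar, there exists a bijection $\pi : U \to V$ and, for each $a \in U$, a scalar $c_a \in \FF^*$ with $\pi(a) = c_a \cdot a$.

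The key observation is that $sp(I)$ is an $\FF$-linear subspace of $L(R)$. Therefore, for any $a \in U$ and $c \in \FF^*$, we have $a \in sp(I)$ if and only if $c \cdot a \in sp(I)$. Applying this to $c = c_a$, we conclude that $a \in sp(I)$ if and only if $\pi(a) \in sp(I)$.

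Hence $\pi$ maps the sublist $U' = (\ell \in U \mid \ell \in sp(I))$ bijectively onto the sublist $V' = (\ell \in V \mid \ell \in sp(I))$, and the restriction $\pi|_{U'}$ is still a similarity (each form is mapped to an $\FF^*$-multiple of itself). By the definition of similar lists, $U'$ and $V'$ are similar under $\pi|_{U'}$. There is no real obstacle here; the only subtlety is to remember that the sublists inherit their ordering from $U$ and $V$, so the restriction of $\pi$ is well-defined as a bijection of lists.
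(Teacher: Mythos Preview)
Your proof is correct and in fact more direct than the paper's. The paper argues via multiplication terms: from similarity of $U,V$ it deduces $M(V)=cM(U)$ for some $c\in\FF^*$, then splits each product as $M(V')M(V\setminus V')=cM(U')M(U\setminus U')$, and uses that no form in $sp(I)$ can divide $M(U\setminus U')$ to conclude $M(V')\mid M(U')$ and vice versa, finally invoking unique factorization in $R$ to get similarity of $U',V'$. Your argument bypasses all of this by simply observing that the similarity bijection $\pi$ sends each $a$ to a nonzero scalar multiple of itself, and scalar multiplication by a unit preserves membership in the subspace $sp(I)$; hence $\pi$ restricts to a similarity between $U'$ and $V'$. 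Your route is shorter and avoids the appeal to unique factorization; the paper's route, while heavier here, is consistent with the $M(\cdot)$-based bookkeeping used elsewhere in the paper.
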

\begin{proof}
If $U, V$ are similar then for some $c\in\FF^*$, $M(V)=c M(U)$. This implies:
$$M(V^\prime)\cdot M(V\setminus V^\prime) =
c M(U^\prime)\cdot M(U\setminus U^\prime)$$
Since elements of $U\setminus U^\prime$ are not in $sp(I)$, 
for any $\ell\in V^\prime$, $\ell$ does not divide $M(U\setminus U^\prime)$.
In other words $M(V^\prime)$ divides $M(U^\prime)$, and vice versa. Thus,
$M(U^\prime), M(V^\prime)$ are similar and hence by unique factorization in 
$R$, lists $U^\prime, V^\prime$ are similar.
\end{proof}

\begin{fact}\label{fac-orth-ideals}
Let $I_1, I_2$ be two orthogonal form-ideals of $R$ and let $D$ be a 
$\sps(k,d)$ circuit such that $L(D)$ has all its linear forms in $sp(I_1)$. 
If $D\equiv0\ (\mod I_2)$ then $D=0$.
\end{fact}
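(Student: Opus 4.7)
My plan is to prove Fact \ref{fac-orth-ideals} by a change-of-basis argument that makes the separation of variables between $I_1$ and $I_2$ explicit. The intuition is that orthogonality guarantees that $sp(I_1)$ and $sp(I_2)$ together span a direct sum inside $L(R)$, so after a suitable linear change of coordinates on $L(R)$, the polynomial $D$ lives in a subring of variables that are disjoint from the variables defining $I_2$, and reduction modulo $I_2$ therefore cannot kill any nonzero polynomial in that subring.

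First I would reduce the generating set of $I_2$ to a basis of $sp(I_2)$. Since $I_2$ is the ideal generated by the set $I_2 \subseteq L(R)$ (the notation conflates the generating set with the ideal), and every element of $I_2$ is an $\FF$-linear combination of any basis $\{z_1,\ldots,z_q\}$ of $sp(I_2)$, we have the ideal equality $I_2 = (z_1,\ldots,z_q)$. Next I would pick a basis $\{y_1,\ldots,y_p\}$ of $sp(I_1)$. By orthogonality, $sp(I_1)\cap sp(I_2)=\{0\}$, so $\{y_1,\ldots,y_p,z_1,\ldots,z_q\}$ is $\FF$-linearly independent in $L(R)$. Extend this to a basis $\{y_1,\ldots,y_p,z_1,\ldots,z_q,w_1,\ldots,w_t\}$ of $L(R)=\FF\cdot x_1+\cdots+\FF\cdot x_n$, and perform the invertible linear change of variables on $R$ that replaces $x_1,\ldots,x_n$ by this new basis. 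This change of variables is an $\FF$-algebra automorphism of $R$, so it preserves both the property $D=0$ and the relation $D\equiv 0\pmod{I_2}$.

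In these new coordinates, every generator of $I_2$ is a linear combination of $z_1,\ldots,z_q$, hence $I_2\subseteq(z_1,\ldots,z_q)R$, and conversely $(z_1,\ldots,z_q)\subseteq I_2$ by construction, so $I_2 = (z_1,\ldots,z_q)R$. Crucially, every linear form $\ell\in L(D)$ lies in $sp(I_1)=sp(y_1,\ldots,y_p)$, so $\ell\in\FF[y_1,\ldots,y_p]$. Consequently, every multiplication term of $D$, and hence $D$ itself, sits in the subring $R':=\FF[y_1,\ldots,y_p]\subseteq R$. Now consider the natural quotient map $\varphi\colon R\to R/I_2$: because $R$ factors as the tensor product $R'\otimes_{\FF}\FF[z_1,\ldots,z_q,w_1,\ldots,w_t]$ and $I_2=(z_1,\ldots,z_q)R$, the restriction $\varphi|_{R'}$ is injective (it is the composition $R'\hookrightarrow R\twoheadrightarrow R/(z_1,\ldots,z_q)$, whose image is the polynomial ring $\FF[y_1,\ldots,y_p,w_1,\ldots,w_t]$, and $R'$ maps isomorphically onto its subring $\FF[y_1,\ldots,y_p]$). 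Therefore $D\equiv 0\pmod{I_2}$ together with $D\in R'$ forces $D=0$ in $R$.

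The only subtlety is the bookkeeping in the change of basis; the core content is that orthogonality of $I_1$ and $I_2$ gives a direct-sum decomposition $sp(I_1)\oplus sp(I_2)$, which translates into disjoint variable sets after an $\FF$-linear automorphism of $R$. Once that setup is in place, the conclusion is immediate because reducing modulo an ideal generated by a subset of the variables cannot annihilate a nonzero polynomial in the complementary variables.
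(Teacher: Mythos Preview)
Your proof is correct and follows essentially the same approach as the paper's own proof: both reduce the generators of $I_1$ and $I_2$ to linearly independent sets (using orthogonality), apply an invertible linear change of variables sending these to disjoint subsets of the coordinate variables, and then observe that a polynomial in one set of variables cannot vanish modulo an ideal generated by a disjoint set of variables. Your write-up is somewhat more explicit about the algebraic bookkeeping (extending to a full basis, the tensor-product description of $R$), but the underlying argument is identical.
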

\begin{proof}
As $I_1, I_2$ are orthogonal we can assume $I_1$ to be 
$\{\ell_1,\ldots,\ell_m\}$ and $I_2$ to be 
$\{\ell_1^\prime,\ldots,\ell_{m^\prime}^\prime\}$ where the ordered set $V:=$ 
$\{\ell_1,\ldots,\ell_m$, $\ell_1^\prime,\ldots,\ell_{m^\prime}^\prime\}$ has
$(m+m^\prime)$ linearly independent linear forms. Clearly, there exists an 
invertible linear transformation $\tau$ on $sp(\{x_1,\ldots,x_n\})$ that maps 
the elements of $V$ bijectively, in that order, to $x_1,\ldots,x_{m+m^\prime}$. 
On applying $\tau$ to the equation $D\equiv0\ (\mod I_2)$ we get: 
$$\tau(D)\equiv0\ (\mod x_{m+1},\ldots,x_{m+m^\prime}), \text{~~where }
\tau(D)\in\FF[x_1,\ldots,x_m].$$ 
Obviously, this means that $\tau(D)=0$ which by the invertibility of $\tau$ 
implies $D=0$.
\end{proof}

We now come to the most important definition of this section. We
motivated the notion of \emph{ideal matchings} in the intuition section.
Thinking of two lists of linear forms as two sets of vertices, a matching
between them signifies some linear relationship between the forms
modulo a form-ideal.

\begin{definition} {\bf [Ideal matchings]}
Let $U, V$ be lists of linear forms and $I$ be a form-ideal. An {\em ideal matching 
$\pi$ between $U, V$ by $I$} is a bijection $\pi$ between lists $U, V$ such 
that: for all $\ell\in U$, $\pi(\ell)=c\ell+v$ for some $c\in\FF^*$ and 
$v\in sp(I)$. The matching $\pi$ is called {\em trivial} if $U, V$ are similar.
\end{definition}

Note that $\pi$ being a bijection and $c$ being nonzero 
together imply that $\pi^{-1}$ can also be viewed as a matching between 
$V, U$ by $I$. We will also use the terminology \emph{$I$-matching
between $U$ and $V$} for the above.
Similarly, an {\em $I$-matching $\pi$ between multiplication terms $f, g$} 
is the one that matches $L(f), L(g)$. (For convenience, we will
just say ``matching" instead of ``ideal matching".)

The following is an easy fact about matchings.

\begin{fact}\label{fac-unscramble}
Let $\pi$ be a matching between lists of linear forms $U,V$ by $I$
and let $U^\prime\subseteq U$, $V^\prime\subseteq V$ be similar sublists. Then 
there exists a matching $\pi^\prime$ between $U,V$ by $I$ such that: $U^\prime$, 
$V^\prime$ are similar under $\pi^\prime$.
\end{fact}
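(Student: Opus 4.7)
The approach is to pass to the quotient ring $R/I$, in which every $I$-matching is simply a similarity bijection, and then invoke a one-line multiset argument on scalar-equivalence classes.

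First observe that $\pi$ being an $I$-matching between $U$ and $V$ is equivalent to the statement that the induced map $\bar\pi:\bar U\to\bar V$ on residues in $R/I$ satisfies $\bar\pi(\bar u)=c_u\,\bar u$ for some $c_u\in\FF^*$, i.e.\ that $\bar U$ and $\bar V$ are similar as lists in $R/I$. The similarity of $U',V'$ in $R$ likewise descends to similarity of $\bar U',\bar V'$ in $R/I$. For a finite list $S$ of linear forms, let $[S]$ denote the multiset whose elements are the $\FF^*$-scalar classes of the residues $\bar\ell$ in $R/I$, $\ell\in S$. Two lists are similar modulo $I$ iff their $[\cdot]$-multisets coincide; hence $[U]=[V]$ and $[U']=[V']$.

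Because $U'\subseteq U$ and $V'\subseteq V$ as sublists, the above equalities may be subtracted as multisets to give
$$[U\setminus U'] \;=\; [U]-[U'] \;=\; [V]-[V'] \;=\; [V\setminus V'].$$
This multiset equality produces a similarity bijection $\bar\rho$ between $\overline{U\setminus U'}$ and $\overline{V\setminus V'}$ in $R/I$, which lifts to a bijection $\rho:U\setminus U'\to V\setminus V'$ with $\rho(u)\equiv c_u u\pmod I$ and $c_u\in\FF^*$. The difference $\rho(u)-c_u u$ is a linear form sitting in $I$; since $I$ is a form-ideal, extracting the linear part of any ideal-expansion shows this difference already lies in $sp(I)$. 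So $\rho$ is a genuine $I$-matching.

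Now define $\pi'$ by gluing $\rho$ to the obvious similarity bijection $\sigma:U'\to V'$ (itself a trivial $I$-matching, since $\sigma(u)=cu+0$): set $\pi'|_{U'}=\sigma$ and $\pi'|_{U\setminus U'}=\rho$. Then $\pi'$ is an $I$-matching between $U$ and $V$ under which $U',V'$ are similar, which is what was wanted. There is no deep obstacle here: once the reduction modulo $I$ is in place, the argument is essentially a one-line multiset subtraction together with the easy observation that any linear form in a form-ideal lies in the linear span of its generators.
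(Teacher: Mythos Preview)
Your argument is correct and is genuinely different from the paper's. The paper proves this fact by an iterative swapping procedure: at each step it locates an $\ell'\in U'$ that $\pi$ fails to send to a similar element of $V'$, finds the preimage $\ell$ of some $\alpha\ell'\in V'$, swaps the two images, and checks by a short calculation that the swapped map is still an $I$-matching; this increases the number of correctly-matched elements of $U'$ by one, so it terminates after at most $\#U'$ swaps.

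Your approach instead reduces modulo $I$, converts the existence of an $I$-matching into the single multiset identity $[U]=[V]$ of scalar classes in $R/I$, subtracts $[U']=[V']$, and rebuilds $\pi'$ from scratch by gluing a fresh $I$-matching on the complements to the given similarity bijection $\sigma:U'\to V'$. This is cleaner and more conceptual: it makes transparent why the fact is really a pigeonhole/multiset statement, and it avoids the swap-and-verify computation entirely. The paper's argument, on the other hand, has the small virtue that it stays in $R$ (never passing to $R/I$) and produces $\pi'$ as a bounded sequence of local modifications of $\pi$; more importantly, the same swap trick is later reused in the paper for \emph{ordered} matchings (Fact~\ref{fac-scaling1}), where one must also track the scaling factor through each swap --- your global multiset argument would need extra bookkeeping to recover that refinement. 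For the present fact, though, your route is at least as good.
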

\begin{proof}
\comment{
Let $\ell^\prime \in U^\prime$ be such that $\pi(\ell^\prime)=d^\prime\ell^\prime+v^\prime$ (for some 
$d^\prime \in\FF^*$ and $v^\prime \in sp(I)$) is not similar to $\ell^\prime$. 
Since $U^\prime$ and $V^\prime$ are similar,
$\#simi(\ell^\prime,U') = \#simi(\ell^\prime,V')$. Therefore, there must
be some form in $V^\prime$ equal to $\alpha\ell^\prime$ ($\alpha \in\FF^*$) such that $\pi^{-1}$ maps
it to a non-similar form.
Let $\ell \in U$ be the pre-image of this form under $\pi$,
so $\pi(\ell)=$ $d\ell+v=$ $\alpha\ell^\prime$
for some $d\in\FF^*$ and $v\in sp(I)$.}

Let $\ell^\prime\in U^\prime$ be such that $\pi(\ell^\prime)=d^\prime\ell^\prime+v^\prime$ 
(for some $d^\prime\in\FF^*$ and $v^\prime\in sp(I)$) is not in $V^\prime$ or is not similar 
to $\ell^\prime$. As 
$V^\prime$ is similar to $U^\prime$ there exists a form equal to $\alpha\ell^\prime$ in 
$V^\prime$, for some $\alpha\in\FF^*$, and $\pi$ being a matching must be mapping 
some $\ell\in U$ to $\alpha\ell^\prime$ in $V^\prime$. Also from the matching 
condition there must be some $d\in\FF^*$ and $v\in sp(I)$ such
that $\pi(\ell)=$ $d\ell+v=$ $\alpha\ell^\prime$.

Now we define a new matching $\widetilde{\pi}$ by flipping the images of $\ell$ and 
$\ell^\prime$ under $\pi$, i.e., define $\widetilde{\pi}$ to be the same as $\pi$ on 
$U\setminus\{\ell,\ell^\prime\}$ and: 
$\widetilde{\pi}(\ell)\stackrel{V}{:=}\pi(\ell^\prime)$ and 
$\widetilde{\pi}(\ell^\prime)\stackrel{V}{:=}\pi(\ell)$. Note that $\widetilde{\pi}$ inherits the 
bijection property from $\pi$ and it is an $I$-matching because: 
$\widetilde{\pi}(\ell^\prime)=\alpha \ell^\prime$ for $\alpha\in\FF^*$ and more importantly, 

$$\widetilde{\pi}(\ell) = \pi(\ell^\prime) = 
d^\prime \ell^\prime + v^\prime = d^\prime \(\frac{d\ell+v}{\alpha}\)+v^\prime=
\(\frac{dd^\prime}{\alpha}\)\ell+\(\frac{d^\prime v}{\alpha}+v^\prime\)$$

The form $(\frac{d^\prime v}{\alpha}+v^\prime)$ is clearly in $sp(I)$. Thus, we have obtained now 
a matching $\widetilde{\pi}$ between $U, V$ by $I$ such that the $\ell^\prime \in U^\prime$ 
is similar to $\widetilde{\pi}(\ell^\prime)\in V^\prime$.

Note that we increased the number of forms in $U'$ that are
matched to similar forms in $V'$. If we find another form
in $U'$ that is not matched to a similar form in $V'$,
we can just repeat the above process.
We will end up with the desired matching 
$\pi^\prime$ in at most $\#U^\prime$ many iterations.
\end{proof}

We are ready to present the most important lemma of this section. The 
following lemma shows that there cannot be too many matchings 
between two given nonsimilar lists of linear forms. It is at the heart of 
our rank bound proof and the reason for the logarithmic dependence of the 
rank on the degree. It can be considered as an algebraic generalization of 
the combinatorial result used by Dvir \& Shpilka (Corollary 2.9 of 
\cite{DS06}).

\begin{lemma}\label{lem-general-DS}
Let $U, V$ be lists of linear forms each of size $d>0$ and 
$I_1,\ldots,I_r$ be orthogonal form-ideals such that for all $i\in[r]$, 
there is a matching $\pi_i$ between $U, V$ by $I_i$. If $r>(\log_2 d+2)$ 
then $U, V$ are similar lists.
\end{lemma}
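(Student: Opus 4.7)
The plan is to prove Lemma \ref{lem-general-DS} by induction on $d$. The base case $d = 1$ is handled by direct case analysis on the scalars: with $U = \{\ell\}$, $V = \{m\}$ and $r \ge 3$ matchings giving $m = c_i \ell + v_i$ ($c_i \in \FF^*$, $v_i \in sp(I_i)$), if any two $c_i, c_j$ coincide then $v_i = v_j \in sp(I_i) \cap sp(I_j) = \{0\}$ by orthogonality, yielding $m$ similar to $\ell$; otherwise the differences $(c_i - c_j)\ell = v_j - v_i$ force $\ell \in sp(I_i \cup I_j)$ for every pair $i, j$, and combining two such memberships with the direct-sum consequence $sp(I_i \cup I_j) \cap sp(I_i \cup I_k) = sp(I_i)$ places $\ell$ (and hence $m$) in a single $sp(I_i)$, after which the remaining matching equations collapse to $m = c_j \ell$ via another use of Fact \ref{fac-orth-ideals}-style intersections.

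For the inductive step, the central tool will be the group $H := \langle \sigma_2, \ldots, \sigma_r \rangle$ of permutations of $U$, where $\sigma_i := \pi_1^{-1} \pi_i$. The key observation is that for every $H$-orbit $C \subseteq U$ we have $\pi_i(C) = \pi_1(C) =: V_C$ for all $i \in [r]$, so every $\pi_i$ restricts to a matching $U_C \to V_C$ by $I_i$ (the ideals remain orthogonal since orthogonality is an intrinsic direct-sum condition). If $U, V$ are not similar, then some orbit $C$ must yield a non-similar pair $(U_C, V_C)$, for otherwise concatenating the per-orbit similarity bijections would make $U, V$ similar. When $H$ is non-transitive, the smallest such $C$ satisfies $|C| \le d/2$; we still have all $r$ matchings on $(U_C, V_C)$ and $r > \log_2 d + 2 > \log_2 |C| + 2$, so the inductive hypothesis supplies a contradiction.

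The main obstacle is the transitive case, where $H$ has only one orbit. My first attempt will be to drop a generator: for each $j \ge 2$, the subgroup $H_j := \langle \sigma_i : 2 \le i \le r,\ i \ne j \rangle$ has orbits respected by the $r - 1$ matchings $\{\pi_i : i \ne j\}$, so if some $H_j$ is non-transitive we can pick a non-similar sub-orbit of size at most $d/2$ and invoke induction with $r - 1 > \log_2(d/2) + 2$ matchings. The truly recalcitrant ``super-transitive'' scenario is when every $H_j$ also remains transitive on $U$; here every generator is redundant for transitivity. In this subcase I would fall back on direct algebraic analysis at a single element: reaching any $\ell' \in U$ from a fixed $\ell \in U$ by a word in $H$ avoiding $\sigma_j$ places $\ell' - c_j \ell$ inside $sp(I_1 \cup \bigcup_{i \ne j} I_i)$, and intersecting these constraints over $j$ via the direct-sum structure sharply restricts where the forms of $U$ can live; iterating the base-case reasoning, this should force enough of the $\pi_i$'s to agree on some $\ell$ that either $\pi_1(\ell) \sim \ell$ (peel off a matched pair and recurse on lists of size $d - 1$) or orthogonality is violated outright. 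Quantitatively exploiting $r > \log_2 d + 2$ to force such an agreement in the super-transitive subcase, rather than merely counting orbits, is what I expect to be the most technically demanding piece of the argument.
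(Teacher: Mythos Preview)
Your orbit-decomposition route is genuinely different from the paper's, and the base case together with the non-transitive branches are essentially sound. (The one sloppiness---whether the non-similar orbit can always be taken of size at most $d/2$---is fixable: when $H$ itself is intransitive, every orbit already has size $<d$ and carries all $r$ matchings, so you may induct directly; when only $H_j$ is intransitive and the small orbit happens to give a \emph{similar} pair, invoke Fact~\ref{fac-unscramble} to realign $\pi_j$ on it and induct with all $r$ matchings on the strictly smaller complement.)

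The genuine gap is the super-transitive subcase, which is not a corner case and which you yourself flag as unfinished. Your sketch there never uses $r>\log_2 d+2$ quantitatively: the relations $\ell'-c_j\ell\in sp\bigl(\bigcup_{i\ne j}I_i\bigr)$ hold for every $\ell'\in U$ regardless of $r$ or $d$, and over an infinite field the scalars $c_j$ may all be distinct, so no pigeonhole on them is available. Intersecting these coset conditions only recovers the affine subspace that $\ell'$ already witnesses; I do not see how it forces any $\pi_i(\ell)\sim\ell$. Even granting that, ``peel off $(\ell,\pi_1(\ell))$ and recurse on size $d-1$'' does not preserve the remaining $\pi_i$ in general, so that step is also unjustified.

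The paper bypasses the transitive/intransitive dichotomy entirely. It first applies Fact~\ref{fac-unscramble} once to strip off a maximal similar part, reducing to \emph{coprime} sublists $U',V'$ of size $d'$ (no form of $U'$ similar to any form of $V'$). Then a direct doubling argument finishes: start with one ``green'' vertex $\ell\in U'$ and in round $i$ enlarge the current span $B$ by $I_i$. For a green $\ell'$, its $I_i$-neighbour is $c\ell'+v$ with $v\in sp(I_i)$ necessarily \emph{nonzero} by coprimality; were the neighbour already green, $v$ would lie in $sp(B)\cap sp(I_i)=\{0\}$ by orthogonality. Hence the green set doubles in all but at most two rounds, giving $r-2\le\log_2 d'$ immediately. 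This coprime reduction plus doubling is exactly the missing idea your super-transitive case needs, and it also makes all the orbit bookkeeping unnecessary.
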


Before giving the proof, let us first put it in the context 
of our overall approach. In the sketch that we gave for $k=3$,
at each step, we were generating orthogonal matchings
between two terms. For each orthogonal matchings we got,
we colored one linear form red (added one form to our basis)
and \emph{doubled} the number of green forms (doubled the number of forms 
in the circuit that are in the span of the basis). This showed that
there is a logarithmic-sized basis for all $L(C)$.
If we take the contrapositive of this, we get that there
\emph{cannot} be too many orthogonal matchings between
two (nonsimilar) lists of forms. For dealing
with larger $k$, it will be convenient
to state things in this way.

\begin{proof}
Let $U_1\subseteq U$ be a sublist such that: there exists a sublist 
$V_1\subseteq V$ similar to $U_1$ for which $U^\prime:=U\setminus U_1$ and 
$V^\prime:=V\setminus V_1$ are coprime lists. Let $U^\prime$, $V^\prime$ 
be of size $d^\prime$. If $d^\prime=0$ then $U, V$ are indeed similar and
we are done already. So assume that $d^\prime>0$. By the hypothesis and 
Fact \ref{fac-unscramble}, for all $i\in[r]$, there exists a matching 
$\pi_i^\prime$ between $U,V$ by $I_i$ such that: $U_1$, $V_1$ are similar 
under $\pi_i^\prime$ {\em and} $\pi_i^\prime$ is a matching between 
$U^\prime$, $V^\prime$ by $I_i$. Our subsequent argument will only consider 
the latter property of $\pi_i^\prime$ for all $i\in[r]$.

Intuitively, it is best to think of the various $\pi_i^\prime$s as bipartite matchings.
The graph $G=(U',V',E)$ has vertices labelled with
the respective form. For each $\pi_i^\prime$ and each $\ell \in U'$,
we add an (undirected) edge tagged with $I_i$ 
between $\ell$ and $\pi_i^\prime(\ell)$. There may be many
tagged edges between a pair of vertices\footnote{It can be shown,
using the orthogonality of the $I_i$'s, 
that an edge can have at most {\em two} distinct tags.}.
We call $\pi_i^\prime(\ell)$ the $I_i$-neighbor of $\ell$
(and vice versa, since the edges are undirected).
Abusing notation, we use \emph{vertex} to refer
to a form in $U^\prime \cup V^\prime$.
We will denote $\bigcup_{j \leq i} I_i$ by $J_i$.

We will now show that there cannot be more than 
$(\log_2 d + 2)$ such perfect matchings in $G$. 
The proof is done by following 
an iterative process that has $r$ phases, one for each $I_i$. 
This is essentially the coloring process that we described
earlier. We maintain a partial basis for the forms in $U' \cup V'$
which will be updated iteratively.
This basis is kept in the set $B$. Note that although we only
want to span $U' \cup V'$, we will use forms in the various $I_i$'s
for spanning.

We start with empty $B$ and initialize by adding some $\ell \in U'$ to $B$. 
In the $i$th round, we will add all forms in 
$I_i$ to $B$. All forms of $U' \cup V'$ in $sp(\{\ell\} \cup J_i)$
are now spanned.
We then proceed to the next round.
To introduce some colorful terminology, a \emph{green}
vertex is one that is in the set $sp(B)$ (a form in $(U' \cup V') \cap sp(B)$).
Here is a nice fact : at the end of a round,
the number of green vertices in $U'$ and $V'$ are the same.
Why? All forms of $I_1$ are in $B$, at the end of any round.
Let vertex $v$ be green, so $v \in sp(B)$.
The $I_1$-neighbor of $v$ is a linear combination
of $v$ and $I_1$. Therefore, the neighbor is in $sp(B)$ and is colored green.
This shows that the number of green vertices in $U$ is equal to the number of 
those in $V$.


Let 
$i_0\in[r]$ be the least index such that $\{\ell\}$, $I_1,\ldots,I_{i_0}$ 
are not orthogonal, if it does not exist then set $i_0:=r+1$. Now we have 
the following easy claim.

\begin{claim}\label{clm-lgdplus2}
The sets $\{\ell\}$, $I_1,\ldots,I_{i_0-1}$ are orthogonal and the sets:
$$\{\ell\}\cup J_{i_0}, I_{i_0+1},\ldots,I_r$$
are orthogonal.
\end{claim}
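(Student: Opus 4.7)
My plan is to handle the two parts of the claim separately, with the first following immediately from the definition of $i_0$ and the second reducing, after one key observation, to the orthogonality of $I_1,\ldots,I_r$ that was given as a hypothesis of the lemma.

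The first part is essentially by definition of $i_0$. Since $i_0$ is the \emph{least} index such that $\{\ell\}, I_1,\ldots,I_{i_0}$ fails to be orthogonal (or $i_0 := r+1$ if no such index exists), minimality immediately gives that $\{\ell\}, I_1,\ldots,I_{i_0-1}$ is orthogonal. This needs no further work.

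For the second part, if $i_0 = r+1$ there is nothing to prove (orthogonality of a single set is vacuous). So assume $i_0 \le r$. The key step I would carry out is to show that $\ell \in sp(J_{i_0})$. Since $\{\ell\}, I_1,\ldots,I_{i_0}$ is not orthogonal but $\{\ell\}, I_1,\ldots,I_{i_0-1}$ is, the only place the orthogonality condition can fail is at the last step, so $sp(\{\ell\} \cup J_{i_0-1}) \cap sp(I_{i_0}) \ne \{0\}$. Pick a nonzero $v$ in this intersection and write $v = \alpha \ell + u$ with $u \in sp(J_{i_0-1})$. If $\alpha = 0$, then $v$ would be a nonzero element of $sp(J_{i_0-1}) \cap sp(I_{i_0})$, contradicting the orthogonality of $I_1,\ldots,I_{i_0}$ (which is inherited from the orthogonality of $I_1,\ldots,I_r$). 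Hence $\alpha \ne 0$ and $\ell = \alpha^{-1}(v - u) \in sp(J_{i_0})$.

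Once $\ell \in sp(J_{i_0})$ is established, $sp(\{\ell\} \cup J_{i_0}) = sp(J_{i_0}) = sp(I_1 \cup \cdots \cup I_{i_0})$, so the required orthogonality of $\{\ell\} \cup J_{i_0}, I_{i_0+1}, \ldots, I_r$ is literally a restatement of a consequence of the orthogonality of $I_1, \ldots, I_r$: for each $m' \ge 1$, $sp(\{\ell\} \cup J_{i_0} \cup I_{i_0+1} \cup \cdots \cup I_{i_0+m'-1}) \cap sp(I_{i_0+m'}) = sp(I_1 \cup \cdots \cup I_{i_0+m'-1}) \cap sp(I_{i_0+m'}) = \{0\}$. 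The only nontrivial step in the whole proof is the deduction $\ell \in sp(J_{i_0})$; I do not expect any real obstacle, since everything is a careful but mechanical unpacking of the definition of orthogonality together with the minimality of $i_0$.
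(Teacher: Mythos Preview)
Your proof is correct and follows essentially the same approach as the paper: both use minimality of $i_0$ for the first part, and for the second part both deduce $\ell \in sp(J_{i_0})$ from the fact that $I_1,\ldots,I_{i_0}$ are orthogonal while $\{\ell\},I_1,\ldots,I_{i_0}$ are not, then reduce to the given orthogonality of $I_1,\ldots,I_r$. You spell out the deduction $\ell \in sp(J_{i_0})$ more carefully than the paper (which simply asserts it), but the argument is the same.
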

\claimproof{clm-lgdplus2}{
The ideals $\{\ell\}$, $I_1,\ldots,I_{i_0-1}$ are orthogonal by the minimality of 
$i_0$.

As $I_1,\ldots,I_{i_0}$ are orthogonal but $\{\ell\}$, 
$I_1,\ldots,I_{i_0}$ are not orthogonal we deduce that 
$\{\ell\}\in sp(J_{i_0})$. Thus, 
$\{\ell\} \cup sp(J_{i_0})=$ $sp(J_{i_0})$ which is 
orthogonal to the sets $I_{i_0+1},\ldots,I_r$ by the orthogonality of 
$I_1,\ldots,I_r$.
}

We now show that the green vertices double in at least $(r-2)$ many rounds.
\begin{claim} \label{clm-doubling} For $i\not\in \{1,i_0\}$, the number of green
vertices doubles in the $i$th round.
\end{claim}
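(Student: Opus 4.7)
The plan is to analyze how the set of green vertices evolves when we add $I_i$ to the basis $B$. Let $g_U$ and $g_V$ denote the numbers of green vertices in $U'$ and $V'$ respectively just before round $i$. The first sub-step is to observe the balance $g_U = g_V$: for $i \geq 2$ we already have $I_1 \subseteq B$, and since $\pi_1'$ is an $I_1$-matching, any green $v \in U'$ forces $\pi_1'(v) = cv + u$ with $u \in sp(I_1) \subseteq sp(B)$ to be green too (and symmetrically for $V'$). So writing $g := g_U = g_V$, there are exactly $2g$ green vertices before round $i$.

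Next I would show the key structural step: for $i \notin \{1,i_0\}$, the $I_i$-neighbor of every green vertex is \emph{not} already green. Suppose $v \in U'$ is green and $w = \pi_i'(v) = cv + u$ with $c \in \FF^*$ and $u \in sp(I_i)$. If $w$ were also green, then
$$u \;=\; w - cv \;\in\; sp(\{\ell\} \cup J_{i-1}) \cap sp(I_i).$$
Now I invoke Claim~\ref{clm-lgdplus2}: if $i < i_0$, then $\{\ell\}, I_1, \ldots, I_{i_0-1}$ are orthogonal, so this intersection is $\{0\}$; if $i > i_0$, then $\{\ell\} \cup J_{i_0}, I_{i_0+1}, \ldots, I_r$ are orthogonal, and since $\{\ell\} \cup J_{i-1} = (\{\ell\} \cup J_{i_0}) \cup I_{i_0+1} \cup \cdots \cup I_{i-1}$, again the intersection is $\{0\}$. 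Either way $u = 0$, so $w = cv$; but then $v \in U'$ and $w \in V'$ are similar, contradicting coprimality of $U', V'$. The symmetric argument handles green $v \in V'$.

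The final sub-step is bookkeeping. Each green $v \in U'$ has $\pi_i'(v) \in V'$ lying in $sp(B \cup I_i)$, hence green after round $i$; by the previous paragraph these $g$ vertices of $V'$ were previously uncolored, and being images of distinct vertices under the bijection $\pi_i'$ they are distinct among themselves. So $V'$ gains at least $g$ new green vertices. Symmetrically, $U'$ gains at least $g$. The total green count therefore jumps from $2g$ to at least $4g$, which is exactly the doubling claimed.

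The main obstacle is step two — the non-overlap property. It is the only place where coprimality of $U', V'$ meets the orthogonality of the $I_j$'s, and it is precisely why one must exclude $i = i_0$ (where $\{\ell\}$ falls into $sp(J_{i_0})$ and the intersection argument breaks) and $i = 1$ (where the balance between $U'$ and $V'$ hasn't yet been established). Everything else is linear-algebraic bookkeeping on top of this structural fact.
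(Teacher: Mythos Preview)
Your proof is correct and follows essentially the same approach as the paper: the heart of both arguments is showing that for $i \notin \{1,i_0\}$ the $I_i$-neighbor of any green vertex cannot already be green, by combining coprimality of $U',V'$ (forcing $u \neq 0$) with the orthogonality conclusions of Claim~\ref{clm-lgdplus2} (forcing $u = 0$). Your version is somewhat more explicit than the paper's --- you re-derive the balance $g_U = g_V$ inside the claim (the paper records it just beforehand), you split the orthogonality argument into the cases $i < i_0$ and $i > i_0$, and you spell out the bijection bookkeeping that turns ``each green vertex gains a fresh neighbor'' into an actual doubling --- but these are elaborations of the same idea, not a different route.
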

\claimproof{clm-doubling} {
Let $\ell'$ be a green vertex, say in $U'$,
at the end of the $(i-1)$th round ($B = \{\ell\} \cup J_{i-1}$).
Consider the $I_i$-neighbor of $\ell'$. This
is in $V'$ and is equal to $(c\ell' + v)$ where $c\in\FF^*$ and $v$ 
is a {\em nonzero} element in $sp(I_i)$ (this is because $U', V'$ are coprime).
If this neighbor is green, then $v$ would be a linear combination
of two green forms, implying $v \in sp(B)$. But by Claim~\ref{clm-lgdplus2}, 
$I_i$ is orthogonal to $B$, implying $v\in sp(B)\cap sp(I_i)=\{0\}$ which is a
contradiction. Therefore, the $I_i$-neighbor
of any green vertex is \emph{not} green. On adding $I_i$ to $B$,
all these neighbors will become green. This completes the proof.
}

We started off with one green vertex $\ell$, and $U'$, $V'$ each 
of size $d'$. This doubling can happen at most $\log_2 d'$ times,
implying that $(r - 2) \leq \log_2 d'$.

\end{proof}
\begin{rem}
The bound of $r=\log_2 d+2$ is achievable by 
lists of linear forms inspired by Section \ref{sec-high-rank-eg}. Fix an odd
$s$ and define:
$$U:=\left\{ (b_1x_1+\cdots+b_{s-1}x_{s-1}+x_s)\ \mid \ b_1,\ldots,b_{s-1}\in
\{0,1\}\text{ s.t. } b_1+\cdots+b_{s-1}\text{ is even}\right\}$$ 
$$V:=\left\{ (b_1x_1+\cdots+b_{s-1}x_{s-1}+x_s)\ \mid \ b_1,\ldots,b_{s-1}\in
\{0,1\}\text{ s.t. } b_1+\cdots+b_{s-1}\text{ is odd}\right\}$$
It is easy to see that over rationals, $\#U=\#V=2^{s-2}$ and for all 
$i\in[s-1]$, there is a matching between $U, V$ by $(x_i)$, furthermore, there
is a matching by $(x_1+\cdots+x_{s-1}+2x_s)$. Thus there are $(log_2|U|+2)$ many
orthogonal matchings between these nonsimilar $U, V$; showing that our Lemma is 
tight.
\end{rem}

\subsection{Ordered Matchings and Simple Parts of Circuits}\label{subsec-ordered-mat}

Before we delve into the definitions and proofs, let us motivate
them by an intuitive explanation.

\subsubsection{Intuition}

Our main goal is to deal with the case $k > 3$. 
The overall picture is still the same. We keep updating a partial
basis $S$ for $L(C)$. This process goes through various \emph{rounds},
each round consisting of \emph{iterations}.
At the end of each round, we obtain a form-ideal $I$ that
is orthogonal to $S$. In the first iteration of a round, we start by 
choosing a form $\ell_1$ in $L(T_1)$ that
is not in $sp(S)$, and adding it to $I$. We look at $C(\mod \ell_1)$ in the next
iteration, which is obviously
an identity, and try to repeat this step. The top fan-in has gone
down by at least one, or in other words, some multiplication terms have become
identically zero $(\mod \ell_1)$.
We will say that the other terms have \emph{survived}.
The major obstacle to proceeding is that our circuit is not simple 
any more, because there \emph{can} be common factors among multiplication terms 
modulo $\ell_1$.
Note how this seems to be a difficulty, since it appears
that our matchings will not give us a proper
handle on these common factors.
Suppose that form $v$ is now a common factor.
That means, in every surviving term, there is
a form that is $v$ modulo $\ell_1$. So these forms
can be $\ell_1$-matched to each other! We have converted
the obstacle into some kind of a partial matching, which
we can hopefully exploit.

Let us go back to $C(\mod \ell_1)$. Let us remove
all common factors from this circuit. This stripped
down identity circuit is the \emph{simple} part, denoted
by $sim(C\mod \ell_1)$. The removed portion, called the $gcd$ part,
is referred to as $gcd(C\mod \ell_1)$. By the above observation,
the $gcd$ part has $\ell_1$-matchings. A key observation
is that all the forms in the $gcd$ part are \emph{not}
similar to $\ell_1$. This is because we were only
looking at nonzero terms in $C(\mod \ell_1)$.
Having (somewhat) dealt with $gcd(C\mod \ell_1)$
by finding $I$-matchings, let us focus on the smaller circuit $sim(C\mod \ell_1)$

We try to find an $\ell_2 \in L(sim(C\mod \ell_1))$
that is not in $sp(S \cup \{\ell_1\})$. Suppose we
can find such an $\ell_2$. Then, we add $\ell_2$ to $I$
and proceed to the next iteration. In a given iteration,
we start with a form-ideal $I$, and a circuit $sim(C\mod I)$.
We find a form $\ell \in L(sim(C\mod I)) \backslash sp(S \cup I)$.
We add $\ell$ to $I$ (for convenience, let us set $I' = I \cup \{\ell\}$)
and look at the $C(\mod I')$.
We now have new terms in the $gcd$ part, which we can match
through $I'$-matchings. As we observed earlier, all the terms
that have forms in $I'$ are removed, so the
terms we match here are all nonzero modulo $I'$.
We remove the $gcd$ part to get $sim(C\mod I')$,
and go to the next iteration with $I'$ as the new $I$.
When does this stop? If there is no $\ell$
in $L(sim(C\mod I)) \backslash sp(S \cup I)$,
then this means that all of $L(sim(C\mod I))$
is in our current span. So we happily stop here
with all the matchings obtained from the $gcd$ parts.
Also, if the fan-in reaches $2$, then we can imagine
that the whole circuit is itself in the $gcd$ portion.
At each iteration, the fan-in goes down by at least one, so we can have
at most $(k-2)$ iterations in a round, hence the $I$ in any round is generated by
at most $(k-2)$ forms.
When we finish a round obtaining an ideal $I$, there are some multiplication terms 
in $C$ that are nonzero modulo $I$ {\em after} 
the {\em gcd} parts in the various iterations are removed from these terms. 
These we shall refer to as constituting the 
\emph{blocking subset} of $[k]$, for that round.

The way we prove rank bounds
is by invoking Lemma~\ref{lem-general-DS}. 
Each round constructs a new orthogonal
form ideal. At the end of a round, we
have a set $S$, which is a partial basis. 
If $S$ does not cover all of $L(C)$, then we use the above process
(of iterations) to generate a form-ideal $I$ orthogonal to $S$.
Consider two terms $T_a$ and $T_b$ that survive this
process (mod $I$). At each stage, when we add a form to $I$,
we remove forms from $T_a$ and $T_b$, $I$-matching them.
When we stop with our form-ideal $I$, we can
think of $T_a$ and $T_b$ as split into two parts :
one having forms from $sp(S \cup I)$, and the other
which is $I$-matched. For each orthogonal form-ideal we generate,
we match subsets of terms. We use Lemma~\ref{lem-general-DS}
to tell us that we cannot have too many such form-ideals,
which leads to the rank bound.

\subsubsection{Definitions}

We start with looking at the particular kind of matchings that
we get. Take two terms $T_a$ and $T_b$ that survive a round,
where we find the form-ideal $I$ generated by $\{\ell_1,\ell_2,\cdots,\ell_r\}$.
At the end of the first iteration, we add $\ell_1$ to $I$.
No form in $L(T_a) \cup L(T_b)$ can be $0(\mod \ell_1)$.
We match some forms in $T_a$ to $T_b$ via $\ell_1$-matchings.
They are removed, and then we proceed to the next iteration.
We now match some forms via $sp(\{\ell_1,\ell_2\})$ matchings
and none of these forms are in this span. So in each iteration,
the forms that are matched (and then removed) are non-zero
mod the partial $I$ obtained by that iteration.
We formalize this as an \emph{ordered matching}.

\begin{definition} \label{def-ordered} {\bf [Ordered matching]}
Let $U, V$ be lists of linear forms and an ordered set 
$I=\{v_1,\ldots,v_i\}$ be a form-ideal having $i\geq1$ linearly independent 
linear forms. A matching $\pi$ between $U, V$ by $I$ is called an {\em ordered 
$I$-matching} if : 

Let $v_0$ be zero. For all $\ell\in U$, $\pi(\ell)=$ $(c\ell+w)$ where $c\in\FF^*$,
and $w\in sp(v_0,\ldots,v_j)$ for some $j$ satisfying $\ell \notin sp(v_0,\ldots,v_j)$.
\end{definition}

We add the zero element $v_0$, just to deal with similar forms in $U$ and $V$.
Note that the inverse bijection $\pi^{-1}$ is also an ordered matching 
between $V, U$ by $I$. It is also easy to see that if $\pi_1$ and $\pi_2$ 
are ordered matchings between lists $U_1, V_1$ and lists $U_2, V_2$ 
respectively by the same ordered form-ideal $I$ then their {\em disjoint 
union}, $\pi_1\sqcup\pi_2$, is an ordered matching between lists
$U_1\cup U_2$, $V_1\cup V_2$ by $I$.

We will stick to the notation in Definition~\ref{def-ordered}.
For convenience, let $sp_j:=sp(v_0,\cdots,v_j)$.
Let $\pi(\ell) = d\ell + w$, where $w \in sp_j$ but $\ell\not\in sp_j$ then 
the constant $d$ is unique. If there
were two such different constants, say $d$ and $d^\prime$, then both $(\pi(\ell)-d\ell)$ 
and $(\pi(\ell)-d^\prime\ell)$ would be in $sp_j$ implying that 
$(d-d^\prime)\ell\in sp_j$. That contradicts $\ell\not\in sp_j$.
Thus for a fixed $\ell$ and an ordered matching $\pi$, $d$ is uniquely determined. Keeping
the notation above, we can well define :


\begin{definition} {\bf [Scaling factor]} The scaling factor
of an ordered matching $\pi$ between $U$ and $V$ is denoted by $sc(\pi)$.
For each $\ell \in U$, let $d_\ell$ be the unique constant
such that $\pi(\ell) = d_\ell \ell + w$, where $w \in sp_j$ but $\ell\not\in sp_j$.
Then $sc(\pi) := \prod_{\ell\in U} d_\ell$.
For empty $U$, $sc(\pi)$ is set to be $1$.	
\end{definition}


\begin{definition} {\bf [Subcircuits and regular circuits]} For non-empty 
$Q \subseteq [k]$, the {\em subcircuit} $C_Q$ of a $\sps(k,d)$ circuit $C$ is the sum 
$\sum_{j\in Q}T_j$. For a 
form-ideal $I$ we call {\em $C_Q$ regular mod $I$} if $\forall q\in Q$, 
$T_q \not\equiv0\ (\mod I)$. 
We will denote the constant factor in the multiplication term $T_q$ by $\alpha_q\in\FF^*$, 
thus $T_q = \alpha_q M(L(T_q))$.
\end{definition}

We are now ready to define the $gcd$ and $sim$ parts
of a subcircuit. Although the ideas are quite simple
and intuitive, we have to be careful in dealing
with constant factors. Much of this notation has been
introduced for rigorous definitions.
Take a subcircuit $C_Q$ that is regular mod $I$ as well as an identity mod $I$. 
A maximal list of forms, say $U$, that divides $T_q$, for all $q \in Q$,
is called the $gcd$ of $C_Q (\mod I)$. In every $T_q$, there
is a list $U_q$ of forms that are $I$-similar to $U$.
Therefore, we have $I$-matchings between $U$ and $U_q$.
This is the \emph{gcd data of $C_Q$ modulo $I$}, and
represents that various matchings that we will later
exploit. If we remove $U_q$ from each $T_q$, then
(by accounting for constants carefully) we get
a simple $(\mod I)$ identity, the $sim$ part of $C_Q(\mod I)$.
We formalize this below.
\\

Let $C_Q$ be regular modulo $I$.
Fix a 
$q_1$ in $Q$. Let $U$ be a maximal sublist of $L(T_{q_1})$ such that 
$M(U)$ divides $T_q$ modulo $I$ for all $q\in Q$. Since $R/I$ is 
isomorphic to a polynomial ring, the nonconstant polynomials in $R/I$ 
satisfy unique factorization property, i.e. any polynomial in $R$ that 
is nonconstant modulo $I$ uniquely factors modulo the ideal $(I)$ into 
polynomials irreducible modulo $I$. Since $C_Q$ is regular modulo 
$I$ {\em and} $U\subseteq L(T_{q_1})$ is a maximal list such that 
$\forall q\in Q$, $M(U)\mid T_q (\mod I)$:
\begin{itemize}
\item $M(U)$ is a gcd of the polynomials $\{T_q\mid q\in Q\}$ modulo 
the ideal $(I)$.
\item For all $q\in Q$, there exists a sublist $U_q\subseteq L(T_q)$
and a $c_q\in\FF^*$ such that $M(U_q)\equiv c_q\cdot M(U)$ 
$(\mod I)$. By unique factorization in $R/I$ and regularity of $C_Q$ mod 
$I$ this gives an ordered matching $\pi_q$ between $U, U_q$ by $I$. Also, 
by the definition of scaling factor of a matching, $\pi_q$ satisfies: 
$\forall q\in Q$, $M(U_q)\equiv sc(\pi_q)\cdot M(U)$ $(\mod I)$.
\end{itemize}
Note that given $C_Q$ and $I$ there are many possibilities to 
choose the lists $U$ and $\{U_q\mid q\in Q\}$ but they are all 
uniquely determined upto similarity modulo the ideal $(I)$ and that 
will be good enough for our purposes. So we choose them in some 
way, say the lexicographically smallest one unless specified otherwise, 
and define the gcd data.
Using the gcd data of $C_Q$ mod $I$ we can extract out a smaller 
circuit from $C_Q$ which we call the simple part.

\begin{definition} {\bf [gcd and sim parts]} 
The {\em gcd data of $C_Q$ modulo $I$} is the following set 
of $\#Q$ matchings:
\begin{equation}\label{eqn-gcd-data}
	\vec{gcd}(C_Q \mod I) := \left\{(\pi_q,U,U_q)\mid q\in Q \right\}
\end{equation}
The {\em gcd of $C_Q(\mod I)$} is just $gcd(C_Q\mod I):=M(U)$.
The {\em simple part of $C_Q$ mod 
$I$} is the circuit:
$$sim(C_Q\mod I) := \sum_{q\in Q}sc(\pi_q)\alpha_q\cdot M(L(T_q)\setminus U_q) $$
\end{definition}

Before a round, we have a partial basis $S$.
At the end of a round, we produce a form-ideal $I$ that is orthogonal
to $S$. We call this a \emph{useful ideal}.
Let $Q \subset [k]$ be such that all $T_q$, $q \in Q$ survive (mod $I$).
This is called the \emph{blocking subset}.
For each such $q$, there are a list of forms $V_q \subset L(T_q)$
that are mutually matched via ordered $I$-matchings (these are really a collection 
of $gcd$ datas). This is called the \emph{matching data}.
Even after we remove $V_q$ from each term $T_q$ (carefully accounting for constants, 
as explained above), we still have an identity mod $I$. All forms of this identity are 
in $sp(S \cup I)\setminus sp(I)$, since we assume that the round has ended.
Furthermore by rearranging linear forms, all $V_q$'s can be made disjoint to 
$sp(S \cup I) \backslash sp(I)$. Therefore this round partitions the $L(T_q)$ into $V_q$ 
and $L(T_q)\cap(sp(S \cup I) \backslash sp(I))$ (for all $q\in Q$).
These end-of-a-round properties are formalized by the following definition.

\begin{definition} {\bf [Useful ideals, blocking subsets, and matching data]}
Let $C = \sum_{j \leq k} T_j$, $T_j=\alpha_j$ $M(L(T_j))$.
The set $S\subseteq L(R)$ and $I$ is an ordered form-ideal orthogonal to $S$. 
We call $I$ {\em useful in $C$ wrt $S$} if $\exists Q\subset[k]$, $1<\#Q<k$ 
with the following properties :

For all $q\in Q$, let $V_q$ be $L(T_q) \backslash (sp(S\cup I)\setminus sp(I))$.
(Therefore, $L(T_q)\setminus V_q \subset sp(S\cup I)\setminus sp(I)$.)

\begin{itemize}
\item There exists a list of linear forms $V$ such that for all $q\in Q$,
there is an ordered $I$-matching $\tau_q$ between $V, V_q$.
\item The circuit $\sum_{q\in Q}sc(\tau_q)\alpha_q\cdot M(L(T_q)\setminus V_q)$ is 
a regular identity modulo $I$.
\end{itemize}
Such a $Q$ we call a {\em blocking subset} of $C,S,I$. By {\em matching 
data} of $C,S,I,Q$ we will mean the set:
$$mdata(C,S,I,Q):=\left\{(\tau_q,V,V_{q})\mid q \in Q\right\}$$

We will call $mdata(C,S,I,Q)$ {\em trivial} if the 
lists $V_q$, $q\in Q$, are all mutually similar. 
\end{definition} 

From the matching data, we will exploit the fact that 
for each pair $q_1, q_2 \in Q$, there is an
ordered $I$-matching between $V_{q_1}$ and $V_{q_2}$.
Nonetheless, we will represent these $\#Q$ matchings
via $V$ because it will be more convenient to deal
with the intermediate $gcd$ parts while we are building $I$.

\subsubsection{Basic facts}

In this subsection, we prove some basic facts about ordered matchings,
scaling factors and $gcd$ and $sim$ parts of a circuit. These facts are
not difficult to prove, but it will be helpful later to have them.

The following two properties are immediate from the definition of scaling
factor.
\begin{fact}\label{fac-inv-and-union}
Let $\pi_1$ and $\pi_2$ be ordered $I$-matchings between lists $U_1, V_1$ 
and lists $U_2, V_2$ respectively.
Then $sc(\pi_1^{-1})=sc(\pi_1)^{-1}$ and $sc(\pi_1\sqcup\pi_2)=$ 
$sc(\pi_1)\cdot sc(\pi_2)$.
\end{fact}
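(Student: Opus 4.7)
The plan is to prove both equalities by direct unpacking of the definition of $sc(\cdot)$ as a product $\prod_{\ell\in U} d_\ell$ of the unique scaling constants attached to each individual form in the domain.

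For the disjoint union equality $sc(\pi_1 \sqcup \pi_2) = sc(\pi_1)\cdot sc(\pi_2)$, I would first observe that the definition of the scaling constant $d_\ell$ associated to a form $\ell$ in the domain of an ordered $I$-matching depends only on the image $\pi(\ell)$ and the fixed ordered form-ideal $I$, not on the rest of the matching. Since $\pi_1 \sqcup \pi_2$ sends each $\ell \in U_1$ to $\pi_1(\ell)$ and each $\ell \in U_2$ to $\pi_2(\ell)$, the constants $d_\ell$ arising in $\pi_1 \sqcup \pi_2$ are exactly those arising from $\pi_1$ on $U_1$ and from $\pi_2$ on $U_2$. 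The product over $U_1 \cup U_2$ then factors as the product of the two separate products, which is the desired identity. This step is essentially bookkeeping.

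For the inverse equality $sc(\pi_1^{-1}) = sc(\pi_1)^{-1}$, the key is to identify the scaling constant of $\pi_1^{-1}$ at an image point $\ell' = \pi_1(\ell)$. Writing $\pi_1(\ell) = d_\ell \ell + w$ with $w \in sp_j$ and $\ell \notin sp_j$, we can solve for $\ell$ to get
\[
\pi_1^{-1}(\ell') = \ell = d_\ell^{-1}\ell' + (-d_\ell^{-1}w),
\]
where $-d_\ell^{-1}w$ is still in $sp_j$. I then need to check that $\ell' \notin sp_j$, which follows because if $\ell' \in sp_j$ then $d_\ell \ell = \ell' - w \in sp_j$ and hence $\ell \in sp_j$, contradicting the hypothesis. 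Thus the display above witnesses $\pi_1^{-1}$ as an ordered $I$-matching between $V_1$ and $U_1$, and by the uniqueness of the scaling constant (proved immediately after Definition \ref{def-ordered}) the scaling constant for $\pi_1^{-1}$ at $\ell'$ is $d_\ell^{-1}$. Multiplying over all $\ell' \in V_1$, which is in bijection with $U_1$ under $\pi_1$, gives $\prod_{\ell \in U_1} d_\ell^{-1} = sc(\pi_1)^{-1}$.

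There is no real obstacle here; both facts follow from the definition once the uniqueness of the local scaling constant is invoked correctly. The only subtle point, and the one I would be most careful about, is checking that the subspace $sp_j$ that worked for $\ell$ in the matching $\pi_1$ still witnesses the ordered-matching condition for $\ell' = \pi_1(\ell)$ under $\pi_1^{-1}$, i.e. that $\ell' \notin sp_j$; this is what makes the inverse actually an ordered $I$-matching with the expected scaling constants.
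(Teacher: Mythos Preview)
Your proof is correct and is exactly the direct unpacking of the definition that the paper intends; the paper itself does not spell out a proof, stating only that both equalities are ``immediate from the definition of scaling factor.'' Your careful verification that $\ell' = \pi_1(\ell) \notin sp_j$ (so that the same index $j$ witnesses the ordered-matching condition for $\pi_1^{-1}$ at $\ell'$, and hence by uniqueness the scaling constant there is $d_\ell^{-1}$) is precisely the detail one needs to make ``immediate'' rigorous.
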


Thus, ordered matchings have inverses, have a union and the following fact 
shows that they can also be composed.
\begin{fact}\label{fac-compose}
Let $\pi_1$ and $\pi_2$ be ordered matchings between $U_1, V$ and 
$V, U_2$ respectively by the same ordered form-ideal $I=\{v_1,\ldots,v_i\}$.
Then the naturally defined composite matching $\pi_2\pi_1$ is also an
ordered matching between $U_1, U_2$ by $I$. Furthermore,  
$sc(\pi_2\pi_1)=$ $sc(\pi_1)\cdot sc(\pi_2)$.
\end{fact}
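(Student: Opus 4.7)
The plan is to verify the two claims directly from the definitions of ordered $I$-matching and scaling factor. Write $I=\{v_1,\ldots,v_i\}$, $v_0=0$, and set $sp_j := sp(v_0,\ldots,v_j)$. For any $\ell\in U_1$, the matching property of $\pi_1$ yields some $c_1\in\FF^*$ and index $j_1$ with $w_1 := \pi_1(\ell)-c_1\ell \in sp_{j_1}$ and $\ell \notin sp_{j_1}$. Setting $m:=\pi_1(\ell)\in V$, the matching property of $\pi_2$ yields some $c_2\in\FF^*$ and index $j_2$ with $w_2 := \pi_2(m)-c_2 m \in sp_{j_2}$ and $m\notin sp_{j_2}$. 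Substituting gives $(\pi_2\pi_1)(\ell) = c_1c_2\,\ell + (c_2 w_1 + w_2)$.

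The main step is to pick the correct witness index, namely $j^* := \max(j_1,j_2)$. Clearly $c_2w_1+w_2 \in sp_{j^*}$ since both summands lie there. To verify $\ell\notin sp_{j^*}$, I would do a small case analysis: if $j^*=j_1$ it is immediate; if $j^*=j_2 > j_1$, then $w_1\in sp_{j_1}\subseteq sp_{j_2}$, so $\ell\in sp_{j_2}$ would force $\pi_1(\ell)=c_1\ell+w_1\in sp_{j_2}$, contradicting $m\notin sp_{j_2}$. This establishes that $\pi_2\pi_1$ is an ordered $I$-matching between $U_1$ and $U_2$, and by the uniqueness property of the scaling constant already proved in the paper, $d_\ell^{(\pi_2\pi_1)} = c_1c_2 = d_\ell^{(\pi_1)}\cdot d_{\pi_1(\ell)}^{(\pi_2)}$.

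For the scaling-factor identity I would then simply expand and re-index: $sc(\pi_2\pi_1) = \prod_{\ell\in U_1} d_\ell^{(\pi_1)}\,d_{\pi_1(\ell)}^{(\pi_2)} = sc(\pi_1)\cdot\prod_{m\in V}d_m^{(\pi_2)} = sc(\pi_1)\,sc(\pi_2)$, where the middle equality uses that $\pi_1$ is a bijection $U_1\to V$. The empty-$U_1$ case is handled by the convention $sc(\pi)=1$.

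The only conceptually subtle point is the case analysis on $\max(j_1,j_2)$ that ensures $\ell$ lies outside the chosen span; everything else is a one-line substitution and a reindexing of a product, so I do not expect a real obstacle here.
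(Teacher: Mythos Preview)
Your proof is correct and follows essentially the same approach as the paper: both pick the witness index $j^*=\max(j_1,j_2)$, run the same case analysis to rule out $\ell\in sp_{j^*}$ (using that $\pi_1(\ell)\notin sp_{j_2}$ when $j_2>j_1$), and then read off the per-element scaling constant $c_1c_2$ to conclude $sc(\pi_2\pi_1)=sc(\pi_1)\,sc(\pi_2)$. Your write-up is in fact slightly more explicit about invoking the uniqueness of the scaling constant and about the bijection reindexing in the product, which is fine.
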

\begin{proof}
Consider a linear form $\ell\in U_1$. There exists $c_1\in\FF^*$ and 
$\alpha_1\in sp_{j_1}, \ell \notin sp_{j_1}$ such that $\pi_1(\ell)=c_1\ell+\alpha_1$.
Also, there exists $c_2\in\FF^*$ and $\alpha_2\in sp_{j_2}$, $\pi_1(\ell) \notin sp_{j_2}$
such that $\pi_2(\pi_1(\ell))=$ $c_2(c_1\ell+\alpha_1)+\alpha_2$.
Let $j = \max \{j_1,j_2\}$. Obviously, $(c_2\alpha_1 + \alpha_2) \in sp_j$.
If $\ell\in sp_j$ then as $\ell\not\in sp_{j_1}$ we deduce that $j=j_2 > j_1$, thus 
$\ell \in sp_{j_2}$, implying $\pi_1(\ell)=c_1\ell + \alpha_1 \in sp_{j_2}$, which is a 
contradiction. Therefore, $\ell \notin sp_j$. This proves that the composite bijection 
$\pi_2 \pi_1$ is an ordered matching.

The contribution from the image of $\ell\in U_1$ to 
$sc(\pi_2\pi_1)$ is $c_1c_2$ while the corresponding contributions of 
$\ell\in U_1$ to $sc(\pi_1)$ is $c_1$ and of $\pi_1(\ell)\in V$ to $sc(\pi_2)$ 
was $c_2$. Thus, $sc(\pi_2\pi_1)=$ $sc(\pi_1)\cdot sc(\pi_2)$.
\end{proof}

The scaling factor nicely characterizes the ratio of $M(U)$ and $M(V)$ when 
$U, V$ are similar.

\begin{fact}\label{fac-scaling1}
Let $\pi$ be an ordered matching between lists $U, V$ of linear forms, by
an ordered form-ideal $I=\{v_1,\ldots,v_i\}$. If $\pi$ is trivial then
$M(V)=sc(\pi)\cdot M(U)$. Thus all the ordered matchings, between a given pair of
similar lists, have the same scaling factor.
\end{fact}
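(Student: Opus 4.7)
The plan is to prove $M(V) = sc(\pi)\,M(U)$; the ``Thus'' clause then follows immediately, because $M(U) \neq 0$ forces $sc(\pi) = M(V)/M(U)$, a ratio independent of the matching $\pi$.

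I would first dispatch the clean case in which no form of $U$ lies in $sp(I)$. Writing $\pi(\ell) = d_\ell \ell + w_\ell$ with $w_\ell \in sp(I)$, reduction modulo the ideal $(I)$ gives $\pi(\ell) \equiv d_\ell \ell$, so $M(V) \equiv sc(\pi)\,M(U) \pmod{(I)}$. On the other hand, the similarity of $U, V$ produces $M(V) = \alpha M(U)$ in $R$ for the scalar $\alpha$ read off from the similarity bijection. Combining these, $(\alpha - sc(\pi))\,M(U) \in (I)$; since $R/(I)$ is a polynomial ring (hence an integral domain) and the clean-case hypothesis gives $M(U) \not\equiv 0 \pmod{(I)}$, we conclude $\alpha = sc(\pi)$.

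The main obstacle is that forms of $U$ may lie in $sp(I)$, in which case $M(U)$ vanishes in $R/(I)$ and the cancellation above breaks. I would handle this by stratifying along the flag $sp_0 := \{0\} \subset sp_1 := sp(v_1) \subset \cdots \subset sp_i := sp(I)$ built into the ordered ideal $I$. Set $U_j := U \cap sp_j$ and $V_j := V \cap sp_j$; Fact~\ref{fac-sub-similar}, applied to the sub-ideal $\{v_1,\ldots,v_j\}$ for each $j \geq 1$, shows $U_j$ is similar to $V_j$, and hence the successive layer differences $U_j \setminus U_{j-1}$, $V_j \setminus V_{j-1}$, as well as the outer pieces $U \setminus U_i$, $V \setminus V_i$, are each pairs of similar sublists. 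The crucial observation is that $\pi$ respects this filtration: for $\ell \in U_j \setminus U_{j-1}$, the matching condition $\pi(\ell) = d_\ell \ell + w_\ell$ with $w_\ell \in sp_{j'}$ and $\ell \notin sp_{j'}$ forces $j' \leq j-1$, so $w_\ell \in sp_{j-1}$ and $\pi(\ell) \in sp_j \setminus sp_{j-1}$; an analogous argument handles forms lying outside $sp_i$. Thus $\pi$ decomposes into a disjoint union of restrictions $\pi_j$ (on each layer, $j=1,\ldots,i$) together with $\pi_\infty$ (on the outside), where $\pi_1$ is itself the similarity bijection on $U_1$ (since $w_\ell \in sp_0 = \{0\}$ is forced to be $0$), each $\pi_j$ for $j \geq 2$ is an ordered matching with respect to the smaller ordered sub-ideal $\{v_1,\ldots,v_{j-1}\}$, and $\pi_\infty$ is an ordered $I$-matching.

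Each restriction now satisfies the clean-case hypothesis: on the $j$-th layer (for $j \geq 2$) no form lies in $sp_{j-1} = sp(v_1,\ldots,v_{j-1})$, and on the outside no form lies in $sp(I)$; for $\pi_1$ the identity $M(V_1) = sc(\pi_1)\,M(U_1)$ is immediate since $\pi_1$ is literally the similarity bijection on $U_1$. Applying the clean-case argument to each piece yields $M(V_j \setminus V_{j-1}) = sc(\pi_j)\,M(U_j \setminus U_{j-1})$ for every $j$ and $M(V \setminus V_i) = sc(\pi_\infty)\,M(U \setminus U_i)$. Multiplying these identities together, and using that both $M(\cdot)$ and $sc(\cdot)$ are multiplicative across disjoint unions (the latter by Fact~\ref{fac-inv-and-union}), gives $M(V) = sc(\pi)\,M(U)$, as required.
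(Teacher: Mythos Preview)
Your proof is correct and takes a genuinely different route from the paper. The paper proceeds by an iterative swap argument, exactly parallel to the proof of Fact~\ref{fac-unscramble}: pick an $\ell\in U$ with $\pi(\ell)$ not similar to $\ell$, find the preimage $\ell'$ of a form in $V$ similar to $\ell$, and swap the images of $\ell,\ell'$; one then verifies by hand that the resulting bijection $\widetilde{\pi}$ is still an ordered $I$-matching with $sc(\widetilde{\pi})=sc(\pi)$, and iterates until every form is sent to a similar form, where the conclusion is immediate. Your approach instead exploits the flag $sp_0\subset sp_1\subset\cdots\subset sp_i$ built into the ordered ideal, decomposing $\pi$ as a disjoint union of ordered matchings with respect to strictly smaller sub-ideals, each of which lands in the ``clean'' case where a reduction-mod-ideal argument in the integral domain $R/(I')$ finishes things off. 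The paper's argument is more elementary and self-contained (it reuses machinery already built), while yours is more structural: it avoids the somewhat fiddly verification that swapping preserves both ordered-ness and the scaling factor, and makes transparent why the ordering on $I$ matters. One small point you might make explicit: the similarity of the layer differences $U_j\setminus U_{j-1}$ and $V_j\setminus V_{j-1}$ does not follow from Fact~\ref{fac-sub-similar} alone, but needs the easy additional observation that if $A\sim B$ and $A'\subseteq A$, $B'\subseteq B$ with $A'\sim B'$, then $A\setminus A'\sim B\setminus B'$ (immediate from multiset counting in each similarity class, or from $M(A\setminus A')=M(A)/M(A')$).
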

\begin{proof}
The proof idea is identical to the one seen in Fact \ref{fac-unscramble}.

Let $\ell\in U$ be such that $\pi(\ell)=d\ell+v$ is not similar to 
$\ell$, where $d\in\FF^*$, $v\in sp_j$ and 
$\ell \notin sp_j$. 
Since $V$ is similar to $U$ there exists a form equal to $c\ell$ in $V$, 
for some $c\in\FF^*$. As $\pi$ is an ordered matching,
it must be mapping some $\ell^\prime\in U$ to $c\ell$ in $V$, satisfying:  
$\pi(\ell^\prime)=$ 
$d^\prime\ell^\prime+v^\prime=$ $c\ell$,
where $d^\prime\in\FF^*$,
$v^\prime\in sp_{j'}$,
and $\ell^\prime \notin sp_{j'}$.

Now we define a new matching $\widetilde{\pi}$ by flipping the images of $\ell$ 
and $\ell^\prime$ under $\pi$, i.e., define $\widetilde{\pi}$ to be the same as 
$\pi$ on $U\setminus\{\ell,\ell^\prime\}$ and: 
$\widetilde{\pi}(\ell)\stackrel{V}{:=}\pi(\ell^\prime)$ and 
$\widetilde{\pi}(\ell^\prime)\stackrel{V}{:=}\pi(\ell)$. The
matching $\widetilde{\pi}$ is an ordered matching 
because: $\widetilde{\pi}(\ell)=c\ell$ for $c\in\FF^*$ and more importantly 
$\widetilde{\pi}(\ell^\prime)=$ $d\ell+v=$ 
$d(\frac{d^\prime\ell^\prime+v^\prime}{c})+v=$ 
$(\frac{dd^\prime}{c})\ell^\prime+(\frac{dv^\prime}{c}+v)$.
Let $j^* := \max\{j,j^\prime\}$. Obviously, 
$(\frac{dv^\prime}{c}+v) \in sp_{j^*}$. If $j^* = j^\prime$,
we are done, because we already know that $\ell^\prime \notin sp_{j^\prime}$.
If $j^* = j$ and $\ell^\prime \in sp_{j}$, then $c\ell =$
$d^\prime \ell^\prime + v'$ is in $sp_j$ (contradiction).

We have obtained now an 
ordered matching $\widetilde{\pi}$ between $U, V$ by $I$ where
the number of forms mapped to a similar form has strictly increased.
Observe that $sc(\pi)$ had a unique 
contribution of $d$, $d^\prime$ from the images of $\ell$, $\ell^\prime$ 
respectively while $sc(\widetilde{\pi})$ has a corresponding contribution of $c$, 
$(\frac{dd^\prime}{c})$. On all the other elements of $U$, $\widetilde{\pi}$ is 
the same as $\pi$. Thus, we have that $sc(\widetilde{\pi})=sc(\pi)$. 

The above process will yield an ordered matching $\pi^\prime$ in at 
most $\#U$ many iterations, such that $U, V$ are similar under 
$\pi^\prime$ and $sc(\pi^\prime)=sc(\pi)$. But this means that, for all 
$\ell\in U$, $\pi^\prime(\ell)=\lambda\ell$, for some $\lambda\in\FF^*$. 
By definition the contribution by $\ell$ to $sc(\pi^\prime)$ would be then
$\lambda$. This clearly implies that $M(V)=sc(\pi^\prime)\cdot M(U)$
and finally $M(V)=sc(\pi)\cdot M(U)$.
\end{proof}

We move on to facts about the $gcd$ and $sim$ parts
of a circuit.

\begin{fact}\label{fac-simple}
If $C_Q$ is a regular mod $I$ subcircuit of $C$ then:
$$C_Q \equiv gcd(C_Q\mod I)\cdot sim(C_Q\mod I) \ (\mod I)$$
Additionally, if $C_Q$ is an identity modulo $I$ then 
$sim(C_Q\mod I)$ is a simple identity modulo $I$. 
\end{fact}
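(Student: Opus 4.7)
The plan is to treat the first congruence as a direct unpacking of definitions, and then split the ``simple identity'' conclusion into two parts: an integral-domain cancellation inside $R/I$ for the identity property, and a maximality argument using unique factorization in $R/I$ for simplicity. For the congruence, I would start from $T_q = \alpha_q\cdot M(U_q)\cdot M(L(T_q)\setminus U_q)$, which follows from $L(T_q) = U_q \sqcup (L(T_q)\setminus U_q)$. By construction of the gcd data, $M(U_q)\equiv sc(\pi_q)\cdot M(U)\ (\mod I)$ for each $q\in Q$. Substituting and summing over $q\in Q$ lets me pull out the common factor $M(U) = gcd(C_Q\mod I)$, and what remains is exactly the sum defining $sim(C_Q\mod I)$.

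Now assume $C_Q\equiv 0\ (\mod I)$. After replacing the generators of $I$ by a basis of their span (which does not change the ideal), I may assume $I$ is generated by linearly independent linear forms, so that $R/I$ is isomorphic to a polynomial ring and hence an integral domain. Because $U\subseteq L(T_{q_1})$ and $T_{q_1}\not\equiv 0\ (\mod I)$, no form of $U$ vanishes mod $I$, and so $M(U)\not\equiv 0\ (\mod I)$. The congruence from the first part then yields $M(U)\cdot sim(C_Q\mod I)\equiv 0\ (\mod I)$, and cancellation in the integral domain $R/I$ gives $sim(C_Q\mod I)\equiv 0\ (\mod I)$.

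The step where I expect the real content is proving that $sim(C_Q\mod I)$ is simple modulo $I$. I would argue by contradiction: suppose some linear form $\ell\not\equiv 0\ (\mod I)$ divides $sc(\pi_q)\alpha_q\cdot M(L(T_q)\setminus U_q)$ mod $I$ for every $q\in Q$. Regularity of $C_Q$ mod $I$ forces every form in each $L(T_q)$ to be nonzero, hence irreducible, in $R/I$; by unique factorization in the polynomial ring $R/I$, for every $q$ some form $\ell_q\in L(T_q)\setminus U_q$ must be $I$-similar to $\ell$. Taking $\ell^\prime := \ell_{q_1}\in L(T_{q_1})\setminus U_{q_1}$, the extended sublist $U\cup\{\ell^\prime\}\subseteq L(T_{q_1})$ still has $M(U\cup\{\ell^\prime\})$ dividing every $T_q$ mod $I$ (using the $\ell_q$'s for the other $q\in Q$), contradicting the maximality of $U$ in the gcd-data construction. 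The delicate point is the use of unique factorization in $R/I$ to ensure that the witness $\ell^\prime$ actually lies inside $L(T_{q_1})\setminus U_{q_1}$ itself rather than merely being $I$-similar to some element of it; this is what makes the maximality of $U$ usable as a contradiction.
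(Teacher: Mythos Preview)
Your proposal is correct and follows essentially the same route as the paper: unpack $T_q=\alpha_q M(U_q)M(L(T_q)\setminus U_q)$ and substitute $M(U_q)\equiv sc(\pi_q)M(U)$ for the congruence, then use that $R/I$ is a polynomial ring to cancel $M(U)$ and to argue by unique factorization that a common linear divisor of the simple part would contradict maximality of $U$. The paper's proof is terser (it omits the explicit integral-domain cancellation and the ``$\ell^\prime$ actually lies in $L(T_{q_1})\setminus U_{q_1}$'' point you flag), but the underlying argument is the same.
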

\begin{proof}
Recall that $C_Q=\sum_{q\in Q}T_q$ and
the $gcd$ data $\vec{gcd}(C_Q \mod I)$ is $\left\{(\pi_q,U,U_q)\mid q\in Q \right\}$.
Now $T_q=$
$\alpha_q M(U_q)\cdot M(L(T_q)\setminus U_q)$ and $M(U_q)\equiv$ $sc(\pi_q)\cdot M(U)$
$(\mod I)$,
where $M(U)$ is $gcd(C_Q \mod I)$. Thus, 
\begin{align*}
C_Q\ &\equiv\ \sum_{q\in Q} \alpha_q sc(\pi_q)M(U)\cdot 
M(L(T_q)\setminus U_q)\ (\mod I)\\
&\equiv\ gcd(C_Q\mod I)\cdot sim(C_Q\mod I) \ (\mod I)
\end{align*}
This proves the first part. Assume now that $C_Q\equiv0 (\mod I)$
which means $sim(C_Q\mod I)\equiv0 (\mod I)$. If it is not a simple
identity mod $I$, then there is an $\ell^\prime\in L(sim(C_Q\mod I))$ 
such that, $\forall q\in Q$, $\ell^\prime\mid$ 
$M(L(T_q)\setminus U_q)$ $\mod I$. 
%
Then, $M(U)$ cannot be the gcd of the polynomials 
$\{T_q\mid q\in Q\}$ modulo the ideal $(I)$ (contradiction). 
\end{proof}

When $I=\{0\}$ we write $\vec{gcd}(C_Q)$, $gcd(C_Q)$ and 
$sim(C_Q)$ instead of $\vec{gcd}(C_Q\mod I)$, $gcd(C_Q\mod I)$ 
and $sim(C_Q\mod I)$ respectively. We collect here some properties 
of $sim(C_Q)$ that would be directly useful in our rank bound proof.

\begin{fact}\label{fac-sim-1}
Let $\ell\in L(R)^*$ and $C_Q$ be a subcircuit of $C$. Then 
$\#simi(\ell,L(sim(C_Q)))>0$ iff $\exists q_1, q_2\in Q$ such that
$\#simi(\ell,L(T_{q_1}))\not=$ $\#simi(\ell,L(T_{q_2}))$.
\end{fact}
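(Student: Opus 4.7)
The plan is to unpack the definition of $sim(C_Q)$ (in the special case $I=\{0\}$) and reduce the statement to a counting identity about the multiplicity of $\ell$ across the terms $T_q$.

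First I would unwind what the gcd data gives when $I=\{0\}$. By definition, $U\subseteq L(T_{q_1})$ is a maximal sublist such that $M(U)$ divides $T_q$ for every $q\in Q$, and for each $q\in Q$ there is a sublist $U_q\subseteq L(T_q)$ and constant $c_q\in\FF^*$ with $M(U_q)=c_q M(U)$. Since $I=\{0\}$, the ordered matching $\pi_q$ between $U$ and $U_q$ is forced to be a similarity, so $U_q$ is similar to $U$ as a list of linear forms. In particular, for any linear form $\ell\in L(R)^*$, $\#simi(\ell,U_q)=\#simi(\ell,U)$ for all $q\in Q$.

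Next I would identify $\#simi(\ell,U)$ as a minimum. Set $m:=\min_{q\in Q}\#simi(\ell,L(T_q))$. On the one hand, $M(U)\mid T_q$ forces $\#simi(\ell,U)\leq\#simi(\ell,L(T_q))$ for every $q$ by unique factorization in $R$, so $\#simi(\ell,U)\leq m$. On the other hand, if $\#simi(\ell,U)<m$ were strict, one could append another form similar to $\ell$ from $L(T_{q_1})\setminus U$ (which exists because $\#simi(\ell,L(T_{q_1}))\ge m>\#simi(\ell,U)$) to $U$ and still have $M(U)\mid T_q$ for all $q$, contradicting maximality. Hence $\#simi(\ell,U)=m$, and therefore $\#simi(\ell,U_q)=m$ for every $q\in Q$.

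Finally, since $L(sim(C_Q))=\bigcup_{q\in Q}(L(T_q)\setminus U_q)$ as a multiset,
$$\#simi(\ell,L(sim(C_Q)))=\sum_{q\in Q}\bigl(\#simi(\ell,L(T_q))-m\bigr).$$
Each summand is $\ge 0$ by the definition of $m$, so the total is positive if and only if some summand is positive, i.e.\ some $\#simi(\ell,L(T_q))$ strictly exceeds $m$. Since there is always some $q'$ attaining the minimum $m$, this is equivalent to the existence of $q_1,q_2\in Q$ with $\#simi(\ell,L(T_{q_1}))\neq\#simi(\ell,L(T_{q_2}))$, which is exactly the claimed equivalence.

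The only subtle step is the maximality argument fixing $\#simi(\ell,U)=m$; once that is in place, the rest is pure bookkeeping. There is nothing that I would expect to be a real obstacle, since unique factorization in $R$ makes the definition of $U$ combinatorially rigid when $I=\{0\}$.
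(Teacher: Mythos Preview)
Your proof is correct and follows essentially the same approach as the paper: both identify the $\ell$-multiplicity in $gcd(C_Q)=M(U)$ as $m=\min_{q\in Q}\#simi(\ell,L(T_q))$, and then observe that $\ell$ survives in $sim(C_Q)$ iff some $T_q$ has strictly larger multiplicity than $m$. Your version is slightly more explicit (you write out the exact count $\sum_{q\in Q}(\#simi(\ell,L(T_q))-m)$), whereas the paper argues the two directions more qualitatively via ``highest power of $\ell$ dividing $T_q/gcd(C_Q)$'', but the content is identical.
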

\begin{proof}
Note that $\#simi(\ell,L(T_q))$ is the highest power of $\ell$ that divides
$T_q$. Thus, if $\#simi(\ell,L(T_q))$ is the same, say $r$, for all $q\in Q$ 
then the highest power of $\ell$ dividing $gcd(C_Q)$ is also $r$ implying 
that for all $q\in Q$, the polynomial $\frac{T_q}{gcd(C_Q)}$ is coprime to 
$\ell$. By definition of the simple part of $C_Q$ this means that 
$\#simi(\ell,L(sim(C_Q)))=0$.

Conversely, if for an $\ell\in L(R)^*$, $\exists q_1, q_2\in Q$ such that 
$\#simi(\ell,L(T_{q_1}))>$ $\#simi(\ell$, $L(T_{q_2}))$ then it is easy to see 
that $\frac{T_{q_1}}{gcd(C_Q)}$ cannot be coprime to $\ell$. 
This implies that
$\#simi(\ell,L(sim(C_Q))) >0$.
\end{proof}

\begin{fact}\label{fac-sim-2}
Let $S\subseteq L(R)$ and $Q_2\subseteq Q_1\subseteq[k]$. If 
$L(sim(C_{Q_1}))$ has all its linear forms in $sp(S)$, then all the linear 
forms in $L(sim(C_{Q_2}))$ are also in $sp(S)$.
\end{fact}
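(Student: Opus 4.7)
The plan is to reduce this to Fact \ref{fac-sim-1}, which characterizes, up to similarity, exactly which forms appear in the simple part of a subcircuit in terms of multiplicity mismatches among the terms.

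First I would pick an arbitrary nonzero $\ell \in L(sim(C_{Q_2}))$ and apply Fact \ref{fac-sim-1} to the subcircuit $C_{Q_2}$. This yields indices $q_1, q_2 \in Q_2$ such that $\#simi(\ell, L(T_{q_1})) \neq \#simi(\ell, L(T_{q_2}))$. The whole point of Fact \ref{fac-sim-1} is that such a multiplicity mismatch is an intrinsic property of the pair of terms $T_{q_1}, T_{q_2}$, independent of the ambient index set.

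Next, since $Q_2 \subseteq Q_1$, both $q_1$ and $q_2$ lie in $Q_1$. Applying Fact \ref{fac-sim-1} in the reverse direction to $C_{Q_1}$ (witnessed by the same pair $q_1, q_2$) gives $\#simi(\ell, L(sim(C_{Q_1}))) > 0$. Thus some $\ell^\prime \in L(sim(C_{Q_1}))$ is similar to $\ell$, i.e.\ $\ell^\prime = c\ell$ for some $c \in \FF^*$. By the hypothesis, $\ell^\prime \in sp(S)$, and since $\ell = c^{-1}\ell^\prime$, we conclude $\ell \in sp(S)$.

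There is essentially no obstacle here: the content of the argument is that ``being in the simple part, up to similarity'' is a property witnessed by just two terms, so it is monotone in the direction opposite to $Q$: a witness for $Q_2$ automatically serves as a witness for any $Q_1 \supseteq Q_2$. The only bookkeeping issue is the distinction between $\ell$ appearing in $L(sim(C_{Q_1}))$ literally versus only up to similarity, which is handled trivially since $sp(S)$ is closed under scalar multiplication.
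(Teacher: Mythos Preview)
Your proposal is correct and follows essentially the same route as the paper: pick $\ell\in L(sim(C_{Q_2}))$, use Fact~\ref{fac-sim-1} to find a multiplicity mismatch inside $Q_2$, observe the same witnesses lie in $Q_1$, and apply Fact~\ref{fac-sim-1} in the converse direction to conclude $\ell\in sp(S)$. The only difference is cosmetic: you spell out the ``up to similarity'' bookkeeping, which the paper leaves implicit.
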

\begin{proof}
For an arbitrary $\ell\in L(sim(C_{Q_2}))$, by Fact \ref{fac-sim-1}, there are 
$q_1, q_2\in Q_2$ such that $\#simi(\ell,L(T_{q_1}))$ $\not=$ 
$\#simi(\ell,L(T_{q_2}))$. As $q_1, q_2\in Q_1$, we can again apply Fact 
\ref{fac-sim-1} to deduce that $\#simi(\ell$, $L(sim(C_{Q_1})))>0$.
Therefore $\ell\in sp(S)$.
\end{proof}

\begin{fact}\label{fac-sim-3}
Let $S\subseteq L(R)$ and $Q_1, Q_2\subseteq[k]$ such that 
$Q_1\cap Q_2\not=\phi$. If $L(sim(C_{Q_1}))$ and $L(sim(C_{Q_2}))$ have 
all their linear forms in $sp(S)$ then all the linear forms in 
$L(sim(C_{Q_1\cup Q_2}))$ are also in $sp(S)$.
\end{fact}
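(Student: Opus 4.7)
The plan is to reduce everything to Fact~\ref{fac-sim-1}, using the nonempty intersection $Q_1\cap Q_2$ as a ``bridge'' index. Pick an arbitrary $\ell\in L(sim(C_{Q_1\cup Q_2}))$. By Fact~\ref{fac-sim-1}, there exist indices $q,q'\in Q_1\cup Q_2$ with $\#simi(\ell,L(T_q))\neq\#simi(\ell,L(T_{q'}))$. If both $q,q'$ happen to lie in $Q_1$, then applying Fact~\ref{fac-sim-1} in the other direction to $C_{Q_1}$ yields $\ell\in L(sim(C_{Q_1}))\subseteq sp(S)$, and symmetrically if both lie in $Q_2$. So we are done in these two easy cases.

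The interesting case is when (after possibly relabeling) $q\in Q_1\setminus Q_2$ and $q'\in Q_2\setminus Q_1$. Here I would invoke the hypothesis $Q_1\cap Q_2\neq\emptyset$ by fixing some $q^*\in Q_1\cap Q_2$ and comparing $\#simi(\ell,L(T_{q^*}))$ with the two given counts. Since these two counts differ, $\#simi(\ell,L(T_{q^*}))$ cannot equal both of them simultaneously. If $\#simi(\ell,L(T_{q^*}))\neq\#simi(\ell,L(T_q))$, then $q,q^*\in Q_1$ witness (via Fact~\ref{fac-sim-1}) that $\ell\in L(sim(C_{Q_1}))$, hence $\ell\in sp(S)$. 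Otherwise $\#simi(\ell,L(T_{q^*}))\neq\#simi(\ell,L(T_{q'}))$, and $q',q^*\in Q_2$ similarly give $\ell\in L(sim(C_{Q_2}))\subseteq sp(S)$.

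In either subcase we conclude $\ell\in sp(S)$, as desired. There is no real obstacle here: Fact~\ref{fac-sim-1} gives a purely combinatorial characterization of membership in $L(sim(\cdot))$ in terms of multiplicities of $\ell$, and the only content of the lemma is the elementary pigeonhole observation that among the three values $\#simi(\ell,L(T_q))$, $\#simi(\ell,L(T_{q'}))$, $\#simi(\ell,L(T_{q^*}))$, the shared index $q^*$ must disagree with at least one of the other two, which places the discrepancy inside $Q_1$ or inside $Q_2$. This is exactly the ``same-side witness'' we need to apply the hypothesis on $sim(C_{Q_1})$ or $sim(C_{Q_2})$.
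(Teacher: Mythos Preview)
Your proof is correct and follows essentially the same approach as the paper's own proof: pick $\ell\in L(sim(C_{Q_1\cup Q_2}))$, use Fact~\ref{fac-sim-1} to find two indices with differing $\#simi$ counts, handle the easy case where both lie in one $Q_i$, and in the remaining case use a bridge index $q^*\in Q_1\cap Q_2$ (the paper calls it $q_0$) together with the pigeonhole observation that $q^*$ must disagree with at least one of the two. The arguments are essentially identical up to notation.
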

\begin{proof}
Take $q_0\in Q_1\cap Q_2$ and an arbitrary $\ell\in L(sim(C_{Q_1\cup Q_2}))$.
By Fact \ref{fac-sim-1}, there are $q_1, q_2\in Q_1\cup Q_2$ such that 
$\#simi(\ell,L(T_{q_1}))\not=$ $\#simi(\ell,L(T_{q_2}))$. 

If $q_1, q_2$ are in the same set (wlog, in $Q_1$), then Fact 
\ref{fac-sim-1} tells us that $\#simi(\ell$, $L(sim(C_{Q_1})))>0$,
trivially implying that $\ell\in sp(S)$.
Now assume wlog that $q_1\in Q_1, q_2\in Q_2$. 
For some $i \in \{1,2\}$, $\#simi(\ell,L(T_{q_0}))\not=$ $\#simi(\ell,L(T_{q_i}))$.
Therefore, by Fact \ref{fac-sim-1}, $\ell \in sp(S)$.
\end{proof}

\subsection{Getting Useful Form-ideals}\label{subsec-useful}

Given a set $S$ that does not span all of $L(C)$, we 
can find a form-ideal that is useful wrt $S$.
As we mentioned earlier, in a \emph{round} we start with $S$,
and end up with a useful $I$ through various iterations.
We will formally describe this process below.

An iteration starts with a partial $I$, and a simple regular
identity $E$ in the ring $R / I$, which has multiplication terms with indices in $[k]$.
At least one of the forms in $E$ is \emph{not}
in $sp(S \cup I)$. At the beginning of the first iteration,
$E$ is set to $C$ and $I$ is $\{0\}$.

\begin{center}
\fbox{\begin{minipage}{\columnwidth} {\sc A single iteration}
\begin{enumerate}
	\item Let $\ell$ be a form in $E$ that is not in $sp(S \cup I)$.
	\item Add $\ell$ to $I$.
	\item Consider $E$ modulo $I$ and let $Q$ be the subset of indices of nonzero multiplication terms.
	\item Let $U$ be the $gcd$ of $E (\mod I)$, and let the gcd data
	be $\vec{gcd} = \left\{(\pi_q,U,U_q)\mid q\in Q \right\}$.
	\item If the fanin, $|Q|$, of $E(\mod I)$ is $2$, stop the round. 
	\item If all forms in $sim(E(\mod I))$ are contained in $sp(S \cup I)$, stop
	the round. 
	Otherwise, set $E$ to be $sim(E(\mod I))$ and go to the next iteration.
\end{enumerate}
\end{minipage}}
\end{center}

\begin{lemma}\label{lem-useful-ideal}
Let $C$ be a simple $\sps(k,d)$ identity in $R$.
Suppose $S\subseteq L(R)$ 
and $L(C) \backslash sp(S)$ is non-empty. Then there is a 
form-ideal $I$ useful in $C$ wrt $S$.
\end{lemma}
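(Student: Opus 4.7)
The plan is to execute the iterative procedure described in the box and argue that it produces a useful form-ideal. First I would verify that the procedure maintains two invariants throughout: (i) the chosen forms $\ell_1,\ldots,\ell_t$ are linearly independent, $I=\{\ell_1,\ldots,\ell_t\}$ is orthogonal to $S$, and $I$ is ordered by the order in which we pick the forms; this is immediate since each $\ell_t$ is selected outside $sp(S\cup\{\ell_1,\ldots,\ell_{t-1}\})$. (ii) The circuit $E$ at the start of each iteration is a simple regular identity modulo $I$; this is how step~6 hands $E$ off to the next iteration, combined with Fact~\ref{fac-simple} telling us that $sim$ of a regular identity mod $I$ is a simple identity mod $I$. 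Since $\ell_t\in L(E_t)$, some multiplication term of $E_t$ contains a form similar to $\ell_t$ and vanishes modulo $\ell_t$, strictly decreasing the fanin. Hence the loop halts in at most $k-2$ iterations, either by the $|Q|=2$ stop (step~5) or by the forms-in-span stop (step~6). Also $|Q|\ge 2$ at termination, because $R/I$ is an integral domain and no single nonzero multiplication term can be the zero polynomial modulo $I$.

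Let $r$ be the final number of iterations, $I:=\{\ell_1,\ldots,\ell_r\}$, and $Q\subseteq[k]$ the indices that survive all iterations. I would build the matching data by stacking the successive $gcd$-data. At iteration $t$ we obtained, for every $q\in Q\subseteq Q_{t-1}$, a reference list $U^{(t)}$ and sublists $U_q^{(t)}\subseteq L(T_q)$ with ordered $I_t$-matchings $\pi_q^{(t)}$. Since $I_t\subseteq I$ each $\pi_q^{(t)}$ is also an ordered $I$-matching. Set $V:=\bigsqcup_t U^{(t)}$, $V_q:=\bigsqcup_t U_q^{(t)}$, and $\tau_q:=\bigsqcup_t \pi_q^{(t)}$; Fact~\ref{fac-inv-and-union} (together with Fact~\ref{fac-compose} composed through the common $V$) gives that $\tau_q$ is an ordered $I$-matching between $V$ and $V_q$. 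If termination was by the fanin-$2$ rule, I additionally invoke unique factorization in $R/I$ on the final two terms (they sum to zero mod $I$ and are nonzero, so their linear forms match up mod $I$) to append an extra ordered matching block between the leftover forms of $L(T_{q_1})$ and $L(T_{q_2})$, and fold it into $\tau_{q_1},\tau_{q_2}$.

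Next I have to verify that $V_q$ coincides with $L(T_q)\setminus(sp(S\cup I)\setminus sp(I))$. Regularity of $C_Q$ modulo $I$ ensures no form of $L(T_q)$ lies in $sp(I)$, so this is the same as saying $V_q$ equals the set of forms of $L(T_q)$ outside $sp(S\cup I)$. In the forms-in-span termination every form of $sim(E\mod I)$ lies in $sp(S\cup I)$, so those forms---precisely $L(T_q)\setminus V_q$---are in $sp(S\cup I)\setminus sp(I)$ as required. To ensure the matched part $V_q$ contains no form of $sp(S\cup I)$, I would use Fact~\ref{fac-unscramble} at each iteration: whenever a gcd sublist $U_q^{(t)}$ contains a form $\ell\in sp(S\cup I)\setminus sp(I)$, Fact~\ref{fac-unscramble} lets me replace the matching so that $\ell$ is paired with a similar form on the other side, which by Fact~\ref{fac-sub-similar} can then be symmetrically stripped from all $U_{q'}^{(t)}$'s and reabsorbed into the $sim$ part (the scaling stays consistent by Fact~\ref{fac-scaling1}). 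Iterating this cleanup across all rounds leaves only forms outside $sp(S\cup I)$ inside the $V_q$'s. Finally, the stripped circuit $\sum_{q\in Q} sc(\tau_q)\alpha_q\cdot M(L(T_q)\setminus V_q)$ is, by Facts~\ref{fac-compose} and~\ref{fac-scaling1}, the iterated version of $sim$ applied $r$ times, and Fact~\ref{fac-simple} applied inductively over the iterations shows that this is a regular identity modulo $I$.

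The main obstacle is the third paragraph: matching up the iteratively peeled lists $V_q$ to the statically defined slice $L(T_q)\setminus(sp(S\cup I)\setminus sp(I))$, and carrying the product of scaling factors across compositions so that the final stripped circuit is exactly the identity promised by Fact~\ref{fac-simple}. The rest is bookkeeping, but this step is where all of the tools (Facts~\ref{fac-sub-similar},~\ref{fac-unscramble},~\ref{fac-inv-and-union},~\ref{fac-compose},~\ref{fac-scaling1}, and~\ref{fac-simple}) have to cooperate, and any careless rearrangement of a gcd list can destroy either the ordered-matching property or the identity property of the final simple part.
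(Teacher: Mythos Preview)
Your overall strategy matches the paper's: run the boxed iteration, argue inductively that the accumulated gcd-data assemble into ordered $I$-matchings $\tau_q:V\to V_q\subseteq L(T_q)$, and note that the leftover $sim$-part is a regular identity modulo $I$ whose forms all lie in $sp(S\cup I)$. Your second paragraph is essentially the paper's inductive claim (Claim~\ref{clm-iteration}).

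The gap is in your cleanup (third paragraph). Your use of Fact~\ref{fac-unscramble} is unjustified: that fact requires you to already possess \emph{absolutely similar} sublists $U'\subseteq U$ and $V'\subseteq V$, but given a single $\ell\in U_q^{(t)}\cap(sp(S\cup I)\setminus sp(I))$ you only know there is an $I$-similar form in $U^{(t)}$, not an absolutely similar one, so there is nothing to ``pair $\ell$ with'' and nothing to ``symmetrically strip''. Fact~\ref{fac-sub-similar} does not help either, since it concerns sublists lying in $sp(I)$, not in $sp(S\cup I)\setminus sp(I)$.

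The paper's remedy is much simpler and requires no per-iteration surgery. After building $\tau_q:V\to V_q$ as you do, set $V'_q:=V_q\setminus(sp(S\cup I)\setminus sp(I))$ and $V':=V\setminus(sp(S\cup I)\setminus sp(I))$. Because $\tau_q(\ell)-c\ell\in sp(I)$ for every $\ell$, membership in $sp(S\cup I)\setminus sp(I)$ is preserved by $\tau_q$ and $\tau_q^{-1}$; hence $\tau_q$ restricts to an ordered $I$-matching $\tau'_q:V'\to V'_q$. Writing $W:=V\setminus V'$ and $W_q:=V_q\setminus V'_q$, one has $sc(\tau_q)=sc(\tau'_q)\cdot sc(\tau_q|_{W})$ and $M(W_q)\equiv sc(\tau_q|_{W})\cdot M(W)\pmod I$, so
\[
\sum_{q\in Q} sc(\tau'_q)\,\alpha_q\, M\bigl(L(T_q)\setminus V'_q\bigr)\ \equiv\ M(W)\cdot\sum_{q\in Q} sc(\tau_q)\,\alpha_q\, M\bigl(L(T_q)\setminus V_q\bigr)\pmod I,
\]
which is $M(W)$ times the previously obtained identity, hence a regular identity modulo $I$. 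Neither Fact~\ref{fac-unscramble} nor Fact~\ref{fac-sub-similar} is needed.
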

\begin{proof}
As discussed before in the intuition, we generate $I$ in one round and 
the proof will be done by induction on the number of iterations in this round. 
For convenience, we set
the end of the zero iteration to be the beginning
of the round. We will prove the following claim: 

\begin{claim} \label{clm-iteration} Consider the end of some iteration. There exists
a list $V$ of forms such that : for all $q$ in the current $Q$, there
is a list $V_q \subseteq L(T_q)$ that has an ordered $I$-matching
to $V$. Furthermore, $M(L(T_q) \backslash V_q)$
is similar to the term indexed by $q$ in $sim(E(\mod I))$.
\end{claim}

\claimproof{clm-iteration} {This is proven by induction on the iterations. At the end
of the zero iteration, $E$ is just $C$ and $I=\{0\}$. By the
simplicity of $C$, $sim(E(\mod I))$ is just $C$, and $Q=[k]$. So all the $V_q$'s can be 
taken just empty. 

Now, suppose that at the end of the $i$th iteration,
we have an ordered $I$-matching from $V_q$ to $V$
for all $q$ in the current $Q$. In the $(i+1)$th iteration we will denote 
by $I'$ the set $I \cup \{\ell\}$, $E' = sim(E(\mod I))$, and
$Q'\subset Q$ the subset of indices of non-zero terms
in $E'$ modulo $I'$. For a $q \in Q'$, we have a list $V_{q}\subseteq L(T_q)$ and an 
ordered $I$-matching $\tau_{q}$ between $V, V_q$. All forms of $T_{q}$ not in $V_{q}$
are in $E'$. Now consider the $I'$-matching $\pi_q$ between $U, U_{q}$ obtained in this 
iteration. No forms in these can be in $sp(I')$, since $U$ is $gcd(E'(\mod I'))$
and $q \in Q'$. Therefore, $\pi_q$ is an ordered matching.
We can take the disjoint union of these matchings to get
an ordered $I'$-matching $\tau_q \sqcup \pi_q$ between $V \cup U$ and $V_q \cup U_q$.
All forms in $L(T_q) \backslash (V_q \cup U_q)$ are in the $q$th term of 
$sim(E'(\mod I'))$. This completes the proof of the claim.}

The number of iterations in a round is at most
$(k-2)$. This is because after each iteration, the fanin
of the circuit $E$ goes down by at least $1$.
Therefore, there must be a last iteration (signifying the end of the round).
Consider the end of the last iteration. 
If the fanin $|Q|$ of $E(\mod I)$ is $2$,
then by unique factorization, $sim(E(\mod I))$ is empty.
So, all the forms in $sim(E(\mod I))$ are in $sp(S \cup I)$, at the
end of a round. By the previous claim, there is a list $V$ such that 
for every surviving $q \in Q$, there is a sublist $V_q \subseteq L(T_q)$
and an ordered $I$-matching $\tau_q$ between $V$ and $V_q$.
By Fact~\ref{fac-simple}, we have that $E(\mod I)$ is 
$\sum_{q\in Q}sc(\tau_q)\alpha_q\cdot M(L(T_q)\setminus V_q)$ and is an identity (in $R / I$).

Let $V'_q := V_q \setminus (sp(S \cup I) \backslash sp(I))$ (similarly, define $V'$).
Note that $\tau_q$ induces a matching $\tau'_q$ between $V'$ and $V'_q$.
Furthermore, $\sum_{q\in Q}sc(\tau'_q)\alpha_q\cdot M(L(T_q)\setminus V'_q)$
is a multiple of $E(\mod I)$ and is regular (each term in the above
sum is non-zero mod $I$). Thus, form-ideal $I$ is useful in $C$ wrt $S$.
\end{proof}

To prove a rank bound for minimal and simple $\sps(k,d)$ identity $C$, our plan 
is to start with $S=\phi$ and expand it round-by-round by adding the forms of 
a form-ideal, useful in $C$ wrt $S$, to the current $S$. Trivially, such a 
process has to stop in at most $kd$ iterations (over all rounds) but we intend 
to show that it actually ends up, covering all the forms in $L(C)$, in a much faster way. To 
formalize this process we need the notion of a {\em chain of form-ideals}. 
This is just a concise representation of the matchings that we get
from the various rounds.

\begin{definition} {\bf [Chain of form-ideals]}
Let $C$ be a $\sps(k,d)$ circuit. We define a {\em chain of form-ideals for $C$} to 
be the ordered set $\calT:=$ $\{(C,S_1,I_1,Q_1),\ldots,(C,S_m,I_m,Q_m)\}$ where,
\begin{itemize}
\item For all $i\in[m]$, $S_i\subseteq L(R)$, $I_i$ is a form-ideal orthogonal to 
$S_i$ and $Q_i\subseteq[k]$ .
\item $S_1=\phi$ and for all $2\leq i\leq m$, $S_i=S_{i-1}\cup I_{i-1}$.
\item For all $i\in[m]$, $I_i$ is useful in $C$ wrt $S_i$.
\item For all $i\in[m]$, $Q_i$ is a blocking subset of $C,S_i,I_i$.
\end{itemize}
We will use $sp(\calT)$ to mean $sp(S_m\cup I_m)$ and $\#\calT$ to denote $m$,
the {\em length of $\calT$}. The chain $\cal T$ is \emph{maximal}
if $L(C) \subseteq sp(\cal T)$.
\end{definition}

Note that by Lemma~\ref{lem-useful-ideal}, if a chain $\cal T$ of length $m$ is not maximal,
then we can find a form-ideal $I_{m+1}$ that is useful wrt $S_m \cup I_m$. This
allows us to add a new $(C,S_{m+1},I_{m+1}$, $Q_{m+1})$ to this chain.
It is easy to construct a maximal chain for $C$, and the length
of this can be used to bound the rank:

\begin{fact}\label{fac-maximal-chain}
Let $C$ be a simple $\sps(k,d)$ identity. Then there exists a maximal chain of 
form-ideals $\calT$ for $C$. The rank of $C$ is at most $(k-2) (\#{\cal T})$.
\end{fact}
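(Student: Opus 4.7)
The plan is to construct a maximal chain greedily by repeated application of Lemma~\ref{lem-useful-ideal}, and then to deduce the rank bound by counting how many linear forms can enter each $I_i$ in a single round. Starting with the empty chain, suppose we have already built $\{(C,S_i,I_i,Q_i)\}_{i=1}^{m}$. Set $S_{m+1}:=S_m\cup I_m$ and check whether $L(C)\subseteq sp(S_{m+1})$. If so, the chain is maximal. Otherwise $L(C)\setminus sp(S_{m+1})\neq\emptyset$, and Lemma~\ref{lem-useful-ideal} produces a form-ideal $I_{m+1}$ useful in $C$ wrt $S_{m+1}$, together with a blocking subset $Q_{m+1}$. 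Since, by the definition of useful, $I_{m+1}$ is orthogonal to $S_{m+1}$, the extended tuple $(C,S_{m+1},I_{m+1},Q_{m+1})$ is a legal addition to the chain.

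For termination, I would inspect the single-iteration procedure inside the proof of Lemma~\ref{lem-useful-ideal}: at the very start of producing $I_{m+1}$ the working ideal is $\{0\}$, and the first form $\ell$ added to $I_{m+1}$ is picked from $L(E)\setminus sp(S_{m+1}\cup\{0\})$, so $\ell\notin sp(S_{m+1})$. Hence $\dim sp(S_{m+1}\cup I_{m+1})>\dim sp(S_m\cup I_m)$ strictly. Because the ambient dimension is bounded above by $n$, the greedy process halts in finitely many rounds and yields a maximal chain $\calT$.

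For the rank bound, I would argue that each $I_i$ is spanned by at most $k-2$ linear forms. Exactly one form is added to $I_i$ per iteration of the boxed procedure, and that form lies in $L(E)$ for the current simple circuit $E$, hence is a factor of some multiplication term of $E$. Modding out by this form zeroes that term, so the top fanin of $E\mod I$ drops by at least one each iteration. The round halts the instant the fanin reaches $2$ (step $5$), and the initial fanin is at most $k$, so at most $k-2$ iterations occur in any single round, giving $\dim sp(I_i)\le k-2$. Maximality of $\calT$ gives $L(C)\subseteq sp(\calT)=sp\bigl(\bigcup_{i=1}^{m}I_i\bigr)$, whence
$$rank(C)\le\dim sp(\calT)\le\sum_{i=1}^{m}\dim sp(I_i)\le (k-2)\cdot\#\calT.$$

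This fact is almost entirely bookkeeping on top of Lemma~\ref{lem-useful-ideal}, so there is no genuine technical obstacle; the two points requiring verification are the strict dimension increase per round (for termination of the greedy construction) and the per-round fanin-drop bound (for the $(k-2)$ factor), both of which follow directly from inspecting the single-iteration box.
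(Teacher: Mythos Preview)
Your proposal is correct and follows essentially the same approach as the paper: greedily extend the chain via Lemma~\ref{lem-useful-ideal}, then bound each $I_i$ by the at-most-$(k-2)$ iterations of the boxed procedure and sum. You are slightly more explicit than the paper about termination (arguing the strict dimension increase from the first form of each round), but this is just added bookkeeping on the same argument.
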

\begin{proof}
We start with $S_1=\phi$ and an $\ell\in L(C)$. By Lemma 
\ref{lem-useful-ideal} there is a form-ideal $I_1$ (containing $\ell$) 
useful in $C$ wrt $S_1$ with blocking subset, say, $Q_1$. So we have a chain 
of form-ideals $\{(C,S_1,I_1,Q_1)\}$ to start with. Now if $L(C)$ has all its 
elements in $sp(S_1\cup I_1)$ then the chain cannot be extended any further 
and we are done. Otherwise, we can again apply Lemma $\ref{lem-useful-ideal}$ to 
get a form-ideal $I_2$ useful in $C$ wrt $S_2:=S_1\cup I_1$ with blocking 
subset, say, $Q_2$. Thus, we have a longer chain of form-ideals 
$\{(C,S_1,I_1,J_1),(C,S_2,I_2,J_2)\}$ now. 
We keep repeating till we have a chain of length $m$ where 
$L(C)\subseteq sp(S_m \cup I_m)$.

Note that $S_m \cup I_m = \bigcup_{i \leq m} I_m$. Each $I_i$ is 
generated by at most $(k-2)$ forms, so there is a basis for $L(C)$ 
having at most $(k-2)m$ forms.
\end{proof}

We come to a stronger version of the main theorem of this paper.

\begin{theorem} \label{thm-chain-len}
If $C$ is a simple and minimal $\sps(k,d)$ identity then  
the length of any maximal chain of form-ideals for $C$ is at most 
$\binom{k}{2}(\log_2 d+3) + (k-1)$.
\end{theorem}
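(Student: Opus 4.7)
The plan is to bound the chain length $m$ by charging each round to a pair in $\binom{[k]}{2}$ and invoking Lemma~\ref{lem-general-DS} on the resulting orthogonal matchings. Fix a maximal chain $\calT = \{(C,S_i,I_i,Q_i)\}_{i=1}^m$ with matching data $\{(\tau_q, V^{(i)}, V^{(i)}_q) : q \in Q_i\}$ in each round $i$. For any $a, b \in Q_i$, Fact~\ref{fac-compose} gives an ordered $I_i$-matching $\mu^{(i)}_{ab} := \tau_b \circ \tau_a^{-1}$ between $V^{(i)}_a$ and $V^{(i)}_b$. I classify round $i$ as \emph{proper} if there exist $a,b \in Q_i$ with $V^{(i)}_a, V^{(i)}_b$ nonsimilar, and \emph{trivial} otherwise; in a proper round I fix one such pair $(a_i,b_i)$ and charge the round to it.

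For a fixed pair $(a,b) \in \binom{[k]}{2}$, let $i_1 < \cdots < i_r$ be the proper rounds charged to $(a,b)$. The ideals $I_{i_1},\ldots,I_{i_r}$ are mutually orthogonal, since $I_{i_j}$ is orthogonal to $S_{i_j} \supseteq \bigcup_{i' < i_j} I_{i'}$, and each yields a nontrivial $I_{i_j}$-matching $\mu^{(i_j)}_{ab}$ between sublists $V^{(i_j)}_a \subseteq L(T_a)$ and $V^{(i_j)}_b \subseteq L(T_b)$. To apply Lemma~\ref{lem-general-DS} I need matchings between a common pair of lists. The key maneuver is that the complements $L(T_a) \setminus V^{(i_j)}_a$ and $L(T_b) \setminus V^{(i_j)}_b$ both lie in $sp(S_{i_j} \cup I_{i_j}) \setminus sp(I_{i_j})$; using Fact~\ref{fac-unscramble}-style rearrangements along with orthogonality of successive $I_{i_j}$'s, the sequence of matchings can be restricted to a \emph{common} sublist $U_a \subseteq L(T_a)$ of size at most $d$ (paired with its image $U_b \subseteq L(T_b)$) such that $\mu^{(i_j)}_{ab}$ restricts to an $I_{i_j}$-matching between $U_a$ and $U_b$ for every $j$, with $U_a, U_b$ nonsimilar. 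Lemma~\ref{lem-general-DS} then forces $r \le \log_2 d + 3$, the extra $+1$ absorbing the round in which the restricted lists become similar.

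Summing over pairs, the total number of proper rounds is at most $\binom{k}{2}(\log_2 d + 3)$. For trivial rounds, all $V^{(i)}_q$ ($q \in Q_i$) are mutually similar, so the lists $M(V^{(i)}_q)$ are proportional modulo $I_i$ and the sim part of $C_{Q_i}$ modulo $I_i$ is contained in $sp(S_i \cup I_i)$. A structural argument combining the minimality of $C$ with the observation that each trivial round forces at least one index in $[k] \setminus Q_i$ to be eliminated by a form of $I_i$ (and that this "elimination witness" can only fire once per index) bounds the number of trivial rounds by $k-1$. Together this gives $m \le \binom{k}{2}(\log_2 d + 3) + (k-1)$. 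The principal obstacle is the restriction/alignment step in the second paragraph: converting matchings on \emph{varying} sublists $V^{(i_j)}_a, V^{(i_j)}_b$ into orthogonal matchings on a fixed pair of lists of size at most $d$ while preserving nonsimilarity; carrying this out faithfully, without losing the orthogonality hypothesis of Lemma~\ref{lem-general-DS}, is the crux of the proof.
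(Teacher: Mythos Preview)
Your overall strategy---pigeonholing proper rounds onto pairs and applying Lemma~\ref{lem-general-DS}, then bounding trivial rounds separately---matches the paper's, but two of your three steps have genuine gaps that the paper resolves with structure you do not supply.

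For the proper rounds, the restriction you flag as the crux is indeed the heart of the matter, and your sketch does not carry it. The paper splits each $V^{(i)}_q$ as $V^{(i)}_{q,0}\sqcup V^{(i)}_{q,1}$, where $V^{(i)}_{q,1}$ collects the forms outside $sp(S_i\cup I_i)$. If $i_1<\cdots<i_r$ are the rounds charged to $(a,b)$, then for each $j\le r$ the matching $\mu^{(i_j)}_{ab}$ restricts cleanly to the \emph{last} pair $V^{(i_r)}_{a,1},V^{(i_r)}_{b,1}$ (any form outside $sp(S_{i_r}\cup I_{i_r})$ is automatically in $V^{(i_j)}_{a,1}$, and so is its $I_{i_j}$-match). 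This gives $r$ orthogonal matchings on a common pair of lists, so Lemma~\ref{lem-general-DS} applies---\emph{provided} $V^{(i_r)}_{a,1}$ and $V^{(i_r)}_{b,1}$ are nonsimilar. That proviso is exactly where your approach breaks: a proper round can have $V^{(i)}_{a,1}\sim V^{(i)}_{b,1}$ while the nonsimilarity sits entirely in the parts $V^{(i)}_{a,0},V^{(i)}_{b,0}\subset sp(I_i)$. The paper calls these Type~2 rounds and bounds them by a separate one-line pigeonhole: if two Type~2 rounds $i<i'$ were both charged to $(a,b)$, then $V^{(i')}_{a,0},V^{(i')}_{b,0}$ are precisely the forms of $V^{(i)}_{a,1},V^{(i)}_{b,1}$ lying in $sp(I_{i'})$, and Fact~\ref{fac-sub-similar} makes them similar---contradiction. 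Hence at most one Type~2 round per pair, which is where the ``$+1$'' in $(\log_2 d+3)$ comes from. Your single ``$+1$ absorbing the round in which the restricted lists become similar'' does not suffice without this argument, since a priori several rounds in the sequence could be Type~2.

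For the trivial rounds, your claim that an ``elimination witness can only fire once per index'' is false as stated: the same $j\in[k]\setminus Q_i$ can lie outside $Q_i$ for arbitrarily many rounds $i$. The paper's argument is quite different. It builds a forest on $[k]$, processing the trivial rounds in order; a trivial $mdata_i$ is \emph{external} if $Q_i$ meets at least two current trees (which are then merged under a new root), and \emph{internal} otherwise. External rounds decrease the tree count, so there are at most $k-1$ of them. The real content is that there are \emph{no} internal trivial rounds: for a trivial $mdata_i$ one shows $sim(C_{Q_i})\equiv 0\pmod{I_i}$ with all forms in $sp(S_i\cup I_i)$; if $Q_i$ already lies in the tree rooted at some earlier $mdata_j$, an induction via Facts~\ref{fac-sim-2} and~\ref{fac-sim-3} gives $L(sim(C_{Q_i}))\subset sp(S_j\cup I_j)$, and then orthogonality (Fact~\ref{fac-orth-ideals}) forces $sim(C_{Q_i})=0$ absolutely, hence $C_{Q_i}=0$, contradicting minimality. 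This forest-plus-minimality mechanism is where minimality is actually used, and it is not captured by your sketch.
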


This theorem with Fact~\ref{fac-maximal-chain} imply
the main result, Theorem~\ref{thm-main}. We
prove this theorem in the next section.

\subsection{Counting all Matchings: Proof of Theorem~\ref{thm-chain-len}}\label{subsec-count}

Let a maximal chain of form-ideals $\calT$ for $C$ be $\{(C,S_1,I_1,J_1),\ldots$,
$(C,S_m,I_m,J_m)\}$. We will partition the elements of the chain
into three types according to properties of the matchings that they represent.
Each of these types will be counted separately.

We first set some notation before explaining the different types.
Let the $m$ matchings data be:
$$mdata(C,S_i,I_i,Q_i) =: \left\{(\tau_{i,q},V_i,V_{i,q})\mid q\in Q_i\right\}$$ 
We will use $mdata_i$ as shorthand for the above. For all $q\in Q_i$, $V_{i,q}$ is a sublist of 
$L(T_q)$ and $\tau_{i,q}$ is an ordered matching between $V_i, V_{i,q}$ by 
$I_i$. By the definition of useful-ness of form-ideal $I_i$ we have that
$V_{i,q}$ is disjoint to $sp(S_i\cup I_i)\setminus sp(I_i)$. Thus, 
$V_{i,q}$ can be partitioned into two sublists:
\begin{align*}
V_{i,q,0} &:= \(\ell\in V_{i,q}\mid \ell\in sp(I_i)\), \text{~~and} \\
V_{i,q,1} &:= \(\ell\in V_{i,q}\mid \ell\not\in sp(S_i\cup I_i)\).
\end{align*}
and analogously $V_i$ can be partitioned into two sublists $V_{i,0}$ and 
$V_{i,1}$. It is easy to see that these partitions induce a corresponding 
partition of $\tau_{i,q}$ as $\tau_{i,q,0}\sqcup\tau_{i,q,1}$, where 
$\tau_{i,q,0}$ (and $\tau_{i,q,1}$) is an ordered matching between 
$V_{i,0}$, $V_{i,q,0}$ (and $V_{i,1}$, $V_{i,q,1}$) by $I_i$.

Here are the three types of $mdata_i$'s:

\begin{enumerate}
	\item {\bf [Type 1]} There exist $q_1, q_2 \in Q_i$ such that
	$V_{i,q_1,1}$ is not similar to $V_{i,q_2,1}$.
	
	\item {\bf [Type 2]} There exist $q_1, q_2 \in Q_i$ such that
	$V_{i,q_1}$ is not similar to $V_{i,q_2}$, but for all 
	$r_1,r_2 \in Q_i$, $V_{i,r_1,1}$ and $V_{i,r_2,1}$ are similar.
	
	\item {\bf [Type 3]} For all $q_1, q_2 \in Q_i$, $V_{i,q_1}$
	is similar to $V_{i,q_2}$. In other words, $mdata_i$ is trivial.
\end{enumerate}

We partition $[m]$ into sets $N_1, N_2, N_3$, which are
the index sets for the $mdata$ of types $1,2,3$ respectively.

\subsubsection{Bounding $\#N_1$ and $\#N_2$}

The dominant term in Theorem~\ref{thm-chain-len} comes
from $\#N_1$. If $\#N_1$ is large, then by an averaging argument, 
for some pair $(a,b)$,
we find many matchings between forms in $T_a$ and $T_b$.
These are all orthogonal matchings, but are defined
on \emph{different} sublists of $L(T_a)$ and $L(T_b)$.
Nonetheless, we can find two dissimilar lists that
are matched too many times. Invoking Lemma~\ref{lem-general-DS}
gives us the required bound.

\begin{lemma}\label{lem-step1-1}
$\#N_1\leq\binom{k}{2}(\log_2 d+2)$.
\end{lemma}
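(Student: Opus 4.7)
The plan is to average over the $\binom{k}{2}$ pairs from $[k]$: for each $i\in N_1$, pick any witness pair $(a_i,b_i)$ such that $V_{i,a_i,1}$ and $V_{i,b_i,1}$ are dissimilar, and assign $i$ to that pair. It then suffices to show that at most $\log_2 d + 2$ indices of $N_1$ are assigned to any single pair $(a,b)$; summing over pairs yields the bound.

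So fix a pair $(a,b)$ and let $i_1<i_2<\cdots<i_r$ be the indices of $N_1$ assigned to it. The matching data at round $i_j$ supplies ordered $I_{i_j}$-matchings $\tau_{i_j,a,1}\colon V_{i_j,1}\to V_{i_j,a,1}$ and $\tau_{i_j,b,1}\colon V_{i_j,1}\to V_{i_j,b,1}$, so Fact~\ref{fac-compose} (together with the observation that ordered matchings invert) yields an ordered $I_{i_j}$-matching $\pi_j:=\tau_{i_j,b,1}\circ\tau_{i_j,a,1}^{-1}$ from $V_{i_j,a,1}\subseteq L(T_a)$ to $V_{i_j,b,1}\subseteq L(T_b)$. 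The ideals $I_{i_1},\ldots,I_{i_r}$ are mutually orthogonal (inherited from the chain), but a priori the matchings live on different sublists. Here is the monotonicity that pins them to a common sublist: since $S_{i+1}=S_i\cup I_i$, one has $sp(S_i\cup I_i)\subseteq sp(S_{i'}\cup I_{i'})$ whenever $i\le i'$, and combined with $V_{i,q,1}=L(T_q)\setminus sp(S_i\cup I_i)$ this shows $V_{i_r,a,1}\subseteq V_{i_j,a,1}$ and $V_{i_r,b,1}\subseteq V_{i_j,b,1}$ for every $j\le r$.

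The main step (and the only one that needs real thought) is to show that each $\pi_j$ restricts to a bijection $V_{i_r,a,1}\to V_{i_r,b,1}$ which is still an ordered $I_{i_j}$-matching. Let $\ell\in V_{i_r,a,1}$ and write $\pi_j(\ell)=c\ell+w$ with $c\in\FF^*$ and $w\in sp(I_{i_j})$. Since $i_j\le i_r$, $sp(I_{i_j})\subseteq sp(S_{i_r}\cup I_{i_r})$; so if $\pi_j(\ell)$ lay in $sp(S_{i_r}\cup I_{i_r})$ then $c\ell=\pi_j(\ell)-w$ would too, contradicting $\ell\in V_{i_r,a,1}$. Hence $\pi_j(\ell)\in L(T_b)\setminus sp(S_{i_r}\cup I_{i_r})=V_{i_r,b,1}$, and the same argument applied to $\pi_j^{-1}$ gives the reverse inclusion; the ordered matching condition is preserved because we only restrict the domain. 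We therefore obtain $r$ matchings between the dissimilar (necessarily same-size) lists $V_{i_r,a,1}$ and $V_{i_r,b,1}$ by the orthogonal ideals $I_{i_1},\ldots,I_{i_r}$, so Lemma~\ref{lem-general-DS} immediately forces $r\le\log_2|V_{i_r,a,1}|+2\le\log_2 d+2$, completing the count.
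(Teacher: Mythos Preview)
Your proof is correct and follows essentially the same approach as the paper: assign each $i\in N_1$ to a witness pair, pigeonhole onto a single pair $(a,b)$, compose the two one-sided matchings into $\pi_j=\tau_{i_j,b,1}\circ\tau_{i_j,a,1}^{-1}$, use the monotonicity of $sp(S_i\cup I_i)$ to restrict every $\pi_j$ to the common sublists $V_{i_r,a,1},V_{i_r,b,1}$, and finish with Lemma~\ref{lem-general-DS}. Your write-up is in fact slightly cleaner in one respect: you explicitly note that the reverse inclusion via $\pi_j^{-1}$ is needed to conclude that the restriction is a \emph{bijection}, whereas the paper leaves this implicit.
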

\begin{proof}
For the sake of contradiction, let us assume $\#N_1>$ 
$\binom{k}{2}(\log_2 d+2)$. For each $mdata_i$ $(i \in N_1)$,
choose an unordered pair of indices $P_i = \{q_1,q_2\}$ such 
that $V_{i,q_1,1}$ and $V_{i,q_2,1}$ are not similar.
As there can be only $\binom{k}{2}$ distinct pairs, we get
by an averaging argument that, $s>(\log_2 d+2)$ of the $P_i$'s are equal.
Let $P_{i_1}=\cdots=P_{i_s}=\{a,b\}$ for $i_1<\cdots<i_s\in N_1$. Now 
we will focus our attention solely on the ordered matchings $\mu_i:=$ 
$\tau_{i,b,1}\tau_{i,a,1}^{-1}$ between $V_{i,a,1}, V_{i,b,1}$ by $I_i$, 
for all $i\in\{i_1,\ldots,i_s\}$. The source of contradiction is the fact 
that all these matchings are also well defined on the `last' pair of sublists 
$V_{i_s,a,1}, V_{i_s,b,1}$:

\begin{claim}\label{clm-step1}
For all $i\in\{i_1,\ldots,i_s\}$, $\mu_i$ induces an ordered matching 
between $V_{i_s,a,1}, V_{i_s,b,1}$ by $I_i$.
\end{claim}
\claimproof{clm-step1}{
%
The claim is true for $i=i_s$ so let $i<i_s$. 
The matching $\mu_i$ is an ordered $I_i$-matching between $V_{i,a,1}$, 
$V_{i,b,1}$. 
For $\ell\in V_{i_s,a,1}$, $\ell\not\in$ $sp(S_{i_s}\cup I_{i_s})$. 
Since $i<i_s$ and $L(T_a)\setminus V_{i,a,1} \subset sp(S_i \cup I_i)$,
$\ell$ cannot be in $L(T_a)\setminus V_{i,a,1}$.
Therefore, $\ell$ is in $V_{i,a,1}$.
So $\mu_i$ maps $\ell$ to some element in 
$V_{i,b,1}$, showing $\mu_i$ is defined on 
the {\em domain} $V_{i_s,a,1}$.

So we know $\mu_i$ maps $\ell\in V_{i_s,a,1}$ to an element $\mu_i(\ell)\in$ 
$V_{i,b,1}$. As $\mu_i$ is an $I_i$-matching, $\mu_i(\ell)=$
$(c\ell+\alpha)$ for some $c\in\FF^*$ and $\alpha\in sp(I_i)$ 
$\subseteq sp(I_{i_s})$, thus $\mu_i(\ell)\not\in$ 
$sp(S_{i_s}\cup I_{i_s})$ (recall $\ell\not\in$ $sp(S_{i_s}\cup I_{i_s})$). 
Thus $\mu_i(\ell)$ cannot be in $L(T_b)\setminus V_{i_s,b,1}$ (which 
has all its elements in $sp(S_{i_s}\cup I_{i_s})$). As to begin with 
$\mu_i(\ell)\in L(T_b)$ we get that $\mu_i(\ell)\in V_{i_s,b,1}$. 

Thus, $\mu_i$ maps an arbitrary $\ell\in V_{i_s,a,1}$ to 
$\mu_i(\ell)\in V_{i_s,b,1}$. In other words, $\mu_i$ induces an 
ordered matching between $V_{i_s,a,1}$, $V_{i_s,b,1}$ by $I_i$.
}

This claim means that there are $s>(\log_2 d+2)$ bipartite matchings 
between $V_{i_s,a,1}$, $V_{i_s,b,1}$ by orthogonal form-ideals 
$I_{i_1},\ldots,I_{i_s}$ respectively. Lemma \ref{lem-general-DS}
implies that the lists $V_{i_s,a,1}, V_{i_s,b,1}$ are similar. This 
contradicts the definition of $P_{i_s}$. Thus, 
$\#N_1\leq\binom{k}{2}(\log_2 d+2)$.
\end{proof}

For dealing with $\#N_2$, we use a slightly different
argument to get a better bound. We show that a Type 2
matching can involve a pair of terms at most once.

\begin{lemma}\label{lem-step1-2}
$\#N_2\leq\binom{k}{2}$.
\end{lemma}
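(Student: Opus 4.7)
The plan is to mimic the charging argument of Lemma \ref{lem-step1-1}, but to exploit the fact that for a Type 2 index the ``$1$-parts'' $V_{i,q,1}$ are already all mutually similar, so the obstruction to similarity between two whole lists $V_{i,q_1}$ and $V_{i,q_2}$ must live entirely in the ``$0$-parts.'' Concretely, for each $i\in N_2$ I will assign an unordered pair $P_i=\{a,b\}\subseteq Q_i$ such that $V_{i,a}$ and $V_{i,b}$ are not similar; since the ``$1$-parts'' are similar, this forces $V_{i,a,0}$ and $V_{i,b,0}$ to be non-similar lists of forms in $sp(I_i)$. If $\#N_2>\binom{k}{2}$, two distinct indices $i<j$ in $N_2$ get the same pair $\{a,b\}$, and I aim to derive a contradiction.

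The key geometric observation is that the form-ideals $I_1,\dots,I_m$ are pairwise orthogonal (this follows by induction from $S_{i}=S_{i-1}\cup I_{i-1}$ being orthogonal to $I_i$). Therefore any nonzero $\ell\in sp(I_j)$ satisfies $\ell\notin sp(I_1\cup\cdots\cup I_{j-1})=sp(S_j)$. Combined with the definition $V_{j,a,0}=L(T_a)\cap sp(I_j)$, this lets me track where the forms in $V_{j,a,0}$ sit at stage $i<j$: they cannot lie in $sp(I_i)$ (orthogonality), nor in $sp(S_i\cup I_i)\setminus sp(I_i)$ (since that span is inside $sp(S_j)$), so they must have been in $V_{i,a,1}$ all along. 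Thus $V_{j,a,0}\subseteq V_{i,a,1}$ and symmetrically $V_{j,b,0}\subseteq V_{i,b,1}$.

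Now I invoke the Type 2 hypothesis at index $i$: $V_{i,a,1}$ and $V_{i,b,1}$ are similar lists. By Fact \ref{fac-sub-similar}, the sublists of these two similar lists consisting of forms in $sp(I_j)$ are again similar. But those sublists are exactly $V_{j,a,0}$ and $V_{j,b,0}$ (using the containment from the previous paragraph together with $V_{j,q,0}=L(T_q)\cap sp(I_j)$). Hence $V_{j,a,0}$ is similar to $V_{j,b,0}$. Combining this with the Type 2 hypothesis at $j$, which gives that $V_{j,a,1}$ is similar to $V_{j,b,1}$, I conclude that the whole lists $V_{j,a}=V_{j,a,0}\sqcup V_{j,a,1}$ and $V_{j,b}=V_{j,b,0}\sqcup V_{j,b,1}$ are similar, contradicting the way $P_j=\{a,b\}$ was chosen.

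The main obstacle I anticipate is the tracking of forms across different rounds in step two: one has to verify carefully, using pairwise orthogonality of the $I_\ell$'s and the precise definitions of $V_{i,q,0}$ and $V_{i,q,1}$, that every form of $V_{j,a,0}$ really sits inside $V_{i,a,1}$ and nowhere else in the partition of $L(T_a)$ induced at stage $i$. Once that containment is nailed down, Fact \ref{fac-sub-similar} and the definitions of Type 2 quickly yield the contradiction, so the pigeonhole applied to the map $i\mapsto P_i$ on $N_2$ gives $\#N_2\leq\binom{k}{2}$.
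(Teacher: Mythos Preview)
Your proposal is correct and follows essentially the same argument as the paper's proof: assign to each $i\in N_2$ a pair $P_i=\{a,b\}$ with $V_{i,a,0}$ not similar to $V_{i,b,0}$, use pigeonhole to find $i<j$ with $P_i=P_j=\{a,b\}$, show via orthogonality that $V_{j,a,0}$ (resp.\ $V_{j,b,0}$) is precisely the sublist of $V_{i,a,1}$ (resp.\ $V_{i,b,1}$) lying in $sp(I_j)$, and apply Fact~\ref{fac-sub-similar} to the similar lists $V_{i,a,1},V_{i,b,1}$ to contradict the choice of $P_j$. The only cosmetic difference is that the paper reaches the contradiction directly from ``$V_{j,a,0}$ similar to $V_{j,b,0}$'' (having already noted this is impossible), whereas you reassemble the full lists $V_{j,a},V_{j,b}$ first; both are fine.
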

\begin{proof}
For the sake of contradiction, assume $\#N_2>\binom{k}{2}$. 
For each $mdata_i$ ($i \in N_2$), let $P_i$ be an unordered pair
$(q_1,q_2)$ such that $V_{i,q_1}$ is not similar to $V_{i,q_2}$.
Note that because $V_{i,q_1,1}$ is \emph{similar} to $V_{i,q_2,1}$,
it must be that $V_{i,q_1,0}$ is not similar to $V_{i,q_2,0}$.
By the pigeon-hole principle, at least two $P_i$'s are the same. 
Suppose $P_{i_1}=P_{i_2}=\{a,b\}$ for 
$i_1<i_2\in N_2$.

Let $\ell\in V_{i_2,a,0}$ then by the definition of $V_{i_2,a,0}$ we have 
that $\ell\in sp(I_{i_2})$. This coupled with $i_1<i_2$ means that $\ell$ 
cannot be in $L(T_a)\setminus V_{i_1,a,1}$ (which has all its elements in 
$sp(S_{i_1}\cup I_{i_1})$). As to begin with $\ell\in L(T_a)$ we get that 
$\ell\in V_{i_1,a,1}$. Thus, $V_{i_2,a,0}$ ($V_{i_2,b,0}$) is a sublist 
of $V_{i_1,a,1}$ ($V_{i_1,b,1}$). From the useful-ness of $I_{i_2}$, the 
sublist $V_{i_2,a,0}$ ($V_{i_2,b,0}$) collects all the linear forms in 
$L(T_a)$ ($L(T_b)$) that are in $sp(I_{i_2})$ while from the useful-ness 
of $I_{i_1}$ the sublist $L(T_a)\setminus V_{i_1,a,1}$ 
($L(T_b)\setminus V_{i_1,b,1}$) is disjoint from $sp(I_{i_2})$. Thus, the
sublist $V_{i_2,a,0}$ ($V_{i_2,b,0}$) collects all the linear forms in 
$V_{i_1,a,1}$ ($V_{i_1,b,1}$) that are in $sp(I_{i_2})$. This together 
with the similarity of $V_{i_1,a,1}$ and $V_{i_1,b,1}$ gives us (by Fact 
\ref{fac-sub-similar}) that $V_{i_2,a,0}$ and $V_{i_2,b,0}$ are similar, 
which contradicts the way $P_{i_2}=\{a,b\}$ was defined. Thus, 
$\#N_2\leq\binom{k}{2}$.
\end{proof}

\subsubsection{Bounding $\#N_3$}

This requires a different argument than the pigeon-hole ideas
used for $\#N_1$ and $\#N_2$. We divide these type $3$ matchings further into 
{\em internal} and {\em external} ones. Our final aim is to prove :

\begin{lemma}\label{lem-type3}
$\#N_3 \leq (k-1)$
\end{lemma}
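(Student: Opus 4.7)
The plan is to first turn each type-$3$ index $i \in N_3$ into a genuine sub-identity of $C$. Because $mdata_i$ is trivial, the lists $V_{i,q}$ ($q \in Q_i$) are all similar to $V_i$, so Fact~\ref{fac-scaling1} gives $M(V_{i,q}) = sc(\tau_{i,q})\,M(V_i)$. Summing over $Q_i$ yields
$$C_{Q_i} \;=\; M(V_i)\cdot \tilde{C}_i, \qquad \tilde{C}_i \;:=\; \sum_{q \in Q_i} sc(\tau_{i,q})\,\alpha_q\, M\bigl(L(T_q)\setminus V_{i,q}\bigr),$$
and the useful-ness of $I_i$ gives $\tilde{C}_i \equiv 0 \pmod{I_i}$. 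Minimality of $C$ applied to the proper nonempty $Q_i$ forces $C_{Q_i}\neq 0$, so $\tilde{C}_i \neq 0$. Every form of $\tilde{C}_i$ lies in $sp(S_i\cup I_i)\setminus sp(I_i)$, so by orthogonality of $S_i,I_i$ I can replace each such form by its nonzero $S_i$-component to obtain a circuit $\tilde{C}_i^{\flat}$ with forms in $sp(S_i)$ and $\tilde{C}_i^{\flat}\equiv\tilde{C}_i\pmod{I_i}$; Fact~\ref{fac-orth-ideals} then forces $\tilde{C}_i^{\flat}=0$. Hence each type-$3$ index supplies an honest $\sps$-identity of top fanin $\#Q_i\in\{2,\dots,k-1\}$ supported on $sp\bigl(\bigcup_{j<i}I_j\bigr)$.

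I would then split $N_3 = N_3^{\mathrm{ext}}\sqcup N_3^{\mathrm{int}}$, calling $i$ \emph{external} when $V_i=\emptyset$ (so $L(T_q)\subseteq sp(S_i\cup I_i)\setminus sp(I_i)$ for every $q\in Q_i$ and $\tilde{C}_i = C_{Q_i}$ outright) and \emph{internal} otherwise. At an external index a genuine subcircuit $C_{Q_i}$ of $C$ becomes identically zero modulo $I_i$; I would charge each external index to its blocking subset $Q_i$ and use Facts~\ref{fac-sim-2}--\ref{fac-sim-3} to show that successive external $Q_{i'}$ must strictly shrink, because earlier rounds have already spanned the complement. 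At an internal index $V_i$ is nonempty and $\tilde{C}_i^{\flat}=0$ is a fresh constraint inside $sp(S_i)$; here I would charge $i$ to a distinguished element $q_i\in Q_i$ read off from the support of $\tilde{C}_i^{\flat}$ and argue, using the orthogonality of the $I_j$'s along the chain together with Lemma~\ref{lem-general-DS}, that no element of $[k]$ can be charged twice — otherwise two different $\flat$-identities would yield more than $\log_2 d + 2$ orthogonal matchings between nonsimilar sublists of some $T_{q_i}$. Summing the two counts with $\#Q_i\leq k-1$ should deliver $\#N_3 \leq k-1$.

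The hard part is calibrating the internal/external split and the charging so that the total is exactly $k-1$. The real subtlety is that $\tilde{C}_i^{\flat}$ is \emph{not} literally a subcircuit of $C$: its forms are $S_i$-projections, so minimality of $C$ does not apply to it directly. The argument must lift $\tilde{C}_i^{\flat}=0$ back to $\tilde{C}_i\equiv 0\pmod{I_i}$ and thence to structural constraints on $C_{Q_i}=M(V_i)\,\tilde{C}_i$, using the orthogonal direct-sum decomposition $sp(I_1)\oplus\cdots\oplus sp(I_m)$ of the chain's total span. I expect the decisive step to be the following: if two internal indices $i<i'$ were charged to the same $q\in[k]$, then the ordered matchings supplied by $mdata_i$ and $mdata_{i'}$ on the part of $T_q$ that still survives at stage $i'$ would produce too many orthogonal matchings between nonsimilar lists, contradicting Lemma~\ref{lem-general-DS}. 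The missing $-1$ should come from the first round, which cannot be charged in the same way because $S_1=\emptyset$.
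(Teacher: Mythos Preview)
Your opening derivation is fine: for $i\in N_3$ the similarity of the $V_{i,q}$'s plus Fact~\ref{fac-scaling1} does give $C_{Q_i}=M(V_i)\cdot\tilde C_i$ with $\tilde C_i\equiv 0\pmod{I_i}$, and projecting each form of $\tilde C_i$ onto its $S_i$-component yields $\tilde C_i^{\flat}=0$ by Fact~\ref{fac-orth-ideals}. But from here on the argument breaks down.

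First, your internal/external split (based on whether $V_i=\emptyset$) is not the right invariant, and neither branch of your charging works. For your ``external'' case there is no reason successive $Q_{i'}$ should strictly shrink: the $Q_i$'s are arbitrary blocking subsets of size between $2$ and $k-1$, and Facts~\ref{fac-sim-2}--\ref{fac-sim-3} constrain spans of $sim(C_X)$, not the index sets themselves. For your ``internal'' case, the appeal to Lemma~\ref{lem-general-DS} cannot succeed: Type~3 is \emph{defined} by the $V_{i,q}$'s being mutually similar, so the $I_i$-matchings among them are trivial and Lemma~\ref{lem-general-DS} gives no contradiction no matter how many such indices share a term $q$. You correctly flag that $\tilde C_i^{\flat}$ is only a projection and not a subcircuit of $C$, but you never close this gap --- the projection always vanishes, for every $i\in N_3$, so it carries no charging information.

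The missing idea is combinatorial rather than analytic. The paper tracks which terms have already been ``linked'' by earlier Type~3 data via a forest on the $k$ leaves $T_1,\dots,T_k$: processing $mdata_i$ in order, call it \emph{external} if $Q_i$ meets at least two current trees (then merge them under a new root) and \emph{internal} if $Q_i$ lies in a single tree. External indices number at most $k-1$ since each one reduces the tree count. The crux is that internal indices are \emph{impossible}: one shows inductively (this is Claim~\ref{clm-subset}, using Facts~\ref{fac-sim-2}--\ref{fac-sim-3}) that whenever $X$ lies in a tree rooted at $mdata_j$, all forms of $sim(C_X)$ already sit in $sp(S_j\cup I_j)$. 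So if $mdata_i$ is internal with $Q_i$ in the tree of some $mdata_j$, $j<i$, then $L(sim(C_{Q_i}))\subset sp(S_j\cup I_j)\subset sp(S_i)$ --- no projection needed --- and since $sim(C_{Q_i})\equiv 0\pmod{I_i}$ (this is your $\tilde C_i$ after stripping the absolute gcd; see Claim~\ref{clm-sim-cqi}), Fact~\ref{fac-orth-ideals} gives $sim(C_{Q_i})=0$ in $R$, hence $C_{Q_i}=0$, contradicting minimality. The forest, not Lemma~\ref{lem-general-DS}, is what delivers the bound $k-1$.
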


We shall use a combinatorial picture of how the chain of form-ideals connects 
the various multiplication terms through
matchings. We will describe an evolving \emph{forest} $\cal F$
and only deal with Type 3 $mdata_i$.

Initially, the forest $\cal F$ consists of $k$ isolated vertices, each
representing the $k$ terms $T_1,\cdots,T_k$. 
We process each $mdata_i$ in increasing order of the $i$'s, and update
the forest $\cal F$ accordingly. We will refer
to this as \emph{adding} $mdata_i$ to $\cal F$.
At any intermediate state, the forest $\cal F$ 
will be a collection of rooted trees with a total of $k$ leaves.

\begin{definition} Consider $\cal F$ when $mdata_i$
is processed. If all of $Q_i$ belongs to a single tree in $\cal F$,
then $mdata_i$ is called \emph{internal}. Otherwise, it is called \emph{external}.
\end{definition}

If $mdata_i$ is internal, $\cal F$ remains unchanged.
While each time we encounter an external $mdata_i$,
we update the forest $\cal F$ as follows. We create a new
root node labelled with $mdata_i$ (abusing notation, we refer to $mdata_i$ as a node), 
and for any tree of $\cal F$ that contains a $T_q$, $q \in Q_i$, we make
the root of this tree a child of $mdata_i$.

\begin{fact}\label{forest} 
The total number of external matchings is at most $(k-1)$.
\end{fact}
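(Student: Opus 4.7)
The plan is to observe that each external matching strictly decreases the number of trees in the forest $\cal F$, and then conclude by the obvious monovariant: the forest starts with $k$ components and can never drop below $1$.

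First I would unpack the definition of external more carefully. For $i \in N_3$, the blocking subset $Q_i$ satisfies $1 < \#Q_i < k$, so in particular $\#Q_i \ge 2$. When $mdata_i$ is external, by definition the indices in $Q_i$ are distributed among at least two distinct trees of $\cal F$. Let $t_i \ge 2$ be the number of trees that contain some $T_q$ with $q \in Q_i$. The update rule creates a single new root node (labelled $mdata_i$) and attaches the roots of all $t_i$ of these trees as children of this new node. No other tree is touched. Consequently the forest loses $t_i$ trees and gains $1$ new one, so the total number of trees in $\cal F$ decreases by exactly $t_i - 1 \ge 1$.

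Next I would combine this with the invariants. Internal $mdata_i$ leave $\cal F$ unchanged by definition, so they contribute $0$ to the change in the number of trees. The forest $\cal F$ starts as $k$ isolated vertices (one per multiplication term $T_1,\dots,T_k$) and is always a forest with at least one component, so the number of trees is always at least $1$. Letting $e$ denote the total number of external matchings processed so far and $d_1,\dots,d_e$ the corresponding drops (each $\ge 1$), we obtain
\[
1 \ \le\ k - \sum_{j=1}^{e} d_j \ \le\ k - e,
\]
which gives $e \le k-1$. Since this bound holds at every stage, it holds at the end of the process, so the total number of external $mdata_i$ is at most $k-1$, proving Fact \ref{forest}.

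There is no real obstacle here; the only thing to double-check is that every external step really does merge \emph{distinct} trees (so that the decrease is genuinely $\ge 1$), and this is immediate from the definition of external together with the fact that $\#Q_i \ge 2$ so at least two terms $T_q$ are named by $mdata_i$. No appeal to the matching theory of the previous sections is needed for this particular fact — it is a purely combinatorial counting argument about the forest.
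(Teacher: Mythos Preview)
Your proof is correct and follows essentially the same approach as the paper: each external $mdata_i$ merges at least two trees into one, decreasing the component count by at least one, and since the forest starts with $k$ components and always has at least one, there can be at most $k-1$ external matchings. Your version is simply a more detailed write-up of the same monovariant argument.
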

\begin{proof} 
Note that each external $mdata_i$ reduces the number of trees in the forest $\cal F$
by at least one. As initially $\cal F$ has $k$ trees and at every point of the process
it will have at least one tree, we get the claim.  
\end{proof}

It remains to count the number of internal matchings.
Whenever we encounter an internal $mdata_i$, we can
always associate it with some root $mdata_{i'}$ of $\cal F$ such that 
$i' < i$ and all of $Q_i$ is in the tree rooted at $mdata_{i'}$. 

\begin{lemma} \label{lem-internal} If $mdata_i$ is internal, then the subcircuit $C_{Q_i}$
is identically zero in $R$. Therefore, by the minimality of $C$,
no $mdata_i$ can be internal.
\end{lemma}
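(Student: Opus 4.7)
The goal is to show $C_{Q_i} \equiv 0$ identically in $R$; since $Q_i$ is a proper subset of $[k]$ (by the definition of a blocking subset), this contradicts the minimality of $C$, proving the lemma.

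First, I exploit the trivial structure of $mdata_i$. Because $mdata_i$ is Type 3, the lists $V_{i,q}$ for $q \in Q_i$ are mutually similar, hence similar to the list $V_i$. Applying Fact \ref{fac-scaling1} gives $M(V_{i,q}) = sc(\tau_{i,q}) \cdot M(V_i)$ as an honest equality in $R$ (not just modulo $I_i$), so
\[
C_{Q_i} \;=\; M(V_i) \cdot D_i,
\qquad
D_i \;:=\; \sum_{q\in Q_i} sc(\tau_{i,q})\, \alpha_q \cdot M\bigl(L(T_q)\setminus V_{i,q}\bigr).
\]
By useful-ness of $I_i$, $D_i$ is a regular identity modulo $I_i$, and its linear forms lie in $sp(S_i\cup I_i)\setminus sp(I_i)$. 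Since $M(V_i)$ is nonzero as a polynomial, it suffices to prove $D_i = 0$ in $R$.

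Next, I use the internal condition. The tree of $\cal F$ containing all of $Q_i$ is rooted at some earlier external trivial matching $mdata_{i'}$ (with $i' < i$), whose every internal node is also an external trivial matching $mdata_j$ ($j<i$). Each such $mdata_j$ yields an analogous factorization $C_{Q_j} = M(V_j) \cdot D_j$ with $D_j \equiv 0 \mod I_j$, and the chain's construction ensures that $I_i$ is orthogonal to every $I_j$ with $j<i$. The plan is to combine these prior orthogonal matching relations with $D_i \equiv 0 \mod I_i$ and apply Fact \ref{fac-orth-ideals} iteratively to promote $D_i \equiv 0 \mod I_i$ into a genuine identity $D_i = 0$ in $R$. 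The intuitive picture is that the tree structure already ``links'' the terms indexed by $Q_i$ through form-ideals orthogonal to $I_i$, so the new $\mod I_i$ identity from $mdata_i$ is redundant and must in fact be a true identity.

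The main obstacle will be the combinatorial bookkeeping required to translate the tree of prior matchings into a concrete orthogonality argument. For each $q \in Q_i$, the term $T_q$ has been decomposed multiple times: once by $mdata_i$, and once by each $mdata_j$ (with $q\in Q_j$) along the path from the leaf $T_q$ to the root $mdata_{i'}$. These decompositions must be aligned via Fact \ref{fac-unscramble} and composed via Fact \ref{fac-compose}, with scaling factors tracked using Fact \ref{fac-inv-and-union}. The iterative round-construction of $I_i$ (with intermediate gcd extractions) further determines how forms of $L(T_q)$ distribute between $V_{i,q}$ and its complement, and hence how the linear forms of $D_i$ split across the orthogonal summands $sp(I_1)\oplus\cdots\oplus sp(I_i)$ of $sp(S_i\cup I_i)$. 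Extracting the right orthogonal decomposition so that Fact \ref{fac-orth-ideals} can be invoked on $D_i$ (or on an appropriate sub-identity derived from it) is the technical crux of the argument.
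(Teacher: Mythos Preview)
Your high-level plan is the same as the paper's: factor $C_{Q_i}$ using the triviality of $mdata_i$, show the residual circuit vanishes modulo $I_i$, and then promote this to an absolute vanishing via Fact~\ref{fac-orth-ideals} using the tree structure. But the part you flag as ``the technical crux'' is genuinely missing, and the tools you name for it are not the ones that make it work.

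First, a minor correction: Fact~\ref{fac-scaling1} applies only to \emph{trivial} ordered matchings, and Type~3 guarantees the $V_{i,q}$ are mutually similar, not that $V_i$ is similar to any $V_{i,q}$. So the factorization $C_{Q_i}=M(V_i)\cdot D_i$ is not justified as written. The fix is to compose through a fixed index $q_m\in Q_i$: apply Fact~\ref{fac-scaling1} to the trivial matching $\tau_{i,q}\tau_{i,q_m}^{-1}$ between $V_{i,q_m}$ and $V_{i,q}$, yielding $C_{Q_i}=\frac{M(V_{i,q_m})}{sc(\tau_{i,q_m})}\cdot D_i$. Your reduction ``it suffices that $D_i=0$'' survives.

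The real gap is in showing $D_i=0$. The forms of $D_i$ lie in $sp(S_i\cup I_i)\setminus sp(I_i)$, which is \emph{not} contained in any space orthogonal to $I_i$ (e.g.\ with $I_i=(x)$, $S_i=\{y\}$, the form $x+y$ lies there). So Fact~\ref{fac-orth-ideals} does not apply to $D_i$ as is, and ``iterative application'' or ``orthogonal decomposition of the forms'' does not obviously repair this: the forms themselves may genuinely mix $I_i$ with $S_i$. Your proposed machinery (Facts~\ref{fac-unscramble}, \ref{fac-compose}, \ref{fac-inv-and-union}) manipulates matchings and scaling factors, but it does not control where the \emph{forms} of $D_i$ live.

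The paper's resolution is to pass from $D_i$ to $sim(C_{Q_i})$, which is a scalar multiple of $sim(D_i)$ (this is the content of Claim~\ref{clm-sim-cqi}, and it does use the scaling-factor bookkeeping you mention). The crucial point is that $sim(C_{Q_i})$ is \emph{intrinsically defined} from $C$ and $Q_i$ alone, independent of $I_i$. This lets the paper analyze $L(sim(C_{Q_i}))$ purely through the earlier $mdata_j$'s in the tree: Claim~\ref{clm-subset} proves, by induction up the tree and using Facts~\ref{fac-sim-2} and~\ref{fac-sim-3} (closure of the span condition under taking sub- and super-sets of $Q$ that overlap), that for any subset $X$ of the leaves of a tree rooted at $mdata_{i'}$ one has $L(sim(C_X))\subset sp(S_{i'}\cup I_{i'})$. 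Since the internal $mdata_i$ has $Q_i$ contained in such a tree with $i'<i$, this places $L(sim(C_{Q_i}))$ inside $sp(S_{i'}\cup I_{i'})\subset sp(S_i)$, which \emph{is} orthogonal to $I_i$, and now Fact~\ref{fac-orth-ideals} applies directly. The tools you did not mention, Facts~\ref{fac-sim-1}--\ref{fac-sim-3}, are exactly what carries the argument.
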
 

This lemma with the previous fact immediately imply that $\#N_3 \leq (k-1)$.
We now set the stage to prove this lemma.
Take any Type $3$ $mdata_i$. By the triviality of $mdata_i$, the lists in  
$\{V_{i,q}\mid q\in Q_i\}$ are mutually similar. By the useful-ness of 
$I_i$ the lists in $\{L(T_q)\setminus$ $V_{i,q}\mid q\in Q_i\}$ have all 
their forms in $sp(S_i\cup I_i)\setminus sp(I_i)$.
Furthermore,
$D_i:=\sum_{q\in Q_i}$ $sc(\tau_{i,q})\alpha_q$ $M(L(T_q)\setminus V_{i,q})$ 
is a regular identity modulo $I_i$. 
Our aim is to remove the forms in $D_i$ which are common factors
(\emph{not} mod $I_i$, but mod $0$).
This gives us a new circuit (quite naturally, that will
turn out to be $sim(C_{Q_i})$) that is still an identity $(\mod I_i)$. 
In other words, start with the subcircuit $C_{Q_i}$,
and remove all common factors from this subcircuit.
This is expected to be both $sim(C_{Q_i})$ and an identity mod $I_i$.

Using this we will actually show that if $mdata_i$ is internal then 
$sim(C_{Q_i})$ is an identity $(\mod 0)$. Then we can multiply
the common factors back, and $C_{Q_i}$ would
be an absolute identity (violating minimality of $C$). 
We proceed to show this rigorously. We have
to carefully deal with field constants to ensure
that $sim(C_{Q_i})$ is indeed a factor of $D_i$.

\begin{claim} \label{clm-sim-cqi} For
Type 3 $mdata_i$, the circuit $sim(C_{Q_i})$
is an identity $\mod I_i$ and has all its forms
in $sp(S_i \cup I_i)$.
\end{claim}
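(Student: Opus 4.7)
The plan is to establish the two conclusions separately. First I address the span property, showing $L(sim(C_{Q_i}))\subseteq sp(S_i\cup I_i)$. Since $mdata_i$ is Type~3, the lists $V_{i,q}$ ($q\in Q_i$) are mutually similar, so fixing a reference $q_0\in Q_i$ we may write $M(V_{i,q})=\beta_q\,M(V_{i,q_0})$ for scalars $\beta_q\in\FF^*$. In particular, $M(V_{i,q_0})$ divides $T_q$ for every $q\in Q_i$, hence $M(V_{i,q_0})\mid gcd(C_{Q_i})=M(U)$. Choosing $U$ as a sublist of $L(T_{q_0})$ (as the definition permits), unique factorization lets us pick the representative sublists so that $V_{i,q_0}\subseteq U$ and more generally $V_{i,q}\subseteq U_q$ inside $L(T_q)$. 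Then $L(T_q)\setminus U_q$ is a sublist of $L(T_q)\setminus V_{i,q}$, and the useful-ness of $I_i$ places every form of the latter into $sp(S_i\cup I_i)\setminus sp(I_i)$; this yields the first assertion.

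For the identity part, rewrite $C_{Q_i}=M(V_{i,q_0})\cdot D'_i$, where $D'_i:=\sum_{q\in Q_i}\alpha_q\beta_q\,M(L(T_q)\setminus V_{i,q})$. The composite $\tau_{i,q}\tau_{i,q_0}^{-1}$ is an ordered $I_i$-matching from $V_{i,q_0}$ to $V_{i,q}$ by Facts~\ref{fac-inv-and-union} and~\ref{fac-compose}; it is trivial since $V_{i,q_0},V_{i,q}$ are similar, so Fact~\ref{fac-scaling1} gives $\beta_q=sc(\tau_{i,q})/sc(\tau_{i,q_0})$. Consequently $D'_i$ is the scalar multiple $sc(\tau_{i,q_0})^{-1}\sum_{q\in Q_i}sc(\tau_{i,q})\alpha_q\,M(L(T_q)\setminus V_{i,q})$ of the circuit declared an identity modulo $I_i$ by the useful-ness of $I_i$, so $D'_i\equiv 0\ (\mod I_i)$.

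Finally, applying Fact~\ref{fac-simple} with $I=\{0\}$ gives $C_{Q_i}=M(U)\cdot sim(C_{Q_i})$; comparing with the factorization above yields $P\cdot sim(C_{Q_i})=D'_i$, where $P:=M(U)/M(V_{i,q_0})$ is a product of forms from $U\setminus V_{i,q_0}$. These forms lie in $L(T_{q_0})\setminus V_{i,q_0}$, hence in $sp(S_i\cup I_i)\setminus sp(I_i)$ by useful-ness, so $P\not\equiv 0\ (\mod I_i)$. Since $R/(I_i)$ is an integral domain (a polynomial ring over $\FF$, because $I_i$ is generated by linearly independent forms), cancellation gives $sim(C_{Q_i})\equiv 0\ (\mod I_i)$. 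The main subtlety to verify carefully will be the up-to-similarity bookkeeping that lets us take $V_{i,q_0}\subseteq U$ and $V_{i,q}\subseteq U_q$, and the matching of the constants $\beta_q$ to the scaling factors $sc(\tau_{i,q})$; everything else reduces to the already-established Facts~\ref{fac-simple}, \ref{fac-scaling1}, \ref{fac-inv-and-union}, and~\ref{fac-compose}.
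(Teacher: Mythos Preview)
Your argument is correct and follows essentially the same strategy as the paper: both proofs exploit that, for Type~3 data, the mutually similar lists $V_{i,q}$ let one relate the constants $\beta_q$ to the scaling factors via Fact~\ref{fac-scaling1} (applied to the trivial ordered $I_i$-matching $\tau_{i,q}\tau_{i,q_0}^{-1}$), and then identify the useful-ness circuit with a multiple of $sim(C_{Q_i})$.

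The one genuine organizational difference is in the last step. The paper first takes the absolute gcd data of the useful-ness circuit $D_i$, stitches it with the $V_{i,q}$'s to obtain lists $V'_{i,q}$, and then argues that this stitched data \emph{is} a gcd data of $C_{Q_i}$, so that $sim(C_{Q_i})$ equals a scalar multiple of the resulting $D'_i$ on the nose. You instead factor $C_{Q_i}=M(V_{i,q_0})\cdot D'_i=M(U)\cdot sim(C_{Q_i})$, obtain $P\cdot sim(C_{Q_i})=D'_i$ with $P=M(U\setminus V_{i,q_0})$, and then cancel $P$ modulo $I_i$ using that $R/(I_i)$ is an integral domain and that the forms of $P$ lie in $sp(S_i\cup I_i)\setminus sp(I_i)$. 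Your route is slightly slicker in that it sidesteps the explicit stitching, at the cost of the extra (easy) observation that $P\not\equiv 0\pmod{I_i}$; the paper's route is more constructive, pinning down $sim(C_{Q_i})$ exactly rather than only up to a nonzero factor mod $I_i$. The bookkeeping you flag about arranging $V_{i,q_0}\subseteq U$ and $V_{i,q}\subseteq U_q$ is indeed justified by the freedom (explicitly noted in the paper) in choosing the representative sublists in the gcd data, and does not affect whether $sim(C_{Q_i})$ vanishes mod $I_i$ or the span of its forms.
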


\begin{proof} Let the gcd data of $D_i$ be: 
$$\vec{gcd}(D_i):=\left\{(\pi_{i,q},U_i,U_{i,q})\mid q\in Q_i\right\}$$
where $U_{i,q}$ is a sublist of $L(T_q)\setminus V_{i,q}$ and 
$\pi_{i,q}$ is an ordered matching between $U_i, U_{i,q}$ by $\{0\}$. 
Note that this is \emph{not} $\mod I_i$, even though
$D_i$ is an identity only $\mod I_i$.

By Facts \ref{fac-inv-and-union} and \ref{fac-simple} we 
can `stitch' $U$'s and $V$'s to get:
\begin{itemize}
\item $\tau_{i,q}^\prime:=$ $\tau_{i,q}\sqcup\pi_{i,q}$ is an ordered 
matching between $V_i^\prime:=V_i\cup U_i$, 
$V_{i,q}^\prime:=V_{i,q}\cup U_{i,q}$ by $I_i$. 
\item $D_i^\prime:=$ $\sum_{q\in Q_i}$ $sc(\tau_{i,q}^\prime)\alpha_q$ 
$M(L(T_q)\setminus V_{i,q}^\prime)$, is a regular identity modulo $I_i$. 
\end{itemize} 
Let $q_m$ be the minimum element in $Q_i$. We have that 
$\tau_{i,q}^\prime\tau_{i,q_m}^{\prime-1}$ is an ordered $I_i$-matching 
between the similar lists $V_{i,q_m}^\prime, V_{i,q}^\prime$. By
Fact \ref{fac-scaling1}, we can construct an ordered matching $
\mu_{i,q}$ between $V_{i,q_m}^\prime$, $V_{i,q}^\prime$ by $\{0\}$, with 
scaling factor equal to $sc(\tau_{i,q}^\prime$ 
$\tau_{i,q_m}^{\prime-1})=$ $sc(\tau_{i,q}^\prime)/sc(\tau_{i,q_m}^{\prime})$.

The way $D_i^\prime$ is constructed it is clear that $D_i^\prime$ is a 
simple circuit. This combined with the similarity of $V_{i,q_m}^\prime$, 
$V_{i,q}^\prime$ under $\mu_{i,q}$ implies that the following set of $\#Q_i$ 
matchings:
$$\left\{(\mu_{i,q},V_{i,q_m}^\prime,V_{i,q}^\prime)\mid 
q\in Q_i\right\}$$
is a gcd data of $C_{Q_i}$ modulo $(0)$ and the corresponding simple part 
is:
\begin{align*}
sim(C_{Q_i}) &= \sum_{q\in Q_i} sc(\mu_{i,q})\alpha_q
M(L(T_q)\setminus V_{i,q}^\prime)\\
&= \sum_{q\in Q_i} \frac{sc(\tau_{i,q}^\prime)}{sc(\tau_{i,q_m}^\prime)}
\alpha_q M(L(T_q)\setminus V_{i,q}^\prime)\\
&= \frac{1}{sc(\tau_{i,q_m}^\prime)}\cdot D_i^\prime
\end{align*}

Thus, $sim(C_{Q_i})$ is a regular identity mod $I_i$ as well. Also, by the useful-ness
of $I_i$, $sim(C_{Q_i})$ has all its forms in $sp(S_i \cup I_i)$. This completes the proof.
\end{proof}

We now use the structure of $\cal F$ to show relationships
between the various connected terms.

\begin{claim} \label{clm-subset}
At some stage, let $mdata_i$ be a root node of $\cal F$.
Let $X$ be a subset of the leaves of $mdata_i$. Then $L(sim(C_X))$ is
a subset of $sp(S_i \cup I_i)$.
\end{claim}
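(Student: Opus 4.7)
I would prove Claim \ref{clm-subset} by strong induction on $i$. The key structural input is that when $mdata_i$ was added to $\cal F$, its children were precisely the (then-existing) trees of $\cal F$ containing at least one $T_q$ with $q \in Q_i$ (possibly including singleton nodes $T_q$ themselves); in particular, every child subtree of $mdata_i$ shares a leaf-index with $Q_i$. The overall strategy is to prove the slightly stronger statement $L(sim(C_{X \cup Q_i})) \subseteq sp(S_i \cup I_i)$, from which the claim follows by Fact \ref{fac-sim-2}. The base information is Claim \ref{clm-sim-cqi}, which gives $L(sim(C_{Q_i})) \subseteq sp(S_i \cup I_i)$.

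Let the children of $mdata_i$ be the subtrees $C_1, \ldots, C_t$ with leaf-sets $Y_1, \ldots, Y_t$, and for each $s$ set $X_s := X \cap Y_s$ and $Q_s := X_s \cup (Y_s \cap Q_i) \subseteq Y_s$. If $C_s$ is a singleton $T_q$, then $Q_s \subseteq \{q\} \subseteq Q_i$ is already handled. Otherwise $C_s$ has root $mdata_{i_s}$ with $i_s < i$, and applying the inductive hypothesis to $mdata_{i_s}$ with the subset $Q_s \subseteq Y_s$ yields $L(sim(C_{Q_s})) \subseteq sp(S_{i_s} \cup I_{i_s})$. Since $S_j = \bigcup_{j' < j} I_{j'}$, we have $S_{i_s} \cup I_{i_s} \subseteq S_i$, so $sp(S_{i_s} \cup I_{i_s}) \subseteq sp(S_i) \subseteq sp(S_i \cup I_i)$. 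Now I iteratively apply Fact \ref{fac-sim-3} to the family $\{Q_i, Q_1, \ldots, Q_t\}$: each $Q_s$ meets $Q_i$ in the nonempty set $Y_s \cap Q_i$, so at every step the hypothesis of Fact \ref{fac-sim-3} is satisfied. The end result is $L(sim(C_{Q_i \cup \bigcup_s Q_s})) \subseteq sp(S_i \cup I_i)$, and a direct set computation gives $Q_i \cup \bigcup_s Q_s = Q_i \cup X$, so Fact \ref{fac-sim-2} finishes the proof.

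The main obstacle is really the bookkeeping---verifying that every child subtree meets $Q_i$ (so that the chain of Fact \ref{fac-sim-3} applications can be carried out) and that the nested containments $sp(S_{i_s} \cup I_{i_s}) \subseteq sp(S_i)$ hold. Both are immediate consequences of the way $\cal F$ is built and of $S_i = S_{i-1} \cup I_{i-1}$. Once they are laid out, the argument is just a combination of Claim \ref{clm-sim-cqi} with the elementary properties of $sim(\cdot)$ packaged in Facts \ref{fac-sim-2} and \ref{fac-sim-3}.
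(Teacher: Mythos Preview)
Your proof is correct and follows essentially the same approach as the paper: induction over the external Type~3 nodes, using Claim~\ref{clm-sim-cqi} as the seed, then iterating Fact~\ref{fac-sim-3} over the child subtrees (each of which meets $Q_i$), and finishing with Fact~\ref{fac-sim-2}. The only cosmetic differences are that the paper applies the induction hypothesis to the full leaf sets $Y_r$ of the children and then restricts to $X$ at the very end, whereas you work with the subsets $Q_s = X_s \cup (Y_s\cap Q_i)$ throughout; you also handle the singleton-child case explicitly, which the paper folds into its base case.
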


\begin{proof} Let the indices of all the external
Type 3 $mdata$ be (in order) $i_1, i_2,\cdots$.
We prove the claim by induction on the order
in which $\cal F$ is processed.
For the base case, let $i := i_1$.
Consider $\cal F$ just after $mdata_{i}$
is added. The leaves of $mdata_{i}$ are all in $Q_{i}$.
By Claim~\ref{clm-sim-cqi}, $L(sim(C_{Q_i})) \subset sp(S_i \cup I_i)$.
Any $X$ is a subset of $Q_{i}$. By Fact~\ref{fac-sim-2},
$L(sim(C_X)) \subset sp(S_i \cup I_i)$.

For the induction step, consider an external
$mdata_i$. When this is processed, a series
of trees rooted at $mdata_{j_1}, mdata_{j_2}, \cdots$
will be made children of $mdata_i$. 
Every $j_r$ is less than $i$. Let $Y_{r}$ denote the
leaves of the tree $mdata_{j_r}$. Note
that $Y_{r} \cap Q_i \neq \phi$.
By the induction hypothesis, $L(sim(C_{Y_{r}}))$
is a subset of $sp(S_{j_r} \cup I_{j_r})$ ($\subset sp(S_i \cup I_i))$.
Let $Z_1$ be $Q_i \cup Y_1$.
By Fact~\ref{fac-sim-3} applied to $sim(C_{Y_1})$ and
$sim(C_{Q_i})$, we have that $L(sim(C_{Z_1}))$ is
in $sp(S_i \cup I_i))$. Let $Z_2$ be $Z_1 \cup Y_2$.
We can apply the same argument to show that $L(sim(C_{Z_2}))$
is in $sp(S_i \cup I_i))$. With repeated applications,
we get that for $Z = \bigcup_r Y_r$, $L(sim(C_{Z})) \subset sp(S_i \cup I_i))$.
Note that $Z$ is the set of all leaves of the tree
rooted at $mdata_i$. By Fact~\ref{fac-sim-2},
$L(C_X) \subset sp(S_i \cup I_i)$,
completing the proof.
\end{proof}
We are finally armed with all the tools to prove Lemma~\ref{lem-internal}.
\begin{proof} (of Lemma~\ref{lem-internal}) Consider some
internal $mdata_i$. All the elements of $Q_i$ are leaves
in the tree rooted at some $mdata_j$, for $j < i$.
By Claim~\ref{clm-subset}, $L(sim(C_{Q_i})) \subset sp(S_j \cup I_j)$.
But by Claim~\ref{clm-sim-cqi}, $sim(C_{Q_i}) \equiv 0 \ (\mod I_i)$.
Since $I_i$ is orthogonal to $sp(S_j \cup I_j)$, Fact~\ref{fac-orth-ideals}
tells us that $sim(C_{Q_i})$ is an identity (mod $0$). Therefore, $C_{Q_i}$
is an identity.
\end{proof}

\subsection{Factors of a $\sps(k,d)$ Circuit: Proof of Theorem \ref{thm-factors}}

The ideal matching technique is quite robust and can be
used to prove Theorem~\ref{thm-factors}. Let $C$ 
be a simple, minimal, nonzero circuit with top fanin $k$ and
degree $d$ (so the different terms may have different degrees)
that computes a polynomial $p(x_1,\cdots,x_n)$. 
We remind the reader of the definition of $L(p)$.
Let us factorize
$p$ into $\prod_i q_i$, where each $q_i$ is irreducible.
Then $L(p)$ denotes the set of \emph{linear factors}
of $p$ (that is, $q_i \in L(p)$ if $q_i$ is linear).

For any $q \in L(p)$, $C \equiv 0 \ (\mod q)$, therefore
we can generate a form-ideal useful in $C$ involving $q$. Using these
we can create a chain of form-ideals whose span contains $L(p)$, and all
our counting lemmas for the matchings of types $1,2,3$ will follow. As a result,
we get a bound of $(k^3\log d)$ on the rank of $L(p)$.

\section{Concluding Remarks}

It would be very interesting to leverage the matching technique to design
identity testing algorithms. By unique factorization, matchings can be easily detected in 
polynomial time,
and it is also not hard to search for $I$-matchings involving
a specific set of forms in $I$.
We prove that depth-$3$ identities exhibit structural properties
described by the ideal matchings. Can we reverse these theorems?
In other words, can we show that certain collections of matchings are present
iff $C$ is an identity? 
This would lead to a polynomial time identity
tester for \emph{all} depth-$3$ circuits.

\comment{The algorithm of~\cite{ks07} uses a recursive procedure
based on Chinese Remaindering for identity testing.
Unfortunately, the recursion tree blows up very fast
and leads to a running time exponential in $k$. 
These matching techniques might 
allow us to design an identity tester that
can somehow overcome this barrier.}

There is still a gap between our upper bound for the rank of $O(k^3\log d)$
and the lower bound of $\Omega(k\log d)$. We feel that
$k\log d$ is the right answer and a more careful analysis of the matchings
could prove this. More interestingly, it is conjectured that when the
characteristic of the base field is $0$, the rank
is $O(k)$, \emph{independent} of $d$. We believe that
an adapation of our matching techniques to characteristic $0$ fields
could lead to such a bound.

\bibliographystyle{alpha}
\bibliography{rank-bound}


\end{document}